\newcommand{\subscript}[2]{$#1 _ #2$}
\numberwithin{equation}{section}
\newtheorem{theorem}{Theorem}[section]
\newtheorem{proposition}[theorem]{Proposition}
\newtheorem{lemma}[theorem]{Lemma}
\theoremstyle{definition}
\newtheorem{example}[theorem]{Example}
\newtheorem{remark}[theorem]{Remark}
\newcommand{\st}{\text{st}}
\newcommand{\sta}{\text{st}}
\newcommand{\scr}{\mathcal}
\newcommand{\mb}{\mathbb}
\newcommand{\til}{\widetilde}
\newcommand{\eps}{\varepsilon}
\newcommand{\val}{\text{val}}
\newcommand{\OPT}{\text{OPT}}
\newcommand{\nOPT}{\OPT_{\text{non}}}
\newcommand{\rOPT}{\OPT_{\text{rel}}}
\newcommand{\LPOPT}{\text{LPOPT}}
\newcommand{\dLPOPT}{\LPOPT_{\text{DP}}}
\newcommand{\nLPOPT}{\LPOPT_{\text{DP-non}}}
\newcommand{\cLPOPT}{\LPOPT_{\text{conf}}}
\begin{document}

\title{Greedy Approaches to Online Stochastic Matching}

\author{Allan Borodin
\thanks{Department of Computer Science, University of Toronto, Toronto, ON, Canada
\texttt{bor@cs.toronto.edu}}
\and
Calum MacRury
\thanks{Department of Computer Science, University of Toronto, Toronto, ON, Canada
\texttt{cmacrury@cs.toronto.edu}}
\and
Akash Rakheja
\thanks{Department of Computer Science, University of Toronto, Toronto, ON, Canada
\texttt{rakhejaakash@gmail.com}}
}

\date{}
\maketitle

\begin{abstract}

Within the context of stochastic probing with commitment, we consider
the online stochastic matching problem; that is, the one-sided online bipartite
matching problem where edges adjacent to an online node must be probed to
determine if they exist based on edge probabilities that become known when an online vertex arrives.  If a probed edge
exists, it must be used in the matching (if possible). We consider the
competitiveness of online algorithms in both the adversarial order model (AOM)
and the random order model (ROM). More specifically, we consider a bipartite stochastic graph  
$G = (U,V,E)$ where $U$ is the set of offline vertices, $V$ is the set of online vertices and $G$
has edge probabilities $(p_{e})_{e \in E}$ and edge weights $(w_{e})_{e \in E}$.
Additionally, $G$ has probing constraints $(\scr{C}_{v})_{v \in V}$,  where $\scr{C}_v$
indicates which sequences of edges adjacent to an online vertex $v$ can be probed. We assume that $U$ is
known in advance, and that $\scr{C}_v$, together with the
edge probabilities and weights adjacent to an online vertex  are only revealed  when the online vertex arrives. 
This model generalizes the various settings of the classical bipartite matching problem,
and so our main contribution is in making progress towards understanding which classical results extend to the stochastic
probing model.

\end{abstract}

\section{Introduction}

Stochastic probing problems are part of the larger area of decision making under uncertainty and more specifically, stochastic optimization. Unlike more standard forms of stochastic optimization, it is not just that there is some stochastic uncertainty in the set of inputs, stochastic probing problems involve inputs that cannot be determined without probing (at some cost and/or within some constraint) so as to reveal the inputs. Applications of stochastic probing occur naturally in many settings, such as in matching problems where compatibility (for example, in online dating and kidney exchange applications) or legality (for example, a financial transaction that must be authorized before it can be completed) cannot be determined without some trial or investigation.       
Amongst other applications, the online bipartite stochastic matching problem notably models online advertising where the probability of an edge can correspond to the probability of a  purchase in online stores or to pay-per-click revenue in online searching.

The {\it(offline) stochastic matching} problem was introduced by Chen et al.~\cite{Chen}. In this problem, we are given an adversarially generated {\it stochastic graph} $G = (V, E)$ with a probability $p_e$ associated with each edge $e$ and a patience (or time-out) parameter $\ell_v$ associated with each vertex $v$. An algorithm probes edges in $E$ within the constraint that at most $\ell_v$ edges are probed incident to any particular vertex $v \in V$. Also, when an edge $e$ is probed, it is guaranteed to exist with probability exactly $p_e$. If an edge $(u,v)$ is found to exist, it is added to the matching and then $u$ and $v$ are no longer available.
The goal is to maximize the expected size of a matching constructed in this way. This problem can be generalized to vertices or edges having weights. We shall refer to this setting as the {\it known stochastic graph} setting.

Mehta and Panigrahi \cite{MehtaP12} adapted the offline stochastic matching model to online bipartite matching as originally studied in the classical (non-stochastic) adversarial order online model. That is, they consider the setting where the stochastic graph is unknown and online vertices are determined by an adversary.  More specifically, they studied the problem in the case of an unweighted stochastic graph $G = (U, V, E)$ where $U$ is the set of offline vertices and the vertices  in $V$ arrive online without knowledge of future online node arrivals. They considered the special case of uniform edge probabilities (i.e, $p_{e} =p$ for all $e \in E$) and \textit{unit} patience values, that is $\ell_v =1$ for all $v \in V$. 
Mehta et al. \cite{Mehta2015} considered the unweighted online stochastic bipartite setting with  arbitrary edge probabilities,
attaining a competitive ratio of $0.534$, and recently, Huang and Zhang \cite{huang2020online} additionally handled the case of arbitrary offline vertex weights, while improving this ratio to $0.572$.
However, as in \cite{MehtaP12}, both \cite{Mehta2015} and  \cite{huang2020online} are restricted to unit patience values, and moreover require edge probabilities which are \textit{vanishingly small}\footnote{Vanishingly small edge probabilities must satisfy $\max_{e \in E} p_{e} \rightarrow 0$,
where the asymptotics are with respect to the size of $G$.}. Goyal and Udwani \cite{Goyal2020OnlineMW} improved on both of these works by showing a $0.596$ competitive ratio in the same setting.

%They
%also showed that for special case of \textit{vertex-decomposable edge probabilities}\footnote{Edge decomposable means that there exists probabilities $(p_{u})_{u \in U}$ and $(p_{v})_{v \in V}$, such that $p_{(u,v)}=p_{u} \cdot p_{v}$ for each $(u,v) \in E$.}, a tight competitive ratio of $1-1/e$ is attainable.   

%Amongst other applications, the online stochastic matching problem notably models online advertising where the probability of an edge can correspond to the probability of a  purchase in online stores or to pay per click revenue in online searching. 
In all our results we will assume {\it commitment}; that is, when an edge is probed and found to exist, it must be included in the matching (if possible without violating the matching constraint). The patience constraint can be viewed as a simple form of a budget constraint for the online vertices. For some, but not all of our settings, we generalize patience and budget constraints by associating a set $\scr{C}_v$ of probing sequences for each online node $v$ where $\scr{C}_v$
indicates which sequences of edges adjacent to  vertex $v$ can be probed.

For random order or adversarial order of online vertex arrivals, results for the online Mehta and Panigraphi model (even for unit patience) generalize the  corresponding classical  non-stochastic models where edges adjacent to an online node are known upon arrival and do not need to be probed. It follows that any in-approximations in  the classical setting apply to the corresponding  stochastic setting. Further generalizing the classical settings, when the stochastic graph is unknown, a competitive ratio for the random order model implies that the same ratio is obtained in the stochastic i.i.d. model (for an unknown distribution) as proven in the classical setting by Karande et al. \cite{KarandeMT11}.

In 
a related paper 
\cite{borodin2021prophet}, we consider the setting when the stochastic graph is unknown,
but there is a known stochastic type graph from which arrivals are drawn independently, thereby
generalizing the i.i.d. model introduced by Bansal et al. \cite{BansalGLMNR12}. 
%In fact, 
Our generalization in \cite{borodin2021prophet} allows for independent (but not necessarily identical) distributions; that is, we have an i.d. model.

\subsection{Preliminaries}

An input to the \textbf{(online) stochastic matching problem} is a \textbf{(bipartite) stochastic graph}, specified in the following way. Let $G=(U,V,E)$ be a bipartite graph with edge weights $(w_{e})_{e \in E}$ and edge probabilities $(p_{e})_{e \in E}$.
We draw an independent Bernoulli random variable of parameter $p_{e}$
for each $e \in E$. We refer to this Bernoulli as the \textbf{state} of the edge $e$, and 
denote it by $\sta(e)$. If $\sta(e)=1$, then we say that $e$ is \textbf{active}, and otherwise we say that
$e$ is \textbf{inactive}. For each $v \in V$, suppose that $\partial(v)^{(*)}$ corresponds to the collection of strings (tuples) formed from \textit{distinct} edges of $\partial(v)$. Each $v \in V$ has a set of  \textbf{online probing constraint} $\scr{C}_v \subseteq \partial(v)^{(*)}$ which is \textbf{substring-closed}.
That is, $\scr{C}_v$ has the property that if $\bm{e} \in \scr{C}_v$,
then so is any substring of $\bm{e}$. 
Similarly, we say that $\scr{C}_v$ is \textbf{permutation-closed}, provided if $\bm{e} \in \scr{C}_v$,
then so is any permutation string of $\bm{e}$. Observe that if $\scr{C}_v$
is both substring-closed and permutation-closed, then it corresponds to a downward-closed
family of subsets of $\partial(v)$. Thus, in particular, our setting encodes the case when $v$ has a patience value $\ell_v$, and more generally, when $\scr{C}_v$ corresponds to a matroid or budgetary constraint on $\partial(v)$. 
Note that we will often assume $w.l.o.g.$ that $E=U \times V$, as we can always
set $p_{u,v}:= 0$.
%, provided we wish to exclude $(u,v) \in U \times V$ from being active.

A solution to 
%the 
online stochastic matching 
%problem 
is an \textbf{online probing algorithm}. An online probing algorithm
is initially only aware of the identity of the offline vertices $U$ of $G$. We think of $V$, as well as the relevant edges probabilities, weights,
and probing constraints, as being generated by an adversary. An ordering on $V$ is then generated either through an adversarial process or uniformly at random. We refer to the
former case as the \textbf{adversarial order model (AOM)} and the latter case as the \textbf{random order model (ROM)}.

Based on whichever ordering is generated on $V$,
the nodes are then presented to the online probing algorithm one by one.
When an online node $v \in V$ arrives, the online probing algorithm sees all the adjacent edges and their associated probabilities,
as well as $\scr{C}_v$. However, the edge states $(\sta(e))_{e \in \partial(v)}$
remain hidden to the algorithm. Instead, the algorithm must perform a \textbf{probing operation} on an adjacent edge $e$ to reveal/expose its state, $\sta(e)$. Moreover, the online probing algorithm must \textbf{respect commitment}.
That is, if an edge $e = (u,v)$ is probed and turns out to be active, then $e$ must be added to the current matching, provided $u$ and $v$ are both currently unmatched. The probing constraint $\scr{C}_v$ of the online node then restricts
which sequences of probes can be made to $\partial(v)$. As in the classical problem, an online probing algorithm must decide on a possible match for an online node $v$ before seeing the next online node. The goal of the online probing algorithm 
is to return a matching whose expected weight is as large as possible. Since $\scr{C}_v$ may be exponentially large in the size of $U$, in order to
discuss the efficiency of an online probing algorithm,
we work in the \textbf{membership query model}.
That is, upon receiving the online vertex $v \in V$,
an online probing algorithm may make a \textbf{membership query} to any string $\bm{e} \in \partial(v)^{(*)}$,
thus determining in a single operation whether 
or not $\bm{e} \in \partial(v)^{(*)}$ is in $\scr{C}_v$.

It is easy to see we cannot hope to obtain a non-trivial
competitive ratio against the expected value of an optimum matching of the stochastic graph\footnote{Consider a single online vertex with patience $1$, and $n$ offline (unweighted) vertices where each edge $e$ has probability $\frac{1}{n}$ of being present. The expectation of an online probing algorithm will be at most $\frac{1}{n}$ while the expected size of an optimal matching (over all instantiations of the edge probabilities) will be $1 - (1-\frac{1}{n})^n \rightarrow 1 -\frac{1}{e}$. This example clearly shows that no constant ratio is possible if the patience is sublinear (in $n = |U|$).}.  
Thus, the standard in the literature is to instead benchmark the performance of an online
probing algorithm against an \textit{optimum offline probing algorithm}. An \textbf{offline probing algorithm}
knows $G=(U,V,E)$, but initially the edge states $(\sta(e))_{e \in E}$ are hidden. It can adaptively probe the edges of $E$ in any order, but must satisfy the probing constraints $(\scr{C}_v)_{v \in V}$ at each step of its execution\footnote{Edges $\bm{e} \in E^{(*)}$ may be probed in order, provided $\bm{e}^{v} \in \scr{C}_v$
for each $v \in V$, where $\bm{e}^{v}$ is the substring of $\bm{e}$ restricted to edges of $\partial(v)$.},
while respecting commitment; that is, if a probed edge $e=(u,v)$ turns out to be active,
then $e$ is added to the matching (if possible). The goal of an offline probing
algorithm is to construct a matching with optimum weight in expectation. We define
the \textbf{committal benchmark} as an optimum offline probing algorithm, and use
$\OPT(G)$ to denote the expected value of the matching the committal benchmark
constructs. We abuse notation slightly, and also use $\OPT(G)$ to refer to the \textit{strategy}
of the committal benchmark on $G$. In Appendix \ref{sec:non_committal_benchmark},
we introduce the stronger \textbf{non-committal benchmark},
and indicate which of our results hold against it.

\subsection{Our Results}

We first consider the unknown stochastic matching problem in
the most general setting of arbitrary edge weights, and substring-closed probing constraints.
Note that since no non-trivial competitive ratio can be proven in the case of adversarial
arrivals, we work in the ROM setting. We generalize the   
matching algorithm of Kesselheim et al. \cite{KRTV2013} so as to apply to the stochastic probing setting.

\begin{theorem}\label{thm:ROM_edge_weights}
Suppose the adversary presents an edge-weighted stochastic graph $G=(U,V,E)$, 
with substring-closed probing constraints $(\scr{C}_v)_{v \in V}$. If $\scr{M}$
is the matching returned by Algorithm \ref{alg:ROM_edge_weights} when executing on $G$,
then 
\[
\mb{E}[w(\scr{M})] \ge \left(\frac{1}{e} - \frac{1}{|V|} \right) \cdot \OPT(G),
\]
provided the vertices of $V$ arrive uniformly at random ($u.a.r.)$. If the constraints $(\scr{C}_v)_{v \in V}$
are also permutation-closed, then Algorithm \ref{alg:ROM_edge_weights}
can be implemented efficiently in the membership oracle model.

\end{theorem}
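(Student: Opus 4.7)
The plan is to adapt the Kesselheim et al.\ algorithm for classical ROM bipartite matching to the stochastic probing setting, by replacing the optimum matching on the revealed subgraph with the offline probing benchmark $\OPT(\cdot)$ on that subgraph. I expect Algorithm~\ref{alg:ROM_edge_weights} to set a threshold $\tau = \lceil |V|/e \rceil$, passively observe the first $\tau - 1$ arrivals, and for each subsequent arrival $v$ at position $t \ge \tau$, form the restricted stochastic subgraph $G_t$ on the first $t$ online vertices, internally simulate $\OPT(G_t)$ on a fresh independent copy of the edge states, read off the probing sequence that this simulation would execute on $v$ together with the intended offline endpoint $u$, and then play that sequence on the real edges adjacent to $v$, skipping any edge whose offline endpoint is already matched in $\scr{M}$ and committing the first real active edge found.

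Two symmetry arguments drive the main competitive analysis. The first is a sampling lemma $\mb{E}[\OPT(G_t)] \ge (t/|V|)\OPT(G)$, which I would prove by constructing a feasible offline probing strategy on $G_t$ that runs $\OPT(G)$ on $G$ and retains only matches whose online endpoint lies in the set $V_t$ of the first $t$ arrivals; since ROM makes $V_t$ a uniform random $t$-subset of $V$, the expected retained weight is exactly $(t/|V|)\OPT(G)$, and this is a valid lower bound because the probing constraints of $G_t$ are inherited from those of $G$. The second is a positional symmetry: conditional on $V_t$, the last-arriving vertex is uniform over $V_t$, so the expected weight of the edge that the simulation of $\OPT(G_t)$ assigns to this vertex is at least $\frac{1}{t}\cdot\frac{t}{|V|}\OPT(G) = \OPT(G)/|V|$.

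The main obstacle will be lower bounding the probability that the intended offline endpoint $u$ is still free at the start of round $t$. I would show, by induction on $t$, that any fixed $u$ is free at the start of round $t$ with probability at least $(\tau - 1)/(t - 1)$, matching the classical Kesselheim bound. The induction step decomposes the probability that $u$ is matched during an earlier round $s$ as (simulation of $\OPT(G_s)$ selects $u$ as the target for the $s$-th vertex) $\times$ (real probe is active) $\times$ ($u$ was free at the start of round $s$), and uses positional symmetry over the identity of the $s$-th vertex to bound the first factor by $1/s$. The delicate new ingredient versus the classical setting is that the simulation of $\OPT(G_s)$ must be kept independent of the real probe outcomes, so that when the simulation targets an edge $(u,v)$ the conditional probability of the real probe being active is exactly $p_{u,v}$, matching the marginals used inside the simulation's accounting. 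I plan to arrange this by using entirely fresh Bernoullis inside each simulation and the true edge states only at real commitment time, sharing only the arrival order and vertex identities between the two processes.

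Summing the per-round contributions gives
\[
\mb{E}[w(\scr{M})] \ge \sum_{t=\tau}^{|V|} \frac{\OPT(G)}{|V|} \cdot \frac{\tau - 1}{t - 1} \ge \left(\frac{1}{e} - \frac{1}{|V|}\right)\OPT(G)
\]
after the standard harmonic-sum estimate with $\tau \approx |V|/e$. For the efficiency claim under the additional permutation-closedness assumption, so that $(\scr{C}_v)_{v \in V}$ is downward-closed, I would replace the exact simulation of $\OPT(G_t)$ by an LP-based near-optimal subroutine: solve a natural polytope relaxation of the per-vertex probing problem and round via a dependent rounding or contention resolution scheme that can be executed using only membership queries to $\scr{C}_v$. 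Any resulting approximation slack can be absorbed into the $-1/|V|$ term already present in the bound.
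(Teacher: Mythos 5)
Your proposal follows the right high-level template (Kesselheim-style subsampling lemma plus an availability bound), but the central quantitative step does not hold as argued. You simulate $\OPT(G_t)$ on entirely fresh Bernoullis and then replay, on the real edges of $\partial(v_t)$, the probe sequence that the simulation happened to execute at $v_t$. In the probing model that simulated sequence is truncated at the first \emph{simulated} active edge (by commitment), and replaying a truncated sequence on independent real states gives strictly less expected value than the benchmark obtains. Concretely: one online vertex with patience $2$, two offline vertices of weight $1$ with $p=1/2$ each. Here $\OPT = 3/4$, but with probability $1/2$ the simulation's first probe is active, the extracted sequence has length one, and the replay earns only $1/2$; the total is $\tfrac12\cdot\tfrac12+\tfrac12\cdot\tfrac34=\tfrac58$. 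This is a constant-factor loss, so your per-round claim that the algorithm's expected gain at round $t$ is at least $\OPT(G)/|V|$ is not established, and the loss cannot be ``absorbed into the $-1/|V|$ term.'' Relatedly, in this model there is no single ``intended offline endpoint $u$'' for $v_t$ --- the matched vertex is determined by which real probe comes up active --- so the factorization driving your availability induction is not well posed. The paper avoids all of this by never simulating $\OPT(G_t)$: it solves \ref{LP:config} for $G_t$, has \textsc{VertexProbe} sample a \emph{full} string $\bm{e}\in\scr{C}_{v_t}$ from the fractional solution and probe it against the real states, so (by Lemma \ref{lem:fixed_vertex_probe}) the expected committed weight at $v_t$ equals the LP contribution exactly, and Theorem \ref{thm:new_LP_relaxation} plus the subsampling lemma give the required $\cLPOPT(G)/n$ per round.

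Two further steps also break. First, your rule of skipping edges whose offline endpoint is already matched makes the probes at $v_t$ depend on the execution history. This destroys the conditional independence, given $(V_t,v_t)$, between the weight of the committed edge and the event $\{u_t\in R_t\}$ --- exactly the decoupling the paper (and Kesselheim et al.) use to multiply the two bounds --- and it biases the surviving probes toward the still-free vertex $u$ under scrutiny, undermining the $1/s$ step of the availability induction, which in the paper is proved from constraint \eqref{eqn:relaxation_efficiency_matching} precisely because the probes at each arrival ignore which offline vertices are taken (the algorithm commits first and only then discards if $u_t\notin R_t$). Second, for efficiency you propose replacing the benchmark by an LP relaxation with contention-resolution-style rounding and absorbing the slack into $-1/|V|$; such rounding loses a constant factor, not $O(1/n)$. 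The paper instead solves \ref{LP:config} exactly (ellipsoid on the dual with \textsc{DP-OPT} as separation oracle, under substring- and permutation-closure), and \textsc{VertexProbe} realizes the LP value with no loss, which is what makes both the ratio and the efficiency claim go through.
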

Upon receiving the online vertices $V_{t}:=\{v_1, \ldots ,v_{t}\}$, in order
to generalize the matching algorithm of Kesselheim et al. \cite{KRTV2013}, Algorithm \ref{alg:ROM_edge_weights}
would ideally probe the edges of $\partial(v_t)$ suggested by $\OPT(G_t)$, where $G_{t}:=G[U \cup V_t]$
is the \textbf{induced stochastic graph}\footnote{Given $L \subseteq U, R \subseteq  V$, the induced stochastic graph 
$G[L \cup R]$ is formed by restricting the edges weights and probabilities of $G$ to those edges within $L \times R$.
Similarly, each probing constraint $\scr{C}_v$ is restricted to those strings whose entries lie entirely
in $L \times \{v\}$.} on $U \cup V_t$. However, since we wish for our algorithms to
be efficient in addition to attaining optimum competitive ratios, this strategy is not feasible. Our solution is to instead solve a configuration LP (\ref{LP:config}) for $G_t$, which was recently introduced by the authors in \cite{borodin2021prophet}, and whose optimum value upper bounds $\OPT(G_t)$. Unlike \ref{LP:standard_definition_general} --
the most prevalent LP used in the stochastic matching literature, originally introduced by Bansal et al. \cite{BansalGLMNR12}
(see Appendix \ref{sec:committal_relaxation})-- \ref{LP:config} allows us to handle general probing constraints, while not overestimating the performance of $\OPT(G_t)$. Our algorithm differs from the classical algorithm of Kesselheim et al. in that it is randomized.

Since Theorem \ref{thm:ROM_edge_weights} achieves the optimum asymptotic competitive ratio
for edge weights, in order to improve this result we next consider the case when the stochastic
graph $G=(U,V,E)$ has \textbf{(offline) vertex weights} -- i.e.,
there exists $(w_u)_{u \in U}$ such that $w_{u,v} = w_{u}$ for each $v \in N(u)$.
We consider a \textit{greedy} online probing algorithm.
That is, upon the arrival of $v$,
the probes to $\partial(v)$
are made in such a way that $v$ gains as much value
as possible (in expectation), provided the currently unmatched nodes of $U$
are equal to $R$. As such, we must follow the probing strategy
of the committal benchmark when restricted to $G[ \{v\} \cup R]$,
which we denote by $\OPT(R,v)$ for convenience.

Observe that if $v$ has unit
patience, then $\OPT(R,v)$  reduces to probing the adjacent edge $(u,v) \in R \times \{v\}$ such that
the value $w_{u} \cdot p_{u,v}$ is maximized. Moreover, 
if $v$ has unlimited patience, then $\OPT(R,v)$ corresponds to probing
the adjacent edges of $R \times \{v\}$ in non-increasing order of the associated vertex weights. 
Building on a result in Purohit et al. \cite{Purohit2019}, Brubach et al. \cite{Brubach2019} showed how to devise an \textit{efficient} probing strategy for $v$ whose expected value matches $\OPT(R,v)$, no matter the patience constraint. Using this probing strategy, they devised an online probing algorithm which achieves a competitive ratio of $1/2$ for arbitrary 
patience values. We extend their result to substring-closed probing constraints.

\begin{theorem}\label{thm:adversarial}
Suppose the adversary presents a vertex weighted stochastic graph $G=(U,V,E)$, 
with substring-closed probing constraints $(\scr{C}_v)_{v \in V}$. If $\scr{M}$
is the matching returned by Algorithm \ref{alg:dynamical_program} when executing on $G$,
then 
\[
\mb{E}[w(\scr{M})] \ge \frac{1}{2} \cdot \OPT(G),
\]
provided the vertices of $V$ arrive in adversarial order. Moreover, Algorithm \ref{alg:dynamical_program}
can be implemented efficiently, provided the constraints $(\scr{C}_v)_{v \in V}$ are also permutation-closed.
\end{theorem}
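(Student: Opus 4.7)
The plan is a charging argument in the spirit of the classical $1/2$-competitive greedy analysis for vertex-weighted online matching, adapted to the stochastic probing setting. The key new ingredient is a monotonicity lemma for the committal benchmark that relies essentially on substring closure of $(\scr{C}_v)_{v \in V}$. To begin, let $R_t \subseteq U$ denote the (random) set of offline vertices still unmatched just before $v_t$ arrives, with $R_1 = U$. Since Algorithm \ref{alg:dynamical_program} executes $\OPT(R_t, v_t)$ upon the arrival of $v_t$, conditional on $R_t$ its expected gain at step $t$ is exactly $\OPT(R_t, v_t)$, and linearity of expectation yields
\[
\mb{E}[w(\scr{M})] = \sum_{t=1}^{|V|} \mb{E}\bigl[\OPT(R_t, v_t)\bigr].
\]

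Next I would prove the following \textbf{monotonicity lemma}: for any feasible offline probing algorithm $\mathcal{A}$ with marginal matching rates $x_{u,v} := \Pr[(u,v) \in \mathcal{A}]$, any online vertex $v$, and any $S \subseteq U$,
\[
\OPT(S, v) \geq \sum_{u \in S} w_u\, x_{u,v}.
\]
The proof is via a coupling argument: construct a randomized single-vertex strategy on $G[\{v\} \cup S]$ that simulates the per-$v$ behavior of $\mathcal{A}$; whenever $\mathcal{A}$ would probe some $u' \in U \setminus S$, flip a private Bernoulli of parameter $p_{u',v}$ and proceed as $\mathcal{A}$ would (in particular, if the private outcome is ``active'' then $\mathcal{A}$ would commit to $u'$, and the simulation terminates without matching anything in $S$). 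Only probes to $S$ are actually performed on the graph, so by substring closure every prefix of the actual probe sequence lies in $\scr{C}_v$. Since the simulation reproduces $\mathcal{A}$'s match distribution at $v$, discarding matches to $U \setminus S$ leaves expected weight $\sum_{u \in S} w_u\, x_{u,v}$.

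Applying the lemma to $\mathcal{A} = \OPT(G)$ with $S = R_t$ and taking expectation over $R_t$,
\[
\mb{E}\bigl[\OPT(R_t, v_t)\bigr] \geq \sum_{u \in U} w_u\, x^*_{u,v_t}\, \Pr[u \in R_t] \geq \sum_{u \in U} w_u\, x^*_{u,v_t}\, (1 - A_u),
\]
where $x^*_{u,v} := \Pr[(u,v) \in \OPT(G)]$ and $A_u := \Pr[u \text{ is matched by Algorithm \ref{alg:dynamical_program}}]$; the second inequality uses that $u \in R_t$ whenever $u$ is never matched. Summing over $t$ and combining $\sum_{t} x^*_{u,v_t} \leq 1$ (since $\OPT(G)$ returns a matching), $\sum_{u} w_u A_u = \mb{E}[w(\scr{M})]$, and $\sum_{u,t} w_u x^*_{u,v_t} = \OPT(G)$, one obtains
\[
\mb{E}[w(\scr{M})] \geq \OPT(G) - \mb{E}[w(\scr{M})],
\]
which rearranges to the claimed $1/2$-competitive bound. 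The main obstacle is the careful setup of the monotonicity lemma: the coupling must respect commitment for the simulated strategy, and substring closure must be verified along every prefix of the restricted probe sequence. Efficiency under the additional permutation-closed assumption then follows by invoking the Brubach et al.\ subroutine (via polynomially many membership queries) to realize $\OPT(R_t, v_t)$ at each online step.
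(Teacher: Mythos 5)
Your proposal is correct in substance and reaches the right bound, but it takes a genuinely different route from the paper. The paper proves Theorem \ref{thm:adversarial} by dual fitting against \ref{LP:DP}: it runs the charging scheme of Section \ref{sec:charging_scheme} with the constant choices $g \equiv 1/2$ and $F = 1/2$, charging $w_u(1-g)/F$ to $\alpha_u$ and $w_u g/(F\cdot \OPT(v,R))$ to $\phi_{v,R}$, proves the adversarial analogue of Lemma \ref{lem:dual_feasibility_constraints_general} (which, unlike the ROM case, needs no rankability), and concludes via weak duality together with $\OPT(G) \le \dLPOPT(G)$. You instead argue directly: you write $\mb{E}[w(\scr{M})] = \sum_t \mb{E}[\OPT(R_t,v_t)]$ (valid because the edges at $v_t$ are unprobed and independent of the history, and \textsc{DP-OPT} attains $\OPT(v_t,R_t)$), lower bound each term by your monotonicity lemma, and rearrange to $\mb{E}[w(\scr{M})] \ge \OPT(G) - \mb{E}[w(\scr{M})]$. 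Your monotonicity lemma is essentially the validity of constraint \eqref{eqn:OPT_constraint} of \ref{LP:DP} for the committal benchmark, i.e.\ the assertion $\OPT(G) \le \dLPOPT(G)$, which the paper states and attributes to \cite{Brubach2019} for patience constraints; so what your route buys is a self-contained, duality-free proof of exactly that ingredient for general substring-closed constraints, while the paper's primal-dual machinery buys a uniform treatment of the adversarial and ROM settings (the latter with $g(z) = e^{z-1}$). One step you should tighten: your coupling implicitly assumes that whenever $\mathcal{A}$ probes an edge at $v$ and it is active, $\mathcal{A}$ matches it. For an arbitrary feasible $\mathcal{A}$ this can fail (a wasteful probe of $(u,v)$ with $u \in S$ already matched elsewhere would force the simulated single-vertex strategy to commit while $\mathcal{A}$ does not, breaking the claim that the simulation reproduces $\mathcal{A}$'s match distribution at $v$). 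Since you only apply the lemma to $\mathcal{A} = \OPT(G)$, add the standard w.l.o.g.\ reduction that the committal benchmark never probes an edge incident to an already-matched vertex: such probes can be replaced by private coin flips without changing the output distribution, and substring closure preserves feasibility of the shortened probe sequences.
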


Since Algorithm \ref{alg:dynamical_program} is deterministic, the $1/2$ competitive ratio
is best possible for deterministic algorithms in the adversarial arrival setting. 
One direction is thus to instead consider what can be done if the online probing
algorithm is allowed randomization. In the case of
unit patience, this is the previously discussed setting of \cite{MehtaP12}, in which Mehta and Panigrahi showed that $.621 < 1-\frac{1}{e}$ is a randomized in-approximation with regard to  guarantees made against \ref{LP:standard_benchmark}, even for the unweighted uniform probability setting. Note that for unit patience, \ref{LP:standard_benchmark} can be viewed as encoding a \textit{relaxed} optimum probing algorithm which need only match the offline nodes once in expectation, and thus
upper bounds/relaxes $\OPT(G)$ (see Appendix \ref{sec:LP_relaxation} for details). This hardness result led Goyal and Udwani \cite{Goyal2020OnlineMW} to consider a new unit patience LP that is a tighter relaxation of $\OPT(G)$ than \ref{LP:standard_benchmark}, thereby allowing them to prove a $1-1/e$ competitive ratio for the case of \textbf{vertex-decomposable}\footnote{Vertex-decomposable means that there exists probabilities $(p_{u})_{u \in U}$ and $(p_{v})_{v \in V}$, such that $p_{(u,v)}=p_{u} \cdot p_{v}$ for each $(u,v) \in E$.} edge probabilities. However, they also discuss the difficulty of extending this result to the case of general edge probabilities. Our next contribution is thus to consider Algorithm \ref{alg:dynamical_program} in the ROM setting with unit patience, where we show
these difficulties do \textit{not} arise. In fact, we show that a $1-1/e$ performance guarantee
is provable against \ref{LP:standard_benchmark}, which shows that the in-approximation of Mehta and Panigraphi
does not apply (even for deterministic probing algorithms).

Upon the arrival of vertex $v$, if $v$ has unit patience,
then Algorithm \ref{alg:dynamical_program} reduces
to probing the edge $e =(u,v) \in \partial(v)$ such that $u \in U$ is currently unmatched,
and for which $w_{u} \cdot p_{u,v}$ is
maximized. Let us refer to this special case of Algorithm \ref{alg:dynamical_program}
as \textsc{GreedyProbe}.
\begin{theorem} \label{thm:ROM_unit_patience}
Suppose the adversary presents a vertex weighted stochastic graph $G=(U,V,E)$, 
with unit patience values. If $\scr{M}$
is the matching returned by \textsc{GreedyProbe} when executing on $G$,
then $\mb{E}[w(\scr{M})] \ge \left(1- \frac{1}{e}\right) \cdot \OPT(G)$,
provided the vertices of $V$ arrive in random order.
\end{theorem}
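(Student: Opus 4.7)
The strategy is to compete against $\sLPOPT(G)$ (see Appendix~\ref{sec:LP_relaxation}), the LP with objective $\sum_{(u,v) \in E} w_u p_{u,v} x_{u,v}$, offline constraints $\sum_v p_{u,v} x_{u,v} \leq 1$, online constraints $\sum_u x_{u,v} \leq 1$, and nonnegativity; its value upper-bounds $\OPT(G)$. Let $x^*$ be an optimal LP solution and $n = |V|$.

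\textbf{Per-step LP domination.} At step $t$, let $v_t$ denote the arriving vertex and $R_{t-1} \subseteq U$ the currently unmatched offline vertices. \textsc{GreedyProbe} probes the edge attaining $\max_{u \in R_{t-1}} w_u p_{u,v_t}$, contributing this value in expectation. Using $\sum_{u \in U} x^*_{u,v_t} \leq 1$, the pointwise inequality $\max_{u \in R_{t-1}} w_u p_{u,v_t} \geq \sum_{u \in R_{t-1}} w_u p_{u,v_t} x^*_{u,v_t}$ combined with the ROM identity $\Pr[v_t = v] = 1/n$ yields
\[
\mb{E}[w(\scr{M})] \geq \frac{1}{n} \sum_{(u,v) \in E} w_u p_{u,v} x^*_{u,v} \sum_{t=1}^n \Pr[u \in R_{t-1} \mid v_t = v].
\]

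\textbf{Aggregate survival via primal-dual.} It remains to lower bound the aggregate sum $\sum_t \Pr[u \in R_{t-1} \mid v_t = v]$; pointwise $u$-survival bounds fail because \textsc{GreedyProbe} can concentrate matches (e.g., a single high-weight offline vertex may be matched almost surely at step $1$). The cleanest route is a randomized primal-dual analysis in the spirit of Devanur--Jain--Kleinberg: define duals by splitting each match value $w_u$ as $\alpha_u := w_u\, g(\pi(v)/n)$ and $\beta_v := w_u\bigl(1 - g(\pi(v)/n)\bigr)$, where $(u,v)$ is the matched pair, $\pi(v)$ is $v$'s arrival position, and $g(y) := e^{y-1}$. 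Structurally $\sum_u \alpha_u + \sum_v \beta_v = w(\scr{M})$, and the key lemma is \emph{approximate dual feasibility}: for every edge $(u,v) \in E$,
\[
\mb{E}[\alpha_u + \beta_v/p_{u,v}] \geq (1 - 1/e)\, w_u,
\]
proved by a coupling between \textsc{GreedyProbe} on $\pi$ and on $\pi$ with $v$ removed, integrated against $g'(y)$ on $y \in [0,1]$. Approximate feasibility together with LP duality then gives $(1 - 1/e)\sLPOPT(G) \leq \mb{E}[w(\scr{M})]$, which combined with the per-step bound above yields the claimed ratio.

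\textbf{Main obstacle.} The principal new difficulty, relative to the classical vertex-weighted ROM analyses, is the Bernoulli randomness of each probe: even when $v$ probes the ``right'' edge it may be inactive, forcing $v$ to leave unmatched under commitment and unit patience. The asymmetric definition of $\beta_v$ (without a $p_{u,v}$ factor) is chosen precisely so that the $1/p_{u,v}$ appearing in the dual constraint $\alpha_u + \beta_v/p_{u,v} \geq w_u$ cancels against this Bernoulli loss in expectation. Verifying that the coupling still closes---that \textsc{GreedyProbe}'s future behaviour on $\pi$ versus $\pi$ without $v$ remains controllable once $\sta(u,v)$ is revealed---is the technical crux, and is exactly where unit patience is essential, since it prevents $v$ from making subsequent probes that would further decouple the two runs.
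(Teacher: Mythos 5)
Your overall route is essentially the paper's: benchmark against \ref{LP:standard_benchmark}, a Devanur--Jain--Kleinberg randomized primal--dual charging in the arrival times, a coupling between the run on $G$ and the run on $G$ with $v_0$ deleted, and a critical time for $u_0$. (The paper carries this out against the more general \ref{LP:DP}, with dual variables $\phi_{v,R}$ playing the role of your $\beta_v$, so as to prove Theorem \ref{thm:ROM_rankable} for all rankable graphs; unit patience is the special case, and your ``per-step LP domination'' paragraph is superfluous once the primal--dual argument is in place.) However, there is one concrete error that makes your key lemma false as stated: the dual split is reversed. You assign $\alpha_u := w_u\,g(\pi(v)/n)$ and $\beta_v := w_u(1-g(\pi(v)/n))$ with $g(y)=e^{y-1}$ increasing; this is the \textsc{Ranking} convention, where the fresh random variable sits on the \emph{offline} side. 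Here the fresh randomness is the arrival time of the \emph{online} vertex, so the shares must swap: the offline vertex must get the share that is decreasing in its partner's arrival time, $\alpha_u = w_u(1-g(Y_v))$, and the online vertex gets $\beta_v = w_u\,g(Y_v)$ --- exactly the paper's charging scheme.

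The reason your version cannot close is that the coupling only yields that $u_0$ is matched in the run on $G$ at some time $Y_{v'} \le \til{Y}_c$; to exploit this one needs $\alpha_{u_0}$ to be a \emph{decreasing} function of the match time, which $w_{u_0}g(Y_{v'})$ is not, while your online share only accumulates $\int_0^{\til{Y}_c}(1-g(z))\,dz$. Concretely, take $U=\{u_0,u_1\}$, $V=\{v_0\}$ with $w_{u_1}p_{u_1,v_0}$ slightly above $w_{u_0}p_{u_0,v_0}$: then $\til{Y}_c=1$, \textsc{GreedyProbe} never probes $(u_0,v_0)$, so $\alpha_{u_0}=0$ and $\mb{E}[\alpha_{u_0}+\beta_{v_0}/p_{u_0,v_0}] \approx w_{u_0}\int_0^1(1-g(z))\,dz = w_{u_0}/e < (1-1/e)\,w_{u_0}$, contradicting your lemma. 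With the corrected split the computation closes exactly as you anticipate: if $u_0$ is available when $v_0$ arrives at time $Y_{v_0}<\til{Y}_c$, greedy probes an edge with $w_{u^*}p_{u^*,v_0}\ge w_{u_0}p_{u_0,v_0}$ and succeeds with probability $p_{u^*,v_0}$, so $\mb{E}[\beta_{v_0}]\ge p_{u_0,v_0}w_{u_0}\,g(Y_{v_0})$, the $1/p_{u_0,v_0}$ cancels the Bernoulli loss, and $(1-g(\til{Y}_c))+\int_0^{\til{Y}_c}g(z)\,dz = 1-1/e$. Finally, the monotonicity statement underlying the coupling (that $u_0$ is matched by time $\til{Y}_c$ in the run on $G$) is asserted but not proved in your sketch; it does hold under unit patience and is precisely the content of the paper's Proposition \ref{prop:offline_dual_variable_lower_bound_general} via \eqref{eqn:monotonicity_general}, which is where rankability/unit patience enters, as you correctly identify.
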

\begin{remark}
The guarantee of Theorem \ref{thm:ROM_unit_patience}
is proven against \ref{LP:standard_benchmark}, and together with the $0.621$
inapproximation of Mehta and Panigraphi, implies that deterministic probing algorithms
in the ROM setting have strictly more power than randomized probing algorithms in
the adversarial order model. The analysis of \textsc{GreedyProbe} is tight, as an execution of \textsc{GreedyProbe} corresponds to the seminal Karp et al. \cite{KarpVV90} \textsc{Ranking} algorithm for unweighted non-stochastic (i.e., $p_e  \in \{0,1\}$ for all $e \in E$) bipartite  matching \footnote{In the classical (unweighted, non-stochastic) online matching problem, an execution of the randomized \textsc{Ranking} algorithm in the adversarial setting can be coupled with an execution of the deterministic greedy algorithm in the ROM setting -- the latter of which is a special case of \textsc{GreedyProbe}. The tightness of the ratio $1-1/e$ therefore follows since this ratio is tight for the \textsc{Ranking} algorithm.}. 
\end{remark}

Our final result, Theorem \ref{thm:ROM_rankable}, makes partial progress towards
understanding the ROM setting in the case of general probing constraints.
However, unlike the adversarial setting,
the complexity of the constraints greatly impacts what we are able to prove. We state and prove our result in Section \ref{sec:vertex_weights}, which we note subsumes Theorem \ref{thm:ROM_unit_patience}.

%

%%%%%%%%%%%%%%%%%%%%%%%%%%%%%%%%%%%%%%%%%%%%%%%%%%%%%%%%%%%%%%%%%%%%%%%%%%%%%%%%%%%%%%%%%%%%%%%%%%%%%%%%%
%% start of main body of paper

\section{Edge Weights}\label{sec:edge_weights}

Let us suppose that $G=(U,V,E)$ is a stochastic graph with arbitrary edge weights, probabilities
and constraints $(\scr{C}_v)_{v \in V}$. For each $k \ge 1$ and $\bm{e} =(e_{1}, \ldots , e_{k}) \in E^{(*)}$, define $g(\bm{e}) := \prod_{i=1}^{k} (1 - p_{e_i})$.
Notice that $g(\bm{e})$ corresponds to the probability that 
all the edges of $\bm{e}$ are inactive, where $g(\lambda):=1$
for the empty string $\lambda$. We
also define $\bm{e}_{< e_i} := (e_{1}, \ldots ,e_{i-1})$ for each $2 \le i \le k$,
which we denote by $\bm{e}_{< i}$ when clear. By convention, $\bm{e}_{< 1}:= \lambda$.
Observe then that $\val(\bm{e}):=\sum_{i=1}^{|\bm{e}|} p_{e_i} w_{e_i} \cdot g(\bm{e}_{< i})$
corresponds to the expected weight of the first active edge of $\bm{e}$ if $\bm{e}$ 
is probed in order of its indices. Finally, for each $v \in V$ and $\bm{e} \in \scr{C}_v$, we introduce a decision variable $x_{v}(\bm{e})$. We can then express the following LP from \cite{borodin2021prophet}:
\begin{align}\label{LP:config}
\tag{LP-config}
&\text{maximize} &  \sum_{v \in V} \sum_{\bm{e} \in \scr{C}_v } \val(\bm{e}) \cdot x_{v}(\bm{e}) \\
&\text{subject to} & \sum_{v \in V} \sum_{\substack{ \bm{e} \in \scr{C}_v: \\ (u,v) \in \bm{e}}} 
p_{u,v} \cdot g(\bm{e}_{< (u,v)}) \cdot x_{v}( \bm{e})  \leq 1 && \forall u \in U  \label{eqn:relaxation_efficiency_matching}\\
&& \sum_{\bm{e} \in \scr{C}_v} x_{v}(\bm{e}) = 1 && \forall v \in V,  \label{eqn:relaxation_efficiency_distribution} \\
&&x_{v}( \bm{e}) \ge 0 && \forall v \in V, \bm{e} \in \scr{C}_v
\end{align}
\begin{theorem}[Theorem $3.1$ in \cite{borodin2021prophet}]\label{thm:new_LP_relaxation}
For any stochastic graph $G=(U,V,E)$ with substring-closed probing
constraints, $\OPT(G) \le \LPOPT(G)$, where $\cLPOPT(G)$ is 
the optimum value of \ref{LP:config}. 
\end{theorem}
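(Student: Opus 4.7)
The plan is to take the committal benchmark $\OPT(G)$ and directly convert it into a feasible solution of \ref{LP:config} whose objective equals (or lower-bounds) $\E[\OPT(G)]$. First, I would put $\OPT(G)$ into the following normal form at no loss of expected weight: when $\OPT(G)$ begins processing an online vertex $v$, it commits to a single sequence $\pi_v \in \scr{C}_v$ based only on its internal randomness together with the history of probes outside $\partial(v)$, and then probes the edges of $\pi_v$ in order, stopping as soon as it finds an active edge whose endpoint in $U$ is still free. The key structural property is that $\pi_v$ is then independent of $(\sta(e))_{e \in \partial(v)}$, since those states have not been revealed when $\pi_v$ is chosen.

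Given this normal form, set $x_v(\bm{e}) := \Prob[\pi_v = \bm{e}]$ for each $v \in V$ and $\bm{e} \in \scr{C}_v$. The distribution constraint \eqref{eqn:relaxation_efficiency_distribution} is immediate because $\pi_v$ is supported on $\scr{C}_v$. For the matching constraint \eqref{eqn:relaxation_efficiency_matching}, fix $u \in U$. Using the independence of $\pi_v$ from $\partial(v)$'s edge states, and unpacking the definition of $g$,
\[
\sum_{\bm{e} \in \scr{C}_v:\,(u,v)\in \bm{e}} p_{u,v}\, g(\bm{e}_{<(u,v)})\, x_v(\bm{e}) \;=\; \Prob\bigl[(u,v)\in \pi_v \text{ and } (u,v) \text{ is the first active edge in }\pi_v\bigr].
\]
Under the normal-form policy, $\OPT(G)$ only places $(u,v)$ into $\pi_v$ while $u$ is still free, so summing the above over $v \in V$ is at most $\Prob[u \text{ is ever matched}] \le 1$.

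For the objective, a direct calculation shows that $\val(\bm{e})$ is exactly the expected weight of the first active edge in $\bm{e}$ when $\bm{e}$ is probed in order, using independence of the Bernoullis. Hence $\sum_{\bm{e}} \val(\bm{e})\, x_v(\bm{e}) = \E[\val(\pi_v)]$ upper bounds $v$'s expected contribution to $\OPT(G)$, since the actual match weight can only be smaller when a candidate's endpoint $u$ is taken. Summing over $v$ and combining with LP feasibility yields $\OPT(G) \le \cLPOPT(G)$.

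The main obstacle is the normal-form reduction: I must justify that replacing an arbitrary adaptive-within-$v$ strategy with a single committed sequence $\pi_v \in \scr{C}_v$ does not decrease $\OPT(G)$, and that the resulting $\pi_v$ is genuinely independent of $\partial(v)$'s edge states throughout the execution (even though other edges' states, which $\pi_v$ may depend on, are correlated with the global evolution). Once this decoupling is established, everything else is linearity of expectation and independence of the underlying Bernoullis.
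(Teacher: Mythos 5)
Your plan is the ``natural approach'' that the paper explicitly rejects, and the step you defer to the end --- the normal-form reduction --- is precisely the missing idea, not a technicality. The committal benchmark is an \emph{offline} probing algorithm: it may interleave probes across different online vertices and choose each probe adaptively based on all states revealed so far, subject only to the per-vertex constraints $(\scr{C}_v)_{v \in V}$. Your normal form assumes it processes each $v$ in a contiguous block and commits, at the start of that block, to a fixed string $\pi_v \in \scr{C}_v$ that is independent of $(\sta(e))_{e \in \partial(v)}$. The paper (Appendix \ref{sec:committal_relaxation}) states exactly this dichotomy: if one could assume the benchmark follows an (adaptive) vertex-by-vertex ordering, then properties \ref{eqn:single_vertex_committal} and \ref{eqn:single_vertex_non_adaptivity} could be assumed and the LP-feasibility argument you sketch would go through; but such vertex-at-a-time algorithms are in general weaker than the committal benchmark, and the authors say they do not know how to carry out this reduction. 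Without it, your key identity fails: for a genuinely interleaving, adaptive benchmark, the sequence of probes made inside $\partial(v)$ is \emph{not} independent of $(\sta(e))_{e \in \partial(v)}$ (e.g.\ probing stops or redirects after an active edge is revealed, and whether $u$ is still free when $(u,v)$ is probed is correlated with states of other edges through the adaptive schedule), so neither $\sum_{\bm{e}} \val(\bm{e}) x_v(\bm{e})$ nor the term $p_{u,v}\, g(\bm{e}_{<(u,v)})\, x_v(\bm{e})$ computes the probability/weight you claim when $x_v(\bm{e})$ is defined as a probing-order probability for the true benchmark.

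The paper's actual route avoids this reduction entirely: it introduces the \emph{relaxed} stochastic matching problem, whose solutions return a one-sided matching of $V$ in which each $u \in U$ is used at most once only \emph{in expectation}. Trivially $\OPT(G) \le \rOPT(G)$, and for this relaxed benchmark (unlike the committal one) non-adaptivity is provably without loss (Lemma \ref{lem:non_adaptive_optimum}); defining $x_v(\bm{e})$ as the probing probabilities of a non-adaptive optimal relaxed algorithm then yields $\rOPT(G) = \cLPOPT(G)$, hence the theorem. So your proposal is not a proof as it stands: to salvage it you would either have to prove that vertex-by-vertex, within-vertex-non-adaptive committal algorithms lose nothing --- which the paper asserts is false in general --- or replace the benchmark by a relaxation for which such a decoupling can be established, which is exactly the paper's contribution.
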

In order to prove Theorem \ref{thm:new_LP_relaxation},
the natural approach is to define $x_{v}(\bm{e})$ to be the probability that the committal benchmark
probes the edges of $\bm{e}$ in order, where $v \in V$ and $\bm{e} \in \scr{C}_v$. 
However, to our knowledge, this interpretation of the decision variables of \ref{LP:config}
does not seem to yield a proof of Theorem \ref{thm:new_LP_relaxation}, 
for technical reasons which we discuss in detail in Appendix \ref{sec:committal_relaxation}.

The approach we take in \cite{borodin2021prophet} is to instead introduce a \textbf{combinatorial
relaxation} of the committal benchmark, which is a new stochastic probing problem on $G$
whose optimum solution upper bounds $\OPT(G)$. Specifically, we introduce the \textbf{relaxed stochastic matching problem},
a solution to which we refer to as a \textbf{relaxed probing algorithm}. A relaxed probing
algorithm operates in the same probing framework as an offline probing algorithm, however it does \textit{not}
return a (two-sided) matching of $G$. Instead, it returns a subset of its active
probes which form a one-sided matching of $V$. This one-sided matching  $\scr{N}$ has the additional property
that each offline vertex of $U$ is included in $\scr{N}$ at most once \textit{in expectation}. We define
the \textbf{relaxed benchmark} to be an optimum relaxed probing algorithm, and denote its evaluation on $G$
by $\rOPT(G)$. Since each offline probing algorithm is a relaxed probing algorithm,
clearly $\OPT(G) \le \rOPT(G)$. On the other hand, by defining
$x_{v}(\bm{e})$ to be the probability the relaxed benchmark probes $\bm{e} \in \scr{C}_v$
of $v \in V$ in order, we can show that $\rOPT(G) = \LPOPT(G)$, which implies Theorem \ref{thm:new_LP_relaxation}. 
For completeness, we include the details in Appendix \ref{sec:committal_relaxation}.

Not only does \ref{LP:config} relax the committal benchmark,
it can be solved efficiently, provided the constraints
$(\scr{C}_v)_{v \in V}$ are assumed to be closed under substrings
and permutations. The approach we take in \cite{borodin2021prophet}
is to first consider the dual of \ref{LP:config}. It is not hard to verify
that the \textsc{DP-OPT} algorithm of Theorem \ref{thm:efficient_star_dp}
can be used as a (deterministic) polynomial time separation oracle for this LP.
This ensures that the dual of \ref{LP:config} can be solved efficiently, as a consequence of how the ellipsoid algorithm \cite{Groetschel,GartnerM} executes. Moreover, by tracking which polynomial
number of constraints of the dual of \ref{LP:config} are queried by this separation oracle, one can reduce
the number of decision variables needed in \ref{LP:config} to a polynomial number. Since this restriction
of \ref{LP:config} has a polynomial number of constraints, it can then be solved efficiently. We omit
the details, as a more complete proof is given in \cite{borodin2021prophet},
and this technique for solving LPs which have an exponential number of variables is well-known in the literature (see \cite{Williamson,Vondrak_2011,Adamczyk2017,Lee2018} for instance).

We now define a \textit{fixed vertex} probing algorithm, called \textsc{VertexProbe}, which is
applied to an online vertex $s$ of an arbitrary stochastic graph (potentially distinct from $G$):
\begin{algorithm}[H]
\caption{VertexProbe}
\label{alg:vertex_probe}
\begin{algorithmic}[1]
\Require an online vertex $s$ of a stochastic graph, $\partial(s)$, and probabilities $(z(\bm{e}))_{\bm{e} \in \scr{C}_s}$ which satisfy $\sum_{\bm{e} \in \scr{C}_s} z(\bm{e}) = 1$. 
\Ensure an active edge $\scr{N}$ of $\partial(s)$.
\State $\scr{N} \leftarrow \emptyset$.
\State Draw $\bm{e}'$ from $\scr{C}_s$ with
probability $z(\bm{e}')$.
\If{$\bm{e}'= \lambda$}				\Comment{the empty string is drawn.}
\State \Return $\scr{N}$.
\Else{}
\State Denote $\bm{e}' = (e_{1}', \ldots ,e_{k}')$ for $k := |\bm{e}'| \ge 1$.
\For{$i=1, \ldots ,k$}
\State Probe the edge $e'_i$.
\If{$\sta(e'_i)=1$}
\State  Add $e'_i$ to $\scr{N}$, and exit the ``for loop''.
\EndIf
\EndFor
\EndIf
\State \Return $\scr{N}$.
\end{algorithmic}
\end{algorithm}

\begin{lemma}\label{lem:fixed_vertex_probe}
Suppose \textsc{VertexProbe} (Algorithm \ref{alg:vertex_probe}) is passed a fixed online node $s$ of a stochastic graph,
and probabilities $(z(\bm{e}))_{e \in \scr{C}_s}$ which satisfy $\sum_{\bm{e} \in \scr{C}_s} z(\bm{e}) =1$.
If for each $e \in \partial(s)$,
\[
\til{z}_{e}:= \sum_{\substack{\bm{e}' \in \scr{C}_v: \\ e \in \bm{e}'}} g(\bm{e}_{ < e}') \cdot z_{v}( \bm{e}'),
\] 
then $e \in \partial(s)$ is probed with probability $\til{z}_{e}$, and returned by the algorithm with probability $p_{e} \cdot \til{z}_e$.
\end{lemma}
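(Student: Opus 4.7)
The plan is to condition on the random string $\bm{e}'$ drawn in line 2 of Algorithm \ref{alg:vertex_probe} and then exploit the mutual independence of the edge states. Fix an edge $e \in \partial(s)$. By inspection of the algorithm, $e$ is probed during the execution of \textsc{VertexProbe} if and only if (i) the sampled string $\bm{e}'$ contains $e$ as one of its entries, and (ii) every edge appearing strictly before $e$ in $\bm{e}'$ turns out to be inactive, so that the ``for loop'' has not already terminated by the time $e$ is reached. Similarly, $e$ is returned if and only if (i), (ii), and additionally $\sta(e)=1$ all hold.

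First I would compute, for each fixed $\bm{e}' \in \scr{C}_s$ containing $e$, the conditional probability that $e$ is probed given that $\bm{e}'$ is the sampled string. Since the entries of $\bm{e}'$ are distinct edges whose states are independent Bernoulli random variables, and since the draw of $\bm{e}'$ depends only on $z$ (and hence is independent of the edge states), this conditional probability equals $\prod_{e'' \in \bm{e}'_{<e}}(1-p_{e''}) = g(\bm{e}'_{<e})$. Applying the law of total probability over all $\bm{e}' \in \scr{C}_s$ containing $e$, weighted by $z(\bm{e}')$, then yields
\[
\Prob[e \text{ is probed}] \;=\; \sum_{\substack{\bm{e}' \in \scr{C}_s: \\ e \in \bm{e}'}} z(\bm{e}') \cdot g(\bm{e}'_{<e}) \;=\; \til{z}_e,
\]
which is the first claim.

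For the second claim, I will argue that the event ``$e$ is probed'' is independent of $\sta(e)$, and so
\[
\Prob[e \text{ is returned}] \;=\; \Prob[e \text{ is probed}] \cdot \Prob[\sta(e)=1] \;=\; p_{e} \cdot \til{z}_e.
\]
The independence holds because the event ``$e$ is probed'' is a function of the draw of $\bm{e}'$ together with the states $(\sta(e''))_{e'' \in \bm{e}'_{<e}}$ of edges strictly preceding $e$ in $\bm{e}'$; since each entry of $\bm{e}'$ is distinct, $e$ itself never appears among $\bm{e}'_{<e}$, and the edge states are mutually independent (and independent of the sampling of $\bm{e}'$).

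I do not anticipate a substantial obstacle here; the only subtlety is ensuring that the sampling distribution $z$ is implemented using randomness independent of the Bernoulli states $(\sta(e))_{e \in \partial(s)}$, so that the above independence argument goes through. This is implicit in the probing framework, but I would state it explicitly when invoking it.
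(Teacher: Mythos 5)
Your proof is correct and is exactly the intended argument: the paper states Lemma \ref{lem:fixed_vertex_probe} without proof, since it follows directly from conditioning on the string drawn in \textsc{VertexProbe} and using that the edge states are independent Bernoullis, independent of the algorithm's sampling randomness. Your handling of the second claim---noting that the event ``$e$ is probed'' depends only on the draw and on the states of the distinct edges preceding $e$, hence is independent of $\sta(e)$---is precisely the needed observation.
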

\begin{remark}
If \textsc{VertexProbe} outputs the edge $e=(u,s)$ when executing on the fixed node $s$,
then we say that $s$ \textbf{commits} to the edge $e=(u,s)$,
or that $s$ commits to $u$.
\end{remark}
Returning to the problem of designing an online probing algorithm
for $G$, let us assume that $n:=|V|$, and that the online nodes of $V$ are denoted
$v_{1}, \ldots ,v_{n}$, where the order is generated $u.a.r.$.
Denote $V_{t}$ as the set of first $t$ arrivals of $V$; that is, $V_{t}:= \{v_{1}, \ldots ,v_{t}\}$. Moreover, set $G_{t}:= G[U \cup V_t]$, and $\cLPOPT(G_t)$ as the value of an optimum solution to \ref{LP:config} (this is a random variable, as $V_{t}$ is a random subset of $V$). The following inequality then holds:
\begin{lemma} \label{lem:random_induced_subgraph}
For each $t \ge 1$,  $\mb{E}[ \LPOPT_{conf}(G_{t}) ] \ge \frac{t}{n} \, \LPOPT_{conf}(G)$.
\end{lemma}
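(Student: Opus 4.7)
The plan is the standard subsampling argument: take an optimal fractional solution for $G$ and restrict it to the random vertex set $V_t$, then show that its expected objective value is exactly $\frac{t}{n} \cLPOPT(G)$ and that it remains feasible for the LP on $G_t$.

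More precisely, let $(x_v^*(\bm{e}))_{v \in V, \bm{e} \in \scr{C}_v}$ be an optimal solution to \ref{LP:config} for $G$, so that $\sum_{v \in V} \sum_{\bm{e} \in \scr{C}_v} \val(\bm{e}) \cdot x_v^*(\bm{e}) = \cLPOPT(G)$. Given the random subset $V_t \subseteq V$, define $y_v(\bm{e}) := x_v^*(\bm{e})$ for each $v \in V_t$ and $\bm{e} \in \scr{C}_v$. I would then verify that $y$ is a feasible solution to \ref{LP:config} on the induced stochastic graph $G_t = G[U \cup V_t]$. The distribution constraint (\ref{eqn:relaxation_efficiency_distribution}) is preserved vertex-by-vertex since it involves only variables indexed by a single $v \in V_t$, and the probing constraints $\scr{C}_v$ in $G_t$ agree with those in $G$ because $U$ is fully present in both. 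The matching constraint (\ref{eqn:relaxation_efficiency_matching}) at each $u \in U$ only becomes easier, since restricting from $V$ to $V_t$ drops non-negative terms from the sum:
\[
\sum_{v \in V_t} \sum_{\substack{\bm{e} \in \scr{C}_v:\\ (u,v) \in \bm{e}}} p_{u,v} \cdot g(\bm{e}_{< (u,v)}) \cdot y_v(\bm{e}) \;\le\; \sum_{v \in V} \sum_{\substack{\bm{e} \in \scr{C}_v:\\ (u,v) \in \bm{e}}} p_{u,v} \cdot g(\bm{e}_{< (u,v)}) \cdot x_v^*(\bm{e}) \;\le\; 1.
\]
Non-negativity is inherited trivially.

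Since $y$ is feasible for \ref{LP:config} on $G_t$, we have $\cLPOPT(G_t) \ge \sum_{v \in V_t} \sum_{\bm{e} \in \scr{C}_v} \val(\bm{e}) \cdot x_v^*(\bm{e})$. Taking expectation over the uniform random choice of $V_t$ and using $\Pr[v \in V_t] = t/n$ for every $v \in V$ (by symmetry), linearity of expectation yields
\[
\mb{E}[\cLPOPT(G_t)] \;\ge\; \sum_{v \in V} \Pr[v \in V_t] \sum_{\bm{e} \in \scr{C}_v} \val(\bm{e}) \cdot x_v^*(\bm{e}) \;=\; \frac{t}{n} \cdot \cLPOPT(G),
\]
which is the claim.

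There is essentially no obstacle here; the only thing to watch is that restricting the LP solution to $V_t$ genuinely gives a feasible solution for the LP on the induced graph $G_t$, which reduces to noting that both the matching and distribution constraints are indexed so that dropping online vertices can only slacken them. The argument uses only the linearity of the objective and constraints of \ref{LP:config} in the online-vertex index $v$, together with the uniform marginal $\Pr[v \in V_t] = t/n$.
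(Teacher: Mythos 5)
Your proof is correct and is exactly the standard subsampling argument that the paper relies on (the paper states this lemma without giving an explicit proof): restrict an optimal solution of \ref{LP:config} for $G$ to the uniformly random $t$-subset $V_t$, observe feasibility for $G_t$ because both the matching constraint \eqref{eqn:relaxation_efficiency_matching} and the per-vertex distribution constraint \eqref{eqn:relaxation_efficiency_distribution} only become easier or are unaffected when online vertices are dropped (and $\scr{C}_v$ is unchanged in $G_t$ since all of $U$ is present), and then use $\Pr[v \in V_t] = t/n$ with linearity of expectation. No gaps.
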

In light of this observation, we design an online probing algorithm which makes use of $V_{t}$, the currently known nodes, to derive an optimum LP solution with respect to $G_{t}$. As such, each time an online node
arrives, we must compute an optimum solution for the LP associated to $G_{t}$, distinct from the solution computed
for that of $G_{t-1}$. 

\begin{algorithm}[H]
\caption{Unknown Stochastic Graph ROM} 
\label{alg:ROM_edge_weights}
\begin{algorithmic}[1]
\Require $U$ and $n:=|V|$.
\Ensure a matching $\scr{M}$ from the (unknown) stochastic graph $G=(U,V,E)$ of active edges.
\State Set $\scr{M} \leftarrow \emptyset$.
\State Set $G_{0} = (U, \emptyset, \emptyset)$
\For{$t=1, \ldots , n$}
\State Input $v_{t}$, with $(w_{e})_{e \in \partial(v_t)}$, $(p_{e})_{e \in \partial(v_t)}$ and $\scr{C}_{v_t}$.  
\State Compute $G_{t}$, by updating $G_{t-1}$ to contain $v_{t}$ (and its relevant information).
\If{ $t < \lfloor n/e \rfloor$}
\State Pass on $v_{t}$.
\Else
\State Solve \ref{LP:config} for $G_{t}$ and find an optimum solution $(x_{v}(\bm{e}))_{v \in V_{t}, \bm{e} \in \scr{C}_v}$. \label{line:cur_solution}
\State Set $e_t \leftarrow \textsc{VertexProbe}(v_t, \partial(v_t), (x_{v}(\bm{e}))_{\bm{e} \in \scr{C}_{v_t}})$.
\If{$e_t=(u_t,v_t) \neq \emptyset$ and $u_t$ is unmatched}
\State Add $e_t$ to $\scr{M}$.
\EndIf
\EndIf
\EndFor
\State \Return $\scr{M}$.
\end{algorithmic}
\end{algorithm}

Let us consider the matching $\scr{M}$ returned by the algorithm,
as well as its weight, which we denote by $w(\scr{M})$. Set $\alpha:=1/e$ for clarity, and take $t \ge \lceil \alpha n \rceil$. Define $R_{t}$ as the \textit{unmatched vertices} of $U$ when vertex $v_{t}$ arrives. Note that committing to $e_t=(u_t,v_t)$ is necessary, but not sufficient, for $v_t$ to match to $u_t$.
With this notation, we have that $\mb{E}[ w(\scr{M}) ] = \sum_{t=\lceil \alpha n \rceil}^{n} \mb{E}[ w(u_t, v_t) \cdot \bm{1}_{[u_t \in R_t]} ]$.
Moreover, we claim the following:
\begin{lemma} \label{lem:edge_value_lower_bound}
For each  $t \ge \lceil \alpha n \rceil$, $\mb{E}[ w(e_t)] \ge \LPOPT_{conf}(G)/n$.
\end{lemma}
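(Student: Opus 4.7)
The plan is to unwind the expectation by conditioning first on the random subset $V_{t}$ of arrivals, and then on the identity of $v_{t}$. The key structural fact I will exploit is that, conditional on $V_{t}$, the vertex $v_{t}$ is distributed uniformly on $V_{t}$, by exchangeability of a uniform random permutation of $V$. This will let me replace an expectation over the ``last arrival'' with an average over all vertices in $V_{t}$, at which point the LP objective value $\cLPOPT(G_{t})$ will appear for free.

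First, I would fix $V_{t}$ and a candidate $v \in V_{t}$, and analyze the expected weight produced by running \textsc{VertexProbe} on $v$ with the probabilities $(x_{v}(\bm{e}))_{\bm{e} \in \scr{C}_{v}}$ coming from an optimum solution to \ref{LP:config} for $G_{t}$. By Lemma \ref{lem:fixed_vertex_probe}, any edge $e \in \partial(v)$ is returned with probability $p_{e} \cdot \til{x}_{v,e}$, where $\til{x}_{v,e} = \sum_{\bm{e}' \in \scr{C}_{v}:\, e \in \bm{e}'} g(\bm{e}'_{< e}) \cdot x_{v}(\bm{e}')$. Summing $w_{e} \cdot p_{e} \cdot \til{x}_{v,e}$ over $e \in \partial(v)$ and swapping the order of summation over $e$ and $\bm{e}'$ produces
\[
\sum_{\bm{e}' \in \scr{C}_{v}} x_{v}(\bm{e}') \sum_{e \in \bm{e}'} p_{e} w_{e} \cdot g(\bm{e}'_{< e}) = \sum_{\bm{e}' \in \scr{C}_{v}} x_{v}(\bm{e}') \cdot \val(\bm{e}'),
\]
which is exactly the per-vertex contribution of $v$ to the \ref{LP:config} objective. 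This gives a clean identity for $\mb{E}[w(e_{t}) \mid V_{t}, v_{t}]$ in terms of the LP solution.

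Next, averaging the identity above over the uniform choice of $v_{t} \in V_{t}$ yields
\[
\mb{E}[w(e_{t}) \mid V_{t}] \; = \; \frac{1}{t} \sum_{v \in V_{t}} \sum_{\bm{e}' \in \scr{C}_{v}} x_{v}(\bm{e}') \cdot \val(\bm{e}') \; = \; \frac{\cLPOPT(G_{t})}{t},
\]
since the inner double sum is exactly the total objective value of the chosen LP solution on $G_{t}$. Finally, taking expectation over $V_{t}$ and invoking Lemma \ref{lem:random_induced_subgraph} gives $\mb{E}[w(e_{t})] \ge (1/t) \cdot (t/n) \cdot \cLPOPT(G) = \cLPOPT(G)/n$, as required.

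There is no serious obstacle here; the argument is a two-level conditioning paired with an algebraic identity. The only point that requires care is making sure the LP solution is treated as a (possibly randomized) function of $V_{t}$ alone, independent of the fresh randomness used by \textsc{VertexProbe} and the edge states, so that the application of Lemma \ref{lem:fixed_vertex_probe} is valid conditional on $V_{t}$. Given that, the uniform-last-arrival symmetry and Lemma \ref{lem:random_induced_subgraph} do all the work.
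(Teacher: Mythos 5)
Your proof is correct and follows essentially the same route as the paper: condition on $(V_t,v_t)$, apply Lemma \ref{lem:fixed_vertex_probe}, use the uniformity of $v_t$ given $V_t$ to recover $\cLPOPT(G_t)/t$, and finish with Lemma \ref{lem:random_induced_subgraph}. Your explicit swap of summations identifying $\sum_{e\in\partial(v)} w_e p_e \til{x}_{v,e}$ with $\sum_{\bm{e}'\in\scr{C}_v}\val(\bm{e}')\,x_v(\bm{e}')$ is a step the paper states without detail, so no gap remains.
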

\begin{proof}[Proof of Lemma \ref{lem:edge_value_lower_bound}]
Set $\alpha:=1/e$ for clarity, and take $t \ge \lceil \alpha n \rceil$. 
Define $e_{t}:=(u_t,v_{t})$, where $u_{t}$ is the vertex of $U$ which $v_{t}$ commits to
(which is the empty set $\emptyset$, if no such commitment occurs). 
For each $u \in U$, denote $C(u,v_t)$ as the event in which
$v_t$ commits to $u$. 
Let us now condition on the random subset $V_{t}$, as well as the 
random vertex $v_{t}$. 
In this case, 
\[
    \mb{E}[ w(e_{t})  \, | \, V_{t}, v_t] = \sum_{u \in U} w_{u,v_t} \, \mb{P}[ C(u,v_t)  \, | \, V_{t}, v_{t}].
\]
Observe however that once we condition on $V_{t}$ and $v_{t}$, Algorithm \ref{alg:ROM_edge_weights} corresponds to executing \\\textsc{VertexProbe} on the instance $(v_t, \partial(v_t), (x_{v_t}(\bm{e}))_{\bm{e} \in \scr{C}_v})$,
where we recall that $(x_{v}(\bm{e}))_{\bm{e} \in \scr{C}_v, v \in v_t}$ is an optimum solution
to \ref{LP:config} for $G_t=G[U \cup V_t]$. Thus,
Lemma \ref{lem:fixed_vertex_probe} implies that $\mb{P}[ C(u,v_t)  \, | \, V_{t}, v_{t}]  =  p_{u,v_t} \til{x}_{u,v_t}$,
where
\[
\til{x}_{u,v_t}:= \sum_{\substack{\bm{e}' \in \scr{C}_{v_t}: \\ e \in \bm{e}'}} g(\bm{e}_{ < e}') \cdot x_{v_t}( \bm{e}'),
\]
and so $\mb{E}[ w(e_{t})  \, | \, V_{t}, v_t] = \sum_{u \in U} w_{u,v_t}  p_{u,v_t}  \til{x}_{u,v_t}$.
On the other hand, if we condition on \textit{solely} $V_{t}$, then $v_{t}$ remains distributed uniformly
at random amongst the vertices of $V_{t}$. Moreover, once we condition on $V_t$, the graph $G_t$ is determined, and thus so are
the values $(x_{v}(\bm{e}))_{v \in V_{t}, \bm{e} \in \scr{C}_v}$. These observations together imply that
\begin{equation}\label{eqn:conditional_expectation_value}
\mb{E}[ w_{u,v_t} \, p_{u,v_t} \,  \til{x}_{u,v_t}  \, | \, V_{t}] = \frac{\sum_{v \in V_t} w_{u,v} \, p_{u,v} \, \til{x}_{u,v}}{t}
\end{equation}
for each $u \in U$ and $\lceil \alpha n \rceil \le t \le n$. If we now take expectation over $v_{t}$, then using the law of iterated
expectations,
\begin{align*}
    \mb{E}[ w(e_t)  \, | \, V_{t}] &= \mb{E}[ \,  \mb{E}[ w(e_t) \, | \, V_t, v_t] \,   \, | \, V_{t} ]  \\
                               &= \mb{E}\left[ \sum_{u \in U} w_{u,v_t} \, p_{u,v_t} \, \til{x}_{u,v_t}   \, | \, V_{t} \right] \\
                               &= \sum_{u \in U} \mb{E}[ w_{u,v_t} \, p_{u,v_t} \, \til{x}_{u,v_t}  \, | \, V_{t}] \\
                               &= \sum_{u \in U} \sum_{v \in V_t} \frac{w_{u,v} p_{u,v} \, \til{x}_{u,v}}{t},
\end{align*}
where the final equation follows from \eqref{eqn:conditional_expectation_value}.
Observe however that
\[
    \cLPOPT(G_t)=\sum_{v \in V_t} \sum_{u \in U} w_{u,v_t} \, p_{u,v_t} \, \til{x}_{u,v_t},
\]
as $(x_{v}(\bm{e}))_{v \in V_{t}, \bm{e} \in \scr{C}_v}$ is an optimum solution to
\ref{LP:config} for $G_t$. As a result,
\[
    \mb{E}[ w(e_t)  \, | \, V_{t}]  = \frac{\cLPOPT(G_t)}{t},
\]
and so
\[
	\mb{E}[ w(e_t)]  = \frac{\mb{E}[\cLPOPT(G_t)]}{t},
\]
after taking taking expectation over $V_{t}$. On the other hand, Lemma \ref{lem:random_induced_subgraph} implies that
\[
	\frac{\mb{E}[\cLPOPT(G_t)]}{t} \ge  \frac{\cLPOPT(G)}{n}.
\]
Thus,
\[
\mb{E}[ w(e_t)] \ge \frac{\cLPOPT(G)}{n},
\]
provided $\lceil \alpha n \rceil \le t \le n$, thereby completing the proof.

\end{proof}

\begin{lemma} \label{lem:availability_lower_bound}
For each $t \ge \lceil \alpha n \rceil$, define $f(t,n):= \lfloor \alpha n \rfloor /(t-1)$. In this case, $\mb{P}[ u_{t} \in R_t  \, | \, V_{t},v_t] \ge f(t,n)$,
where $V_{t}=\{v_{1},\ldots ,v_{t}\}$ and $v_t$ is the $t^{th}$ arriving node of $V$ \footnote{Note that since
$V_t$ is a set, conditioning on $V_t$ only reveals which vertices of $V$ encompass the first $t$ arrivals,
\textit{not} the order they arrived in. Hence, conditioning on $v_t$ as well reveals strictly more information.}.
\end{lemma}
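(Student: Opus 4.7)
The plan is to prove the pointwise strengthening $\mb{P}[u \in R_t \mid V_t, v_t] \ge f(t,n)$ for every offline vertex $u \in U$, which then implies the lemma. The reduction uses the fact that, conditional on $(V_t, v_t)$, the committed vertex $u_t$ is a function only of the fresh randomness of \textsc{VertexProbe} at round $t$, and is therefore independent of $\{u \in R_t\}$, which is measurable with respect to rounds $1, \ldots, t-1$. Hence $\mb{P}[u_t \in R_t \mid V_t, v_t] = \sum_{u \in U} \mb{P}[u_t = u \mid V_t, v_t] \cdot \mb{P}[u \in R_t \mid V_t, v_t]$, and a uniform pointwise lower bound on the second factor suffices (the no-commit contribution from $u_t = \emptyset$ is absorbed naturally when this lemma is combined with Lemma \ref{lem:edge_value_lower_bound} through the weight $w(u_t,v_t)$, which vanishes in that case). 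Next, since $R_t$ is determined by rounds $<t$, and, conditional on $(V_t, v_t)$, the sequence $(v_1, \ldots, v_{t-1})$ is a uniformly random ordering of $V_{t-1} := V_t \setminus \{v_t\}$, I strip the conditioning on $v_t$ to get $\mb{P}[u \in R_t \mid V_t, v_t] = \mb{P}[u \in R_t \mid V_{t-1}]$.

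Let $m$ denote the index of the first probing round. I then prove by induction on $k$ the uniform bound $q_k(S, u) := \mb{P}[u \in R_{k+1} \mid V_k = S] \ge (m-1)/k$ for every $u \in U$ and every $S \subseteq V$ with $|S| = k$. The base case $k = m-1$ is immediate since $R_m = U$ deterministically, so $q_{m-1}(S,u) = 1$. For the inductive step I condition on the last arrival $v_k$, uniformly distributed over $S$. The decomposition $\{u \in R_{k+1}\} = \{u \in R_k\} \cap \{v_k \text{ does not commit to } u\}$, together with the conditional independence given $(V_k, v_k)$ of these two events -- the first being determined by the uniform ordering of $S \setminus \{v_k\}$ and the \textsc{VertexProbe} randomness at rounds $<k$, the second by the \textsc{VertexProbe} randomness at round $k$ alone -- gives, via Lemma \ref{lem:fixed_vertex_probe},
\[ \mb{P}[u \in R_{k+1} \mid V_k = S, v_k] = q_{k-1}(S \setminus \{v_k\}, u) \cdot \bigl(1 - p_{u,v_k}\, \til{x}_{u,v_k}^{(G_k)}\bigr), \]
where $(x_v(\cdot))^{(G_k)}$ is an optimum solution to \ref{LP:config} for $G_k$.

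Averaging over $v_k$ uniform in $S$, applying the inductive hypothesis $q_{k-1}(\cdot, u) \ge (m-1)/(k-1)$, and using the \ref{LP:config} feasibility constraint \eqref{eqn:relaxation_efficiency_matching} applied to $u$ in the LP for $G_k$, which reads $\sum_{v \in S} p_{u,v}\, \til{x}_{u,v}^{(G_k)} \le 1$, yields
\[ q_k(S, u) \ge \frac{m-1}{k-1} \cdot \Bigl(1 - \frac{1}{k} \sum_{v \in S} p_{u,v}\, \til{x}_{u,v}^{(G_k)} \Bigr) \ge \frac{m-1}{k-1} \cdot \frac{k-1}{k} = \frac{m-1}{k}. \]
Setting $k = t-1$ and noting $m - 1 \ge \lfloor \alpha n \rfloor$ completes the proof. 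I expect the main subtlety to lie in the two conditional-independence arguments -- the independence of the commit event from the prior history in the initial reduction and in the inductive decomposition -- both of which rest on the fact that \textsc{VertexProbe}'s randomness at each round is fresh, together with the symmetry furnished by the uniformly random arrival order. The LP feasibility constraint is the clean combinatorial ``engine'' that drives the recursion.
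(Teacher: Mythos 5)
Your proposal is correct and takes essentially the same route as the paper's proof: both reduce to the per-vertex bound $\mb{P}[u \in R_t \mid V_t, v_t] \ge \lfloor \alpha n \rfloor/(t-1)$, express $\{u \in R_t\}$ as an intersection of ``no commit to $u$'' events, exploit that the most recent arrival in a conditioned prefix set is uniform over that set, compute the commit probability via Lemma \ref{lem:fixed_vertex_probe}, bound it by $1/k$ using constraint \eqref{eqn:relaxation_efficiency_matching} of \ref{LP:config} for $G_k$, and telescope the resulting $(1-1/k)$ factors. The only differences are organizational --- you run a forward induction over prefix sets $V_k = S$ after stripping the conditioning on $v_t$, while the paper peels arrivals off the back under the nested conditionings $\scr{H}_k$ --- and your final step $m-1 \ge \lfloor \alpha n \rfloor$ rests on the same reading the paper's own proof uses (no probes through round $\lfloor \alpha n \rfloor$, so the first probing round is $\lceil \alpha n \rceil$), which matches the paper despite the slight off-by-one in the pseudocode's threshold.
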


\begin{proof}[Proof of Lemma \ref{lem:availability_lower_bound}]
%When Algorithm \ref{alg:ROM_edge_weights}
%solves \ref{LP:config} for $G_t$, we have denoted
%the variables $(\til{x}_{u,v})_{u \in U, v \in V_t}$ defined
%as in Lemma \ref{lem:fixed_vertex_probe} using $(x_{v}(\bm{e}))_{v \in V, \bm{e} \in \scr{C}_v}$
%without ambiguity, despite the dependence on $t$. For this proof, it
%is necessary to be more explicit in our notation, so we denote $\til{x}_{u,v}$
%as $\til{x}^{(t)}_{u,v}$ to indicate that we are working with the edge variable
%from the optimum LP solution involving $G_t$. 

Let us assume that $\lceil \alpha n  \rceil \le t \le n$ is fixed, and $(x_{v}^{(t)}(\bm{e}))_{v \in V, \bm{e} \in \scr{C}_v}$
is the optimum solution of \ref{LP:config} for $G_t$, as used by Algorithm \ref{alg:ROM_edge_weights}.
For each $u \in U$ and $v \in v$, define the \textbf{edge variable} $\til{x}^{(t)}_{u,v}$,
where
\[
	\til{x}^{(t)}_{u,v} := \sum_{\substack{\bm{e}' \in \scr{C}_{v_t}: \\ e \in \bm{e}'}} g(\bm{e}_{ < e}') \cdot x_{v_t}^{(t)}( \bm{e}')
\]
We wish to prove that for each $u \in U$, 
\begin{equation} \label{eqn:conditional_available}
\mb{P}[u \in R_{t}  \, | \, V_{t},v_t] \ge \lfloor \alpha n \rfloor /(t-1).
\end{equation}
As such, we must condition on $(V_t,v_t)$ throughout the remainder of the proof.
To simplify the argument, we abuse notation slightly and remove $(V_t,v_t)$
from the subsequent probability computations, though it is understood to implicitly
appear.

Given arriving node $v_{j}$ for $j=1, \ldots ,n$, once again denote $C(u,v_j)$
as the event in which $v_j$ commits to $u \in U$. As $R_{t}$ denotes the
unmatched nodes after the vertices $v_{1}, \ldots , v_{t-1}$ are processed
by Algorithm \ref{alg:ROM_edge_weights}, observe that
$u \in R_{t}$ if and only if $\neg C(u,v_j)$ occurs for each $j=1, \ldots , t-1$,
and so $\mb{P}[ u \in R_{t} ] = \mb{P}[\cap_{j=1}^{t-1} \neg C(u,v_j)]$.
We therefore focus on lower bounding $\mb{P}[\cap_{j=1}^{t-1} \neg C(u,v_j) ]$ in order to prove
the lemma.

First observe that for $j=1, \ldots , \lfloor \alpha n \rfloor$, the algorithm passes on probing
$\partial(v_j)$ by definition, and so \eqref{eqn:conditional_available} holds
if $t = \lceil \alpha n \rceil$. As such, we may thereby
assume $t \ge \lceil \alpha n \rceil +1$ and focus on lower bounding $\mb{P}[\cap_{j= \lceil \alpha n \rceil}^{t-1} \neg C(u,v_j)]$. Observe that this event depends only on the probes of the vertices of $V_{t-1} \setminus V_{\lfloor \alpha n \rfloor}$.
We denote $\bar{t}:= t - 1 - \lfloor \alpha n \rfloor = t - \lceil \alpha n \rceil$ as the number of vertices within this set.

Let us first consider the vertex $v_{t-1}$,
and the edge variable $\til{x}^{(t-1)}_{u,v}$ for each $v \in V_{t-1}$.
Observe that after applying Lemma \ref{lem:fixed_vertex_probe},
\begin{align*}
	\mb{P}[ C(u,v_{t-1})] &= \sum_{ v \in V_{t-1}} \mb{P}[ C(u,v_{t-1})  \, | \, v_{t-1} = v] \cdot \mb{P}[v_{t-1}=v] \\
											  &= \frac{1}{t-1} \sum_{v \in V_{t-1}} \til{x}_{u,v}^{(t-1)}  p_{u,v},
\end{align*}
as once we condition on the set $V_{t}$
and the vertex $v_t$, we know that $v_{t-1}$ is uniformly distributed amongst $V_{t-1}$.
On the other hand, the values $(\til{x}_{u,v}^{(t-1)})_{u \in U, v \in V_{t-1}}$ are derived
from a solution to \ref{LP:config} for $G_{t-1}$, and so by constraint \eqref{eqn:relaxation_efficiency_matching},
\[
	\sum_{v \in V_{t-1}} \til{x}_{u,v}^{(t-1)}  p_{u,v} \le 1.
\]
We therefore get that $\mb{P}[ C(u,v_{t-1}) ]  \le \frac{1}{t-1}$.
Similarly, if we fix $1 \le k \le \bar{t}$, then we can generalize the above
argument by conditioning
on the identities of all the vertices preceding $v_{t-k}$, as well as the probes
they make; that is, $(u_{t-1},v_{t-1}), \ldots ,(u_{t-(k-1)},v_{t-(k-1)})$ (in addition to $V_t$ and $v_t$ as always). 

In order to simplify the resulting indices, let us reorder the vertices of $V_{t-1} \setminus V_{\lfloor \alpha n \rfloor }$.
Specifically, define $\bar{v}_{k}:=v_{t-k}, \bar{u}_k := u_{t-k}$ and $\bar{e}_k:=e_{t-k}$
for $k=1,\ldots ,\bar{t}$. With this notation, we denote
$\scr{H}_{k}$ as encoding the information available based on the vertices $\bar{v}_1, \ldots , \bar{v}_{k}$ and the edges they (potentially) committed to, namely $\bar{e}_{1}, \ldots ,\bar{e}_{k}$ \footnote{Formally, $\scr{H}_k$ is the sigma-algebra
generated from $V_t,v_t$ and $\bar{e}_{1}, \ldots ,\bar{e}_{k}$.}. By convention,
we define $\scr{H}_{0}$ as encoding the information regarding $V_t$ and $v_t$. 

An analogous computation to the above case then implies that
\[
	\mb{P}[ C(u,\bar{v}_{k})  \, | \, \scr{H}_{k-1}] = \sum_{ v \in V_{t-k}} \til{x}_{u,v}^{(t-k)}  p_{u,v}  \mb{P}[\bar{v}_k=v] \le \frac{1}{t-k},
\]
for each $k=1,\ldots , \bar{t}$, where
$\til{x}_{u,v}^{(t-k)}$ is the edge variable for $v \in V_{t-k}$.

Observe now that in each step, we condition on strictly more information;
that is, $\scr{H}_{k-1} \subseteq \scr{H}_{k}$
for each $k=2, \ldots , \bar{t}$. On the other hand, observe that
if we condition on $\scr{H}_{k-1}$ for $1 \le k \le \bar{t}-1$,
then the event $C(u,\bar{v}_{j})$ can be determined from $\scr{H}_{k-1}$
for each $1 \le j \le k-1$.

Using these observations, for $1 \le k \le \bar{t}$, the following recursion holds:
\begin{align*}
\mb{P}[ \cap_{j=1}^{k} \neg C(u,\bar{v}_j)] 
  &= \mb{E}\left[ \, \mb{E}\left[ \prod_{j=1}^{k} \bm{1}_{ [\neg C(u,\bar{v}_j)]}  \, | \, \scr{H}_{k-1}\right] \right]	\\
&= \mb{E}\left[ \,  \prod_{j=1}^{k-1} \bm{1}_{ [\neg C(u,\bar{v}_j)]} \, \mb{P}[\neg C(u, \bar{v}_k)  \, | \, \scr{H}_{k-1}]\right] \\
&\ge \left(1 - \frac{1}{t-k}\right) \mb{P}[ \cap_{j=1}^{k-1} \neg C(u,\bar{v}_j)]
\end{align*}
It follows that if $k= \bar{t}= t - \lceil \alpha n \rceil$, then applying the above recursion implies that
\[
	\mb{P}[\cap_{j= \lceil \alpha n \rceil}^{t-1} \neg C(u,v_j)] \ge \prod_{k=1}^{t- \lceil \alpha n \rceil} \left(1 - \frac{1}{t-k}\right).
\]
Thus, after cancelling the pairwise products,
\[
	 \mb{P}[ u \in R_{t}] = \mb{P}[\cap_{j= \alpha n}^{t-1}\neg C(u,v_j)] \ge  \frac{\lfloor \alpha n \rfloor}{t-1},
\]
and so \eqref{eqn:conditional_available} holds for all $t \ge \lceil \alpha n \rceil$, thereby completing the argument.
%\[
%		\mb{P}[ u \in R_{t}] = \mb{P}[\cap_{j= \alpha n}^{t-1}\neg C(u,v_j)] \ge \frac{\alpha n}{t-1}
%\]
%thereby completing the argument.

\end{proof}

With these lemmas, together with the efficient solvability of \ref{LP:config},
the proof of Theorem \ref{thm:ROM_edge_weights} follows easily:

\begin{proof}[Proof of Theorem \ref{thm:ROM_edge_weights}]
Clearly, Algorithm \ref{alg:ROM_edge_weights} can be implemented efficiently, since \ref{LP:config} is efficiently solvable,
provided it involves a stochastic graph whose probing constraints are permutation-closed.
Thus, we focus on proving the algorithm attains the desired asymptotic competitive ratio.

Let us consider the matching $\scr{M}$ returned by the algorithm,
as well as its weight, which we denote by $w(\scr{M})$. Set $\alpha:=1/e$ for clarity, and take $t \ge \lceil \alpha n \rceil$,
where we define $R_{t}$ to be the \textit{unmatched vertices} of $U$ when vertex $v_{t}$ arrives. Moreover, define $e_{t}$
as the edge $v_{t}$ commits to, which is the empty-set by definition if no such commitment
is made. Observe that
\begin{equation}\label{eqn:value_of_matching}
   \mb{E}[ w(\scr{M}) ] = \sum_{t=\lceil \alpha n \rceil}^{n} \mb{E}[ w(u_t, v_t) \cdot \bm{1}_{[u_t \in R_t]} ].
\end{equation}
Fix $\lceil \alpha n \rceil \le t \le n$, and first observe that $w(u_t, v_t)$ and $\{ u_{t} \in R_t \}$ are conditionally independent given
$(V_{t},v_t)$, as the probes involving $\partial(v_t)$ are independent from those of $v_{1}, \ldots ,v_{t-1}$. Thus,
\[
\mb{E}[ w(u_t,v_t) \cdot \bm{1}_{[u_t \in R_t]}  \, | \, V_{t}, v_{t}] = \mb{E}[ w(u_t,v_t)  \, | \, V_t, v_t] \cdot \mb{P}[ u_t \in R_t  \, | \, V_{t}, v_t].
\]
Moreover, Lemma \ref{lem:availability_lower_bound} implies that
\[    
   \mb{E}[ w(u_t,v_t)  \, | \, V_t, v_t] \cdot \mb{P}[ u_t \in R_t  \, | \, V_{t}, v_t] \ge 
   \mb{E}[ w(u_t,v_t)  \, | \, V_{t}, v_t] f(t,n),
\]
and so $\mb{E}[ w(u_t,v_t) \, \bm{1}_{[u_t \in R_t]}  \, | \, V_{t}, v_{t}] \ge \mb{E}[ w(u_t,v_t)  \, | \, V_{t}, v_t] \, f(t,n)$.
Thus, by the law of iterated expectations\footnote{$\mb{E}[ w(u_t,v_t) \cdot \bm{1}_{[u_t \in R_t]}  \, | \, V_{t}, v_{t}]$
is a random variable which depends on $V_t$ and $v_t$, and so the outer expectation is over the randomness in $V_t$ and $v_t$.}
\begin{align*}
\mb{E}[ w(u_t,v_t) \cdot \bm{1}_{[u_t \in R_t]} ] &= \mb{E}[ \, \mb{E}[ w(u_t,v_t) \cdot \bm{1}_{[u_t \in R_t]}  \, | \, V_{t}, v_{t}] \, ] \\
			&\ge \mb{E}[ \, \mb{E}[ w(u_t,v_t)  \, | \, V_{t}, v_t]  f(t,n) \, ] 
			= f(t,n) \mb{E}[ w(u_t,v_t)].
\end{align*}
As a result, using \eqref{eqn:value_of_matching}, we get that
\begin{align*}
	\mb{E}[w(\scr{M})] &= \sum_{t=\lceil \alpha n \rceil}^{n} \mb{E}[ w(u_t, v_t) \, \bm{1}_{[u_t \in R_t]} ]
						  \ge \sum_{t=\lceil \alpha n \rceil}^{n} f(t,n) \, \mb{E}[ w(u_t,v_t)].
\end{align*}
We may thus conclude that
\[
    \mb{E}[ w(\scr{M})] \ge \LPOPT_{conf}(G) \sum_{t=\lceil \alpha n \rceil}^{n} \frac{ f(t,n)}{n},
\]
after applying Lemma \ref{lem:edge_value_lower_bound}. 
As $\sum_{t=\lceil \alpha n \rceil}^{n} f(t,n)/n \ge (1/e -1/n)$, the result holds.

\end{proof}

\section{Vertex Weights} \label{sec:vertex_weights}

%In this section, we define Algorithm \ref{alg:dynamical_program} and prove Theorems
%\ref{thm:adversarial} as well as a generalization of Theorem \ref{thm:ROM_unit_patience}.

%e prove all of our competitive  guarantees against the committal benchmark. In Appendix \ref{sec:non_committal_relaxation}, we extend these guarantees to hold against the stronger non-committal
%benchmark for the settings previously claimed. 

Suppose that $G=(U,V,E)$ is a vertex weighted stochastic graph
with weights $(w_u)_{u \in U}$.
Note that it will be convenient to denote $w_{u,v}:= w_{u}$ provided $(u,v) \in \partial(v)$ for $v \in V$.
Let us now fix $s \in V$, and recall that $\val(\bm{e})$
is the expected weight of the edge matched, provided the edges
of $\bm{e}$ are probed in order, where $\bm{e} \in \scr{C}_{s}$.
Observe then the following claim, which builds upon the work of Brubach et al. \cite{Brubach2019}, and
before that, Purohit et al. \cite{Purohit2019}:
\begin{theorem}\label{thm:efficient_star_dp}
There exists an dynamical programming (DP) based algorithm \textsc{DP-OPT}, which
given access to $G[ \{s\} \cup U]$, computes a tuple $\bm{e}' \in \scr{C}_{s}$, such
that $\OPT(s,U) = \val(\bm{e}')$. Moreover, if $\scr{C}_s$ is closed
under substrings and permutations, then \textsc{DP-OPT}
is efficient, assuming access to a membership oracle for $\scr{C}_s$.
\end{theorem}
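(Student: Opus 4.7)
The plan is to reduce offline probing on $(\{s\}, U)$ to non-adaptive sequence selection, set up a Bellman-style DP that proves the first part of the theorem, and then use an exchange argument enabled by permutation-closure to implement the DP efficiently via the membership oracle.

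I would first argue that any offline probing strategy for the single-vertex instance is equivalent to committing in advance to some $\bm{e}' \in \scr{C}_s$ and probing in order until an active edge appears or the sequence is exhausted. The key observation is that the only stochastic feedback a probing algorithm can receive before termination is ``the last probed edge was inactive'', so along any non-terminating history the observed information is deterministic and every adaptive choice may as well be fixed in advance. Substring-closure of $\scr{C}_s$ ensures each prefix of the committed sequence remains feasible. Consequently $\OPT(s,U) = \max_{\bm{e} \in \scr{C}_s} \val(\bm{e})$. Using the decomposition $\val(\bm{e}' \cdot e) = \val(\bm{e}') + g(\bm{e}')\, p_e w_e$, define the value-to-go $V(\bm{e}') := \max\{\val(\bm{f}) : \bm{e}' \cdot \bm{f} \in \scr{C}_s\}$ satisfying
\[
V(\bm{e}') \;=\; \max\Bigl\{\,0,\; \max_{e:\; \bm{e}' \cdot e \,\in\, \scr{C}_s} p_e w_e + (1-p_e)\, V(\bm{e}' \cdot e)\Bigr\},
\]
and compute $V(\lambda) = \OPT(s,U)$ by memoized tree search; tracing the argmax at each node recovers an optimal tuple $\bm{e}'$. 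This proves the first part, even though the state space may be as large as $|\scr{C}_s|$ in general.

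For the efficient implementation under permutation-closure, I would use an exchange argument to sort the optimum by weight. If $\bm{e} \in \scr{C}_s$ has $e_i = (u,s)$ and $e_{i+1} = (u',s)$ with $w_u < w_{u'}$, then permutation-closure lets us swap positions $i$ and $i+1$ while remaining in $\scr{C}_s$, and a direct calculation (with every other position's contribution preserved) shows that $\val$ strictly increases by
\[
g(\bm{e}_{<i})\, p_{u,s}\, p_{u',s}\, (w_{u'} - w_u) \;>\; 0.
\]
Hence some optimal sequence lists offline vertices in non-increasing order of $w_u$. Sort $U$ so that $w_{u_1} \ge \cdots \ge w_{u_n}$ and restrict the DP to extend a sequence along this fixed order: at step $i$, decide only whether to append $u_i$. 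Since $\scr{C}_s$ is permutation-closed, feasibility of any candidate prefix is determined by its underlying set and is testable with a single membership query.

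The main obstacle I anticipate is controlling the number of surviving DP states, since the feasibility of future extensions still depends on which subset has been chosen so far. To handle this I would, following \cite{Brubach2019,Purohit2019}, parametrize each state by a succinct feasibility descriptor (for patience constraints, the number of probes committed; for matroidal or budgetary constraints, the corresponding residual side information obtainable through the oracle) and prune by domination: among states with identical residual feasibility, one need only retain the state maximizing expected value accumulated and minimizing the survival product $g$. This collapses the state space to a polynomial size for the downward-closed constraint classes under consideration, yielding the efficient membership-oracle implementation of \textsc{DP-OPT} claimed in the theorem.
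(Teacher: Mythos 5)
Your proposal follows essentially the same route as the paper: reduce the committal benchmark on $G[\{s\}\cup U]$ to committing to a single non-adaptive sequence in $\scr{C}_s$ (the only feedback before the process stops is ``inactive''), use permutation-closure to argue that reordering a feasible sequence in non-increasing order of weights cannot decrease $\val$, and then run a dynamic program along this fixed weight order. Your adjacent-swap computation, with gain $g(\bm{e}_{<i})\,p_{u,s}\,p_{u',s}\,(w_{u'}-w_u)$, is an explicit justification of the sorting step that the paper simply asserts, and is correct (the survival factor for later positions is unchanged since $(1-p_{u,s})(1-p_{u',s})$ is symmetric).

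The difference lies in how the efficiency claim is finished. The paper phrases the DP as a value-to-go recursion \eqref{eqn:dynamical_program} over residual constraint families $\scr{I}_s^{>i}$ and takes \textsc{DP-OPT} to be a memoized implementation of it; you instead run a forward DP and confront the state-explosion issue directly, resolving it only for specific constraint classes (patience, budget, matroid) via succinct residual descriptors. Two comments. First, as written your conclusion is weaker than the statement being proved, which asserts efficiency for every substring- and permutation-closed $\scr{C}_s$ given a membership oracle; your caution is not unreasonable (the paper is terse on exactly this point, since the residual family after a chosen prefix depends in general on the whole chosen set), but you should either prove the polynomial bound on the number of states for the classes you handle, or say explicitly that you establish the claim only for those classes. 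Second, your domination rule is stated backwards: among states with identical residual feasibility, the value still to be collected is multiplied by the survival probability $g$, so one state dominates another only if it has both larger accumulated value and larger $g$, not smaller $g$. This bookkeeping disappears if you adopt the value-to-go formulation used in the paper: the optimal continuation value depends only on the residual feasible family, so a state need only record that family (or its succinct descriptor), and no Pareto pruning over accumulated value and $g$ is required.
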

\begin{proof}[Proof of Theorem \ref{thm:efficient_star_dp}]
It will also be convenient to denote $w_{u,s}:=w_{u}$ for each $u \in U$
such that $(u,s) \in \partial(s)$.

We first must show that there exists some $\bm{e}' \in \scr{C}_{s}$
such that $\val(\bm{e}') = \OPT(s,U)$, where 
\begin{equation}\label{eqn:expected_value_of_edge_probes}
	\val(\bm{e}):= \sum_{i=1}^{|\bm{e}|} p_{e_i} w_{e_i} \prod_{j=1}^{i-1} (1 - p_{e_i}),
\end{equation}
for $\bm{e} \in \scr{C}_s$, and $\OPT(s,U)$ is the value of the committal benchmark on $G[\{s\}\cup U]$.
Since the committal benchmark must respect commitment -- i.e., match the first
edge to $s$ which it reveals to be active -- it is clear that $\bm{e}'$ exists.

Let us now additionally assume that $\scr{C}_s$ is also closed under permutations.
Our goal is to show that $\bm{e}'$ can be computed efficiently.
Now, for any $\bm{e} \in \scr{C}_s$, let $\bm{e}^{r}$ be the rearrangement of $\bm{e}$, based on the non-increasing order
of the weights $(w_{e})_{e \in \bm{e}}$. Since $\scr{C}_s$ is closed under permutations,
we know that $\bm{e}^{r}$ is also in $\scr{C}_s$. Moreover, $\val(\bm{e}^{r}) \ge \val(\bm{e})$.
Hence, let us order the edges of $\partial(s)$ as $e_{1}, \ldots ,e_{m}$,
such that $w_{e_1} \ge \ldots \ge  w_{e_m}$, where $m:=|\partial(s)|$.
Observe then that it suffices to maximize \eqref{eqn:expected_value_of_edge_probes} over
those strings within $\scr{C}_s$ which respect this ordering on $\partial(s)$.
Stated differently, let us denote $\scr{I}_{s}$ as the family of subsets of $\partial(s)$
induced by $\scr{C}_s$, and define the set function $f: 2^{\partial(s)} \rightarrow [0, \infty)$,
where $f(B):= \val(\bm{b})$ for $B=\{b_{1}, \ldots ,b_{|B|}\} \subseteq \partial(s)$,
such that $\bm{b}=(b_{1}, \ldots ,b_{|B|})$ and $w_{b_1} \ge \ldots \ge w_{b_{|B|}}$.
Our goal is then to efficiently maximize $f$ over the set-system $(\partial(s),\scr{I}_s)$.
Observe that since $\scr{C}_s$ is both substring-closed and permutation-closed, 
$\scr{I}_s$ is downward closed. Moreover, clearly we can simulate oracle access to
$\scr{I}_s$, based on our oracle access to $\scr{C}_s$.

For each $i=0, \ldots ,m-1$, denote $\partial(s)^{>i}:=\{e_{i+1}, \ldots ,e_{m}\}$,
and $\partial(s)^{>m}:= \emptyset$. Moreover, define the family of subsets $\scr{I}_{s}^{>i}:= \{B \subseteq \partial(s)^{>i} : B \cup \{e_i\} \in \scr{I}_s\}$ for each $2 \le i \le m$, and $\scr{I}_{s}^{>0}:= \scr{I}_s$. Observe then that
$(\partial(s)^{>i}, \scr{I}_{s}^{>i})$ is a downward-closed set system, as $\scr{I}_s$ is downward-closed.
Moreover, we may simulate oracle access to $\scr{I}^{>i}_{s}$ based on our oracle access to $\scr{I}_s$.

Denote $\OPT(\scr{I}_{s}^{>i})$ as the maximum value of $f$ over constraints $\scr{I}_{s}^{>i}$.
Observe then the following recursion:
\begin{equation} \label{eqn:dynamical_program}
	\OPT(\scr{I}_{s}) :=  
			\max_{i \in [m]} ( p_{e_i} \cdot w_{e_i} + (1 - p_{e_i}) \cdot \OPT(\scr{I}_{s}^{>i}) )
\end{equation}
Hence, given access to the values $\OPT(\scr{I}_{s}^{>1}), \ldots , \OPT(\scr{I}_{s}^{>m})$,
we can compute $\OPT(\scr{I}_s)$ efficiently. In fact, it is clear that we can use \eqref{eqn:dynamical_program}
to recover an optimum solution to $f$, and so the proof follows by an inductive argument on $|\partial(s)|$.
We can define \textsc{DP-OPT} to be a memoization based implementation of \eqref{eqn:dynamical_program}.
\end{proof}

Given $R \subseteq U$, consider the induced stochastic graph,
$G[\{s\} \cup R]$ for $R \subseteq U$ which has probing constraint $\scr{C}_{s}^{R} \subseteq \scr{C}_v$,
constructed by restricting $\scr{C}_s$ to those strings whose
entries all lie in $R \times \{s\}$. Moreover, denote
the output of executing \textsc{DP-OPT} on $G[\{s\} \cup R]$ by $\textsc{DP-OPT}(s,R)$. Consider now the following online probing
algorithm, where we assume the online vertices of $G$ arrive in an adversarially chosen unknown order $v_{1}, \ldots ,v_{n}$,
where $n:=|V|$.
\begin{algorithm} 
\caption{Greedy-DP}\label{alg:dynamical_program}
\begin{algorithmic}[1] 
\Require offline vertices $U$ with vertex weights $(w_{u})_{u \in U}$.
\Ensure a matching $\scr{M}$ of active edges of the unknown stochastic graph $G=(U,V,E)$.
\State $\scr{M} \leftarrow \emptyset$.
\State $R \leftarrow U$.
\For{$t=1, \ldots , n$}
\State Let $v_{t}$ be the current online arrival node, with constraint $\scr{C}_{v_t}$ and edges probabilities $(p_{e})_{e \in \partial(v_t)}$.
\State Set $\bm{e}  \leftarrow \textsc{DP-OPT}(v_t,R)$
\For{$i=1, \ldots , |\bm{e}|$}
\State Probe $e_i$.
\If{$\sta(e_i) =1$}
\State Add $e_i$ to $\scr{M}$, and update $R \leftarrow R \setminus \{u_i\}$,
where $e_{i}=(u_i,v_t)$.
\EndIf
\EndFor
\EndFor
\State \Return $\scr{M}$.
\end{algorithmic}
\end{algorithm}

In general, the behaviour of $\textsc{OPT}(s,R)$ can change very much, even for minor changes to $R$.
For instance, if $R= U$, then $\textsc{OPT}(s,U)$ may probe $(u,s)$ first -- thus giving it highest priority -- whereas by removing $u^* \in U$ from $U$ (where $u^* \neq u$),
$\textsc{OPT}(s, U \setminus \{u^*\})$ may not probe $(u,v)$ at all:
\begin{example} \label{example:bad_OPT_behaviour}
Let $G = (U, V, E)$ be a bipartite graph with $U = \{u_1, u_2, u_3, u_4\}$, $V = \{v\}$ and $\ell_v =2$. 
Set $p_{u_1, v} = 1/3$, $p_{u_2, v} = 1$, $p_{u_3, v} = 1/2$, $p_{u_4, v} = 2/3$.
Fix $\eps > 0$, and let the weights of offline vertices be $w_{u_1} = 1 + \eps$, $w_{u_2} = 1 + \eps/2$, $w_{u_3} = w_{u_4} = 1$. 
We assume that $\eps$ is sufficiently small -- concretely, $\eps \le 1/12$.
If $R_{1}:= U$, then $\OPT(v,R_1)$ probes $(u_{1},v)$ and then $(u_{2},v)$ in order. On the other hand,
if $R_{2} = R_{1} \setminus \{v_{2}\}$, then $\OPT(v,R_{2})$ does \textit{not} probe $(u_{1},v)$. Specifically,
$\OPT(v,R_{2})$ probes $(u_3,v)$ and then $(u_4,v)$.
\end{example}
While this behaviour isn't problematic in the case of adversarial
arrivals, we must restrict our attention to executions of Algorithm \ref{alg:dynamical_program}
which are less \textit{adaptive} for our primal-dual proof to work in the case of ROM arrivals. 

Given a vertex $v \in V$, and an ordering $\pi_v$ on $\partial(v)$,
if $R \subseteq U$, then define $\pi_{v}(R)$ to be the longest\footnote{Given $\bm{e}'$ after processing $e_1,\ldots ,e_i$
via ordering $\pi_v$, append $e_{i+1}$ if $(\bm{e}',e_{i+1}) \in \scr{C}_{v}^{R}$, else move to $e_{i+2}$.} string constructible by iteratively appending the edges of $R \times \{v\}$ via $\pi_v$, subject to respecting constraint $\scr{C}^{R}_v$. We say that $v$ is \textbf{rankable}, provided there exists a choice of $\pi_{v}$ which depends \textit{solely} on $(p_e)_{e \in \partial(v)}$, $(w_{e})_{e \in \partial(v)}$ and $\scr{C}_v$, such that for \textit{every} $R \subseteq U$, the strings $\textsc{DP-OPT}(v,R)$ and $\pi_{v}(R)$
are equal. Crucially, if $v$ is rankable, then when vertex $v$ arrives while executing Algorithm \ref{alg:dynamical_program}, one can compute the ranking $\pi_{v}$ on $\partial(v)$ and probe the adjacent edges of $R \times \{v\}$  based on this order, subject to not violating the constraint $\scr{C}_{v}^{R}$. By following this probing strategy, the optimality of \textsc{DP-OPT} ensures that the expected weight of the match made to $v$ will be $\OPT(v,R)$.
We consider three (non-exhaustive) examples of rankability:
\begin{proposition} \label{prop:rankability}
Let $G=(U,V,E)$ be a stochastic graph, and suppose that $v \in V$.
If $v$ satisfies either of the following conditions, then $v$ is rankable:
\begin{enumerate}
\item $v$ has unit patience or unlimited patience; that is, $\ell_v \in \{1, |U|\}$. \label{eqn:rankability_examples_patience}
\item $v$ has patience $\ell_v$, and for each $u_{1}, u_{2} \in U$, if $p_{u_{1},v} \le p_{u_{2},v}$ then $w_{u_{1}} \le w_{u_{2}}$.
\item $G$ is unweighted, and $v$ has a budget\footnote{In the case of a budget $B_v$ and edge probing costs $(c_v)_{v \in V}$, 
any subset of $\partial(v)$ may be probed, provided its cumulative cost does not exceed $B_v$.} $B_v$ with edge probing costs $(c_{u,v})_{u \in U}$, and for each $u_1, u_2 \in U$, if $p_{u_1,v} \le p_{u_{2},v}$ then $c_{u_1,v} \ge c_{u_{2},v}$.
\end{enumerate}
\end{proposition}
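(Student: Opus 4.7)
The plan is to exhibit, for each of the three cases, a fixed ordering $\pi_v$ of $\partial(v)$ depending only on $(p_e)_{e \in \partial(v)}$, $(w_e)_{e \in \partial(v)}$, and $\scr{C}_v$, and verify that for every $R \subseteq U$, $\pi_v(R)$ coincides with the string $\textsc{DP-OPT}(v, R)$. Throughout I will use the observation from the proof of Theorem \ref{thm:efficient_star_dp} that $\textsc{DP-OPT}(v, R)$ may always be taken to output its edges in non-increasing order of weight. Case (1) is immediate: for unit patience, order $\pi_v$ by non-increasing $p_{u,v} w_u$, so that $\pi_v(R)$ is the single top element of $R \times \{v\}$ and agrees with the unique DP choice; for unlimited patience, order $\pi_v$ by non-increasing $w_u$, and the rearrangement step from Theorem \ref{thm:efficient_star_dp} directly gives $\textsc{DP-OPT}(v, R) = \pi_v(R)$.

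For case (2), take $\pi_v$ to order $\partial(v)$ by non-increasing $w_u$. I aim to show that under comonotonicity, the optimal size-at-most-$\ell_v$ subset of $R \times \{v\}$ maximizing $f$ is exactly the top $\min(\ell_v, |R|)$ edges by weight, which then matches $\pi_v(R)$. The proof proceeds by an exchange argument: if $B^*$ is optimal but is not the top-$\ell_v$ prefix, pick a low-weight $e_i \in B^*$ not in the prefix and a higher-weight $e_j$ in the prefix but not in $B^*$, so that $w_j \ge w_i$ and $p_j \ge p_i$ by comonotonicity. I plan to reduce the general swap to the base case in which $e_j$ occupies the same sorted position $t$ as $e_i$ did in $B^*$. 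In this base case a direct computation gives the value difference as $P \cdot \bigl[p_j(w_j - B) - p_i(w_i - B)\bigr]$, where $P$ is the probability that all items before position $t$ are inactive and $B$ is the expected value from probing the suffix of $B^*$ past position $t$. The crux is the bound
\[
    B \le \max_{r > t} w_{c_r} \le w_i \le w_j,
\]
which forces both $w_i - B$ and $w_j - B$ to be non-negative, so together with $p_j \ge p_i$ and $w_j \ge w_i$ the value difference is non-negative. For non-adjacent swaps in which $e_j$ slots strictly earlier than $e_i$, I decompose into a sequence of same-position swaps within the sorted order.

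For case (3), since $G$ is unweighted, $\val(\bm{e}) = 1 - \prod_i(1 - p_{e_i})$ depends only on the \emph{set} of probed edges, reducing $\textsc{DP-OPT}(v, R)$ to choosing $S \subseteq R \times \{v\}$ that maximizes $\sum_{e \in S} \bigl(-\log(1 - p_e)\bigr)$ subject to $\sum_{e \in S} c_e \le B_v$. Order $\pi_v$ by non-increasing $p_{u,v}$, which by hypothesis is the same as ordering by non-decreasing $c_{u,v}$. If $S^*$ omits a prefix element of $\pi_v(R)$ but contains some later element, then swapping these two elements simultaneously decreases the total cost and increases $\sum_{e \in S}(-\log(1-p_e))$; iterating forces $S^* = \pi_v(R)$. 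The main obstacle across all three cases is the swap analysis in case (2), where the sorted ordering of $B^*$ can shift discontinuously as items are exchanged; the inequality $B \le w_i$ on the continuation value is the key fact that makes the exchange argument go through.
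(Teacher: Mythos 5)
The paper states Proposition \ref{prop:rankability} without giving a proof, so there is nothing to compare against line by line; judged on its own, your argument is essentially correct and supplies the missing details. Case (1) is indeed immediate, and your case (3) reduction is sound: under the hypothesis, sorting by non-increasing $p_{u,v}$ makes the costs non-decreasing, so the greedy string $\pi_v(R)$ is simply the longest affordable prefix, and your swap (replace a later, costlier, lower-probability element of $S^*$ by an omitted earlier one) weakly lowers cost and weakly raises $1-\prod(1-p_e)$; note that here ties cause no trouble because equal probabilities force equal costs. For case (2), your key computation is right: with $P$ the survival probability of the prefix and $B$ the continuation value of the suffix, the same-position replacement changes the value by $P\bigl[p_j(w_j-B)-p_i(w_i-B)\bigr]$, and $B\le\max_{r>t}w_{c_r}\le w_i\le w_j$ makes this non-negative when $p_j\ge p_i$. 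You can also dispense with the decomposition into a sequence of same-position swaps: a single replacement of $e_i$ by $e_j$ in position $t$ followed by the rearrangement inequality $\val(\bm{e}^r)\ge\val(\bm{e})$ (already established in the proof of Theorem \ref{thm:efficient_star_dp}) bounds the sorted value of the new set from below, which is all the exchange needs.

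The one genuine gap is the tie-breaking in case (2). You define $\pi_v$ by non-increasing $w_u$ alone and assert that for $e_j$ in the prefix and $e_i$ outside it one gets $p_j\ge p_i$ ``by comonotonicity.'' Comonotonicity only gives this when $w_j>w_i$; if $w_j=w_i$ the hypothesis ($p_{u_1}\le p_{u_2}\Rightarrow w_{u_1}\le w_{u_2}$) imposes no relation between $p_i$ and $p_j$. Concretely, with $\ell_v=1$ and two offline vertices of equal weight $w$ but probabilities $0.9$ and $0.1$, an arbitrary weight-only tie-break can make $\pi_v(R)$ probe the $0.1$ edge while $\textsc{DP-OPT}(v,R)$ probes the $0.9$ edge, so the claimed identity fails and the value is strictly worse. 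The fix is one line: define $\pi_v$ by non-increasing $w_u$ with ties broken by non-increasing $p_{u,v}$; then any prefix element $e_j$ versus non-prefix element $e_i$ satisfies $w_j\ge w_i$ and $p_j\ge p_i$, and your exchange argument goes through verbatim.
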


%\item The edge probabilities $(p_{u,v})_{u \in U}$ are non-decreasing with respect to the vertex weights $(w_{u})_{u \in U}$;
%that is, for each $u_{1}, u_{2} \in U$, if $p_{u_{1},v} \le p_{u_{2},v}$ then $w_{u_{1}} \le w_{u_{2}}$.

We refer to the stochastic graph $G$ as \textbf{rankable}, provided all of its
vertices are themselves rankable. We emphasize that distinct vertices of $V$
may each use their own separate rankings of their adjacent edges. 

%In particular,
%the vertices of $V$ may each fall into separate cases of Proposition \ref{prop:rankability}.
\begin{theorem}\label{thm:ROM_rankable}
Suppose Algorithm \ref{alg:dynamical_program} returns the matching $\scr{M}$ when
executing on the rankable stochastic graph $G=(U,V,E)$ with prefix-closed constraints $(\scr{C}_v)_{v \in V}$. 
In this case,
\[
\mb{E}[ w(\scr{M}) ] \ge  \left(1 - \frac{1}{e} \right) \cdot \OPT(G),
\]
provided the vertices of $V$ arrive $u.a.r.$. Algorithm \ref{alg:dynamical_program}
can be implemented efficiently, provided the constraints $(\scr{C}_v)_{v \in V}$ are also permutation-closed.
\end{theorem}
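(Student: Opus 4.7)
The plan is to adapt the randomized primal-dual analysis of Devanur, Jain, and Kleinberg to the rankable stochastic setting, using \ref{LP:config} as the benchmark relaxation. By Theorem \ref{thm:new_LP_relaxation} we have $\OPT(G) \le \cLPOPT(G)$, so it suffices to compare $\mb{E}[w(\scr{M})]$ against the LP. I would work with the dual of \ref{LP:config}, which has variables $\alpha_u \ge 0$ for $u \in U$ and $\beta_v$ for $v \in V$, and one constraint per $(v,\bm{e})$ with $\bm{e} \in \scr{C}_v$:
\[
\sum_{(u,v) \in \bm{e}} p_{u,v}\, g(\bm{e}_{< (u,v)})\, \alpha_u + \beta_v \ge \val(\bm{e}).
\]

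The plan is to define \emph{random} dual values tied to the execution of Greedy-DP (Algorithm \ref{alg:dynamical_program}). Whenever the algorithm produces a match $(u,v_t)$, with $v_t$ at normalized position $y_t = t/n$, increment $\alpha_u$ by $w_u\, e^{y_t - 1}/(1-1/e)$ and $\beta_{v_t}$ by $w_u (1 - e^{y_t - 1})/(1-1/e)$. By construction,
\[
\mb{E}\!\left[\sum_{u} \alpha_u + \sum_v \beta_v\right] = \frac{\mb{E}[w(\scr{M})]}{1 - 1/e},
\]
so once we verify that $\mb{E}[\alpha_u]$ and $\mb{E}[\beta_v]$ form a feasible dual (in expectation), weak duality yields $\mb{E}[w(\scr{M})] \ge (1 - 1/e)\, \cLPOPT(G) \ge (1 - 1/e)\, \OPT(G)$.

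The crux is verifying this approximate dual feasibility in expectation. Here is where rankability is essential: for every $v \in V$ the ordering $\pi_v$ depends only on $v$'s local data, so when $v$ arrives at time $t$ with unmatched set $R_t$, the probe sequence $\pi_v(R_t)$ is a deterministic function of $R_t$, and by Theorem \ref{thm:efficient_star_dp} its expected gain equals $\OPT(v, R_t)$. To bound $\mb{E}[\beta_v]$ from below, I would fix an arbitrary $\bm{e} \in \scr{C}_v$ and compare $\OPT(v, R_t)$ to the value of probing $\bm{e}$ restricted to $R_t$ (in its $\pi_v$-order if $\scr{C}_v$ is permutation-closed, or via a direct exchange argument in the substring-only case). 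Any term of $\val(\bm{e})$ corresponding to a vertex $u \notin R_t$ should be absorbed by $\mb{E}[\alpha_u]$, which is precisely what the exponential split is designed for: conditional on the time $s$ at which $u$ was first matched, the $\alpha_u$ contribution is $w_u e^{s/n - 1}/(1-1/e)$, so the dual inequality reduces to a pointwise estimate of the form $w_u(1 - e^{y-1}) + \int_0^y e^{s-1}\, w_u\, d\mb{P}[u \text{ matched by time } sn \mid \cdot\,]\ \ge\ (1-1/e)\, w_u\, \mb{P}[u \in R_t]$, which is the standard ODE-type bound solved by $F(y) = e^{y-1}$.

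The main obstacle will be controlling the conditional distribution of $R_t$ and of the matching time of $u$ under the joint randomness of the ROM arrival and the stochastic probes. The plan is to mimic the inductive argument behind Lemma \ref{lem:availability_lower_bound}: condition on the identities $V_t$ and on the history $\scr{H}_{t-1}$ of all probes made before time $t$, exploit that $v_t$ is uniform over $V_t$, and use the $u$-side constraint \eqref{eqn:relaxation_efficiency_matching} applied to a suitable ``restricted'' LP-config optimum on $G[R_t \cup V_t]$ to obtain a factor $1/(t-1)$ per step. This yields the desired exponential-in-$t/n$ lower bound on the survival probability of each $u \in U$, completing the primal-dual inequality. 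Efficiency is immediate from Theorem \ref{thm:efficient_star_dp} once $\scr{C}_v$ is permutation-closed, as $\pi_v$ can then be extracted from \textsc{DP-OPT} locally at $v$.
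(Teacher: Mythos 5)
There is a genuine gap at the heart of your feasibility argument. You reduce dual feasibility to a ``pointwise estimate'' that involves the conditional distribution of the time at which a fixed offline vertex $u$ gets matched, and you propose to control $\mb{P}[u \in R_t]$ by mimicking the induction behind Lemma \ref{lem:availability_lower_bound} with a $1/(t-1)$ bound per step. That bound is not available here: in Lemma \ref{lem:availability_lower_bound} the per-arrival commit probability to $u$ is at most $1/(t-1)$ only because Algorithm \ref{alg:ROM_edge_weights} probes according to an optimum \ref{LP:config} solution for $G_{t}$, whose constraint \eqref{eqn:relaxation_efficiency_matching} caps $\sum_{v\in V_{t}}\til{x}_{u,v}p_{u,v}$ by $1$. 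Greedy-DP does not probe according to any LP solution --- if $w_{u}$ is very large, essentially every arrival may probe $(u,v)$ first --- so no such survival lower bound holds, and an ``LP-config optimum on $G[R_t\cup V_t]$'' has nothing to do with what the algorithm actually does. Indeed, the whole point of the primal--dual scheme is to avoid ever needing the law of $u$'s matching time. Your choice of split compounds this: you give $u$ the share $g(y)=e^{y-1}$ and the online vertex $1-g(y)$, which forces an integration over $u$'s (unknown) matching-time distribution; the scheme that works when the randomness sits on the online arrival times gives $u$ the share $1-g(Y_{v})$ and the online vertex $g(Y_{v})$, so that $\alpha_{u_0}$ can be bounded below deterministically by monotonicity of $g$ and only $Y_{v_0}$ is integrated.

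The missing idea is the alternative-execution (critical time) argument, and it is exactly where rankability is needed --- not where you invoke it. Run the algorithm on $\til{G}=G-v_0$ coupled with the run on $G$ (same edge states, same arrival times), let $\til{Y}_c$ be the time $u_0$ is matched in the $\til{G}$-run (or $1$ if never), and prove $R^{\text{af}}_{v}(G)\subseteq R^{\text{af}}_{v}(\til{G})$ for all $v\neq v_0$. This monotonicity is what fails for non-rankable graphs (Example \ref{example:bad_OPT_behaviour}) and is the only place rankability enters; the fact you cite --- that upon arrival the probe sequence is a deterministic function of local data and earns $\OPT(v,R_t)$ in expectation --- is already guaranteed by \textsc{DP-OPT} even without rankability, so your proposal never isolates why the hypothesis is needed. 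With the coupling, $\alpha_{u_0}\ge \frac{w_{u_0}}{F}(1-g(\til{Y}_c))$ and the online side collects $\frac{1}{F}\int_{0}^{\til{Y}_c}g(z)\,dz$, summing to $1-1/e$. Finally, note the paper runs this argument against \ref{LP:DP}, whose constraints $\sum_{u\in R}w_u p_{u,v}x_{u,v}\le \OPT(v,R)$ come with dual variables $\phi_{v,R}$ charged $w_u\,g(Y_v)/(F\cdot\OPT(v,R))$, so the greedy per-arrival gain $\OPT(v_0,R_{v_0})$ cancels the normalization exactly; with the single $\beta_{v_0}$ of the \ref{LP:config} dual you must additionally compare $\OPT(v_0,R)$ with $\val(\bm{e})$ for every string $\bm{e}\in\scr{C}_{v_0}$ and absorb the terms of vertices outside $R$, which is precisely the part your sketch leaves unproven.
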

\begin{remark}
Proposition \ref{prop:rankability} includes the unit patience setting of Theorem \ref{thm:ROM_unit_patience}, 
as well as when $G$ is unweighted and has arbitrary patience values, and so we focus on proving Theorem \ref{thm:ROM_rankable} for the remainder of the section.
\end{remark}

In order to show that Algorithm \ref{alg:dynamical_program} attains
the claimed competitive ratios, we upper bound $\OPT(G)$
using an LP relaxation which accounts for arbitrary probing constraints.
For each $u \in U$ and $v \in V$, let $x_{u,v}$ be a decision variable
corresponding to the probability that $\OPT(G)$ probes the edge $(u,v)$.
\begin{align}\label{LP:DP}
\tag{LP-DP}
& \text{maximize}  & \sum_{u \in U} \sum_{v \in V} w_u \cdot p_{u, v}  \cdot x_{u, v}\\
&\text{subject to} &\, \sum_{v \in V}  p_{u, v} \cdot x_{u, v} &\leq 1 &&\forall u \in U       \label{eqn:dp_matching_constraint} \\
&  &\sum_{u \in R} w_u \cdot p_{u,v} \cdot x_{u,v} & \le  \OPT(v,R)  && \forall v \in V, \, R \subseteq U	\label{eqn:OPT_constraint}\\ 
&  &x_{u, v} & \ge 0 && \forall u \in U, v \in V		\label{eqn:edge_upper_bound}
\end{align}
Denote $\dLPOPT(G)$ as the optimum value of this LP. 
Constraint \eqref{eqn:dp_matching_constraint} can be viewed as ensuring that
the expected number of matches made to $u \in U$ is at most $1$. Similarly,
\eqref{eqn:OPT_constraint} can be interpreted as ensuring
that expected stochastic reward of $v$, suggested by the solution $(x_{u,v})_{u \in U,v \in V}$,
is actually attainable by the committal benchmark. 
Thus, $\OPT(G) \le \dLPOPT(G)$ (a formal proof specific to patience
values is proven in \cite{Brubach2019}).
%\begin{remark}
%The performance guarantee of Theorem \ref{thm:ROM_rankable} is proven
%against \ref{LP:DP}, and includes the case when $G$ is unweighted
%with arbitrary patience values. This implies that the $0.544$ unweighted
%inapproximation result of Brubach et al. \cite{Brubach2019} with respect to \ref{LP:standard_benchmark} does not
%hold against \ref{LP:DP}. 
%\end{remark}

%\begin{proof}[Proof of Theorem \ref{thm:adversarial}]
%As Theorem \ref{thm:efficient_star_dp}
%ensures that Algorithm \ref{alg:dynamical_program} is efficient,
%it suffices to show that it attains a competitive ratio of $1/2$
%assuming adversarial arrivals. Observe
%that if $\scr{M}$ is the matching returned by the algorithm upon
%executing on $G$ for an adversarially chosen arrival order of
%$V$, then it suffices to show that $\mb{E}[w(\scr{M})] \ge \frac{1}{2} \cdot \dLPOPT(G)$.
%We defer the details of how this can be done, as the argument
%proceeds almost identically as in \cite{Brubach2019}.
%\end{proof}

\subsubsection{Defining the Primal-Dual Charging Schemes} \label{sec:charging_scheme}

In order to prove Theorems \ref{thm:adversarial} and \ref{thm:ROM_rankable}, we employ primal-dual charging arguments
based on the dual of \ref{LP:DP}. For each $u \in U$, define the variable $\alpha_u$. Moreover, for each $R \subseteq U$
and $v \in V$, define the variable $\phi_{v,R}$ (these latter variables correspond to constraint \eqref{eqn:OPT_constraint}). 
\begin{align} \label{LP:DP_dual}
\tag{LP-dual-DP}
&\text{minimize}&   \sum_{u \in U} \alpha_u + \sum_{v \in V}\sum_{R \subseteq U}\OPT(v,R) \cdot \phi_{v,R}	\\
&\text{subject to}&  \: p_{u, v} \cdot \alpha_u  + \sum_{\substack{R \subseteq U: \\ u \in R}} w_u \cdot p_{u,v} \cdot \phi_{v,R} &\geq  w_u \cdot p_{u, v} && \forall u \in U, v \in V\\
& &   \alpha_u &\geq 0 && \forall u \in U\\
& &     \phi_{v,R} & \ge 0 && \forall v \in V, R \subseteq U
\end{align}
The dual-fitting argument used to prove Theorem \ref{thm:ROM_rankable} has
an initial set-up which proceeds as in Devanur et al. \cite{DJK2013}.
Specifically, let $F:= 1 -1/e$ and define $g: [0,1] \rightarrow [0,1]$ 
where $g(z):= \exp(z-1)$ for $z \in [0,1]$. For each $v \in V$, draw $Y_{v} \in [0,1]$ independently and uniformly at random. We assume that the vertices of $V$ are presented to Algorithm \ref{alg:dynamical_program} in a non-decreasing order, based on the values of $(Y_{v})_{v \in V}$.

We now describe how the charging assignments are made while Algorithm \ref{alg:dynamical_program} executes
on $G$.  Firstly, we initialize a dual solution $((\alpha_u)_{u \in U}, (\phi_{v,R})_{v \in V, R \subseteq U})$ where all the variables are set equal to $0$. Let us now take $v \in V, u \in U$, and $R \subseteq U$, where $u \in R$.
If $R$ consists of the unmatched vertices of $v$ when it arrives at time $Y_{v}$,
then suppose that Algorithm \ref{alg:dynamical_program} matches $v$ to $u$ while making its probes to
a subset of the edges of $R \times \{v\}$. In this case, we \textbf{charge} $w_{u} \cdot (1 - g(Y_{v}))/F$ to $\alpha_{u}$ and
$w_{u} \cdot g(Y_{v})/ (F \cdot \OPT(v,R))$ to $\phi_{v,R}$.
Observe that each subset $R \subseteq U$ is charged at most once, as is each $u \in U$.
Thus, by definition,
\begin{equation}\label{eqn:expected_competitive_ratio_dual_general}
	\mb{E}[ w( \scr{M})] = F \cdot \left( \sum_{u \in U} \mb{E}[\alpha_{u}]  +  \sum_{v \in V} \sum_{R \subseteq U} \OPT(v,R) \cdot \mb{E}[ \phi_{v,R}] \right),
\end{equation}
where the expectation is over the random variables $(Y_{v})_{v \in V}$
and $(\sta(e))_{e \in E}$. If we now set $\alpha^{*}_{u} := \mb{E}[\alpha_{u} ]$ and $\phi^{*}_{v,R} := \mb{E}[ \phi_{v,R}]$ for $u \in U, v \in V$ and $R \subseteq U$, then \eqref{eqn:expected_competitive_ratio_dual_general} implies the following lemma:

%Moreover, assigning $w_{u} \cdot g(Y_{v})/ (F \cdot \OPT(v,R))$ to $\phi_{v,R}$ benefits 
%all the dual constraints associated with the
%edges $R \times \{v\}$. 

%\[
%	w( \scr{M}) =  \sum_{u \in U, v \in V} w_{u} \cdot \bm{1}_{[\scr{M}(v)=u]}
%				   =  F \cdot \left( \sum_{u \in U} \alpha_{u} + \sum_{v \in V} \sum_{R \subseteq U} \OPT(v,R) \cdot \phi_{v,R} \right)
%\]	
%

\begin{lemma} \label{lem:dual_variables_expected_value}
Suppose $G=(U,V,E)$ is a stochastic graph for which
Algorithm \ref{alg:dynamical_program} returns the matching $\scr{M}$
when presented $V$ based on $(Y_v)_{v \in V}$ generated $u.a.r.$ from $[0,1]$.
In this case, if the variables $((\alpha^*_u)_{u \in U}, (\phi^*_{v,R})_{v \in V, R \subseteq U})$ 
are defined through the above charging scheme, then
\[
	\mb{E}[ w( \scr{M})] = F \cdot \left( \sum_{u \in U} \alpha^{*}_{u}  +  \sum_{v \in V} \sum_{R \subseteq U} \OPT(v,R) \cdot \phi_{v,R}^{*} \right).
\]
\end{lemma}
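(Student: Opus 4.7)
The plan is to verify equation (\ref{eqn:expected_competitive_ratio_dual_general}) pointwise for every realization of the randomness, after which the claim of the lemma follows by linearity of expectation and the definitions $\alpha^*_u := \mb{E}[\alpha_u]$ and $\phi^*_{v,R} := \mb{E}[\phi_{v,R}]$. Since each $\OPT(v,R)$ is a deterministic quantity depending only on $G$, it commutes freely with the expectation, so no measure-theoretic subtleties arise.

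To prove the pointwise identity, I would fix an arbitrary realization of $(Y_v)_{v \in V}$ and $(\sta(e))_{e \in E}$, and trace through the charging scheme described just above the lemma. Two structural observations support the argument. First, for each $u \in U$, the variable $\alpha_u$ is charged at most once, and this happens precisely in the event that $u$ becomes matched in $\scr{M}$, with $u$ then matched to a unique online vertex $v \in V$. Second, for each pair $(v,R)$, the variable $\phi_{v,R}$ is charged at most once, and only when $v$ arrives, $R$ is the set of vertices of $U$ that are unmatched at that moment, and $v$ commits to some $u \in R$. Both facts follow immediately from the algorithm's description, since the unmatched set $R$ on the arrival of $v$ is a deterministic function of $v$ and the history, so the pair $(v,R)$ is uniquely identified.

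Next I would compute the total contribution to $F \cdot \bigl( \sum_u \alpha_u + \sum_{v,R} \OPT(v,R) \cdot \phi_{v,R} \bigr)$ coming from a single matched edge $(u,v) \in \scr{M}$ with unmatched set $R$ at $v$'s arrival. The charge to $\alpha_u$ is $w_u (1 - g(Y_v))/F$, and the charge to $\phi_{v,R}$ is $w_u g(Y_v)/(F \cdot \OPT(v,R))$. When these are multiplied by their dual-objective coefficients (respectively $1$ and $\OPT(v,R)$) and by the prefactor $F$, the denominator $F \cdot \OPT(v,R)$ cancels, and the two contributions sum to $w_u (1 - g(Y_v)) + w_u g(Y_v) = w_u$. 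Summing over all matched edges yields the pointwise identity
\[
w(\scr{M}) \;=\; F \cdot \Bigl( \sum_{u \in U} \alpha_u + \sum_{v \in V} \sum_{R \subseteq U} \OPT(v,R) \cdot \phi_{v,R} \Bigr).
\]

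Finally, I would take expectations of both sides and use linearity, together with the definitions of $\alpha^*_u$ and $\phi^*_{v,R}$, to obtain the statement of the lemma. There is no real obstacle: the lemma is essentially a bookkeeping identity recording how the charging scheme was constructed. The only point that requires a moment's thought is the ``at most once'' property for the pairs $(v,R)$, which justifies treating the charges as an honest sum rather than an overcounted one.
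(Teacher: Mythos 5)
Your proof is correct and matches the paper's own argument: the paper likewise observes that each matched edge $(u,v)$ with unmatched set $R$ contributes $w_u(1-g(Y_v))/F$ to $\alpha_u$ and $w_u g(Y_v)/(F\cdot \OPT(v,R))$ to $\phi_{v,R}$, that each $u$ and each $R$ is charged at most once, and then obtains the identity in expectation (equation \eqref{eqn:expected_competitive_ratio_dual_general}) before setting $\alpha^*_u := \mb{E}[\alpha_u]$ and $\phi^*_{v,R} := \mb{E}[\phi_{v,R}]$. Your pointwise verification followed by linearity of expectation is the same bookkeeping argument, just spelled out slightly more explicitly.
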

We also make the following claim regarding the feasibility of the variables $((\alpha^*_u)_{u \in U}, (\phi^*_{v,R})_{v \in V, R \subseteq U})$:

\begin{lemma} \label{lem:dual_feasibility_constraints_general}
If $G=(U,V,E)$ is a rankable stochastic graph
whose online nodes are presented to Algorithm \ref{alg:dynamical_program}
based on $(Y_v)_{v \in V}$ generated $u.a.r.$ from $[0,1]$,
then the solution $((\alpha^*_u)_{u \in U}, (\phi^*_{v,R})_{v \in V, R \subseteq U})$ 
is a feasible solution  to \ref{LP:DP_dual}.

\end{lemma}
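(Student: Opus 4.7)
The plan is to verify the dual constraint pair-by-pair: for each $u \in U$ and $v \in V$ with $p_{u,v} > 0$ (the other case is trivial), show
\[
\alpha^{*}_{u} + w_u \sum_{R \ni u} \phi^{*}_{v,R} \;\ge\; w_u,
\]
obtained from \ref{LP:DP_dual} after cancelling $p_{u,v}$. I condition on the $\sigma$-algebra $\scr{F}_v$ generated by all random variables except $Y_v$; under $\scr{F}_v$ the execution of Algorithm \ref{alg:dynamical_program} on the arrivals preceding any fixed time $y$ (with $Y_v \ge y$) is a deterministic function of $y$. Let $R(y) \subseteq U$ be the unmatched set at time $y$ in the \emph{alternate} execution obtained by removing $v$ from the input entirely, and let $\rho_u \in [0,1] \cup \{\infty\}$ be the time at which $u$ becomes matched in this alternate execution, so that $u \in R(y) \iff y < \rho_u$.

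The first step is to compute the contribution from the $\phi$ variables. The charge to $\phi_{v,R}$ is nonzero only when $R$ equals the actual unmatched set $R^{*}$ at $v$'s arrival and $v$ commits to some $u^{*} \in R^{*}$, contributing $w_{u^{*}} g(Y_v) / (F \cdot \OPT(v, R^{*}))$. Restricting the sum to $R \ni u$ forces $u \in R^{*}$, i.e., $Y_v < \rho_u$; for each such $y$, averaging over $(\sta(e))_{e \in \partial(v)}$ makes the factor $\mb{E}[w_{u^{*}} \mb{1}[v \text{ commits}]]$ equal to $\OPT(v, R(y))$ by the optimality of \textsc{DP-OPT} under rankability, so the ratio collapses to $g(y)/F$. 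Integrating in $y$,
\[
w_u \sum_{R \ni u}\mb{E}[\phi_{v,R} \mid \scr{F}_v] \;=\; \frac{w_u}{F} \int_{0}^{\min(\rho_u, 1)} g(y)\, dy.
\]

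The second and main step is to show $\mb{E}[\alpha_u \mid \scr{F}_v] \ge w_u (1 - g(\rho_u))/F$ when $\rho_u \le 1$. Since $\alpha_u = w_u(1 - g(T_u))/F$ when $u$ is matched at time $T_u$ (and $0$ otherwise), and $1 - g$ is decreasing on $[0, 1]$, it suffices to prove that $u$ is matched by time $\rho_u$ in the actual execution for every $Y_v \in [0, 1]$. For $Y_v \ge \rho_u$ this is automatic, so I fix $Y_v = y < \rho_u$ and couple the actual execution $(B)$ to the alternate execution $(A)$. The plan is to maintain the invariant $R^{B}(y') \subseteq R^{A}(y')$ at all times $y' \ge y$, which gives the claim since $u \notin R^{A}(\rho_u^{+})$ then forces $u \notin R^{B}(\rho_u^{+})$. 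The invariant holds at time $y^{+}$ because $R^{B}(y^{+}) = R^{A}(y) \setminus \{u^{*}\}$, where $u^{*}$ (possibly empty) is whatever $v$ commits to at time $y$. Preserving the invariant across a subsequent arrival $w$ reduces to the \textbf{rankability monotonicity lemma}: if $u^{(1)} \in R^{B}$ and $w$ matches $u^{(1)}$ in $(A)$, then $w$ also matches $u^{(1)}$ in $(B)$. This is where rankability is used essentially -- both $\pi_w(R^{A})$ and $\pi_w(R^{B})$ are built greedily from the single fixed order $\pi_w$ on $\partial(w)$ subject to $\scr{C}_w$, and a prefix comparison shows that every edge of $\pi_w(R^{B})$ preceding $(u^{(1)}, w)$ also appears (with identical state) preceding $(u^{(1)}, w)$ in $\pi_w(R^{A})$. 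All such edges have inactive state by hypothesis on $(A)$, so $w$'s probing in $(B)$ likewise finds them inactive and commits to $u^{(1)}$. The principal obstacle is carrying out this prefix comparison uniformly across all rankable constraints -- immediate for patience, but requiring a dedicated greedy argument in the general case.

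Combining the two bounds, with $g(y) = e^{y-1}$ and $F = 1 - 1/e$:
\[
\mb{E}[\alpha_u \mid \scr{F}_v] + w_u \sum_{R \ni u}\mb{E}[\phi_{v,R} \mid \scr{F}_v] \;\ge\; \frac{w_u}{F} \bigl[(1 - g(\rho_u)) + (g(\rho_u) - e^{-1})\bigr] \;=\; w_u,
\]
when $\rho_u \le 1$; if $\rho_u = \infty$ then the $\phi$-term alone contributes $(w_u/F) \int_{0}^{1} g(y)\, dy = w_u$, and the inequality follows. Averaging over $\scr{F}_v$ and invoking non-negativity of the charges completes the verification of \ref{LP:DP_dual}.
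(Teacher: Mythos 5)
Your argument is essentially the paper's own proof: conditioning on everything except $Y_{v_0}$ (and the states of $\partial(v_0)$), your critical time $\rho_u$ is the paper's $\til{Y}_c$ defined through the coupled run on $\til{G}=G[U\cup \til{V}]$ with $v_0$ deleted, your $\alpha$-bound is Proposition \ref{prop:offline_dual_variable_lower_bound_general} proved via the same coupling invariant $R^{B}\subseteq R^{A}$ (the paper's \eqref{eqn:monotonicity_general}, which the paper likewise only asserts inductively from rankability), your $\phi$-computation is Proposition \ref{prop:online_dual_variable_lower_bound_general} (you get equality where the paper only needs the inequality), and the final combination via $g(z)=e^{z-1}$ and $F=1-1/e$ is identical. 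Two minor points: the $\sigma$-algebra $\scr{F}_v$ should exclude the edge states $(\sta(e))_{e\in\partial(v)}$ as well as $Y_v$, since you later average over those states inside the conditional expectation; and your sketched ``rankability monotonicity lemma'' (the prefix comparison for general substring-closed constraints) is precisely the step the paper itself leaves unproved, so your write-up matches the paper's level of rigor there rather than falling short of it.
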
  
Since \ref{LP:DP} is a relaxation of the committal benchmark,
Theorem \ref{thm:ROM_rankable} follows from Lemmas \ref{lem:dual_variables_expected_value}
and \ref{lem:dual_feasibility_constraints_general} in conjunction with weak duality.
On the other hand, if we redefine $g(z):=1/2$ and $F:=1/2$, then analogous versions
of Lemmas \ref{lem:dual_variables_expected_value} and \ref{lem:dual_feasibility_constraints_general} hold,
even when the values $(Y_v)_{v\in V}$ are generated adversarially and $G$ is not rankable. With these
analogous lemmas, Theorem \ref{thm:adversarial} follows in the same way.
We focus on the ROM setting for the remainder of the section, as the analogous 
version of Lemma \ref{lem:dual_feasibility_constraints_general} for the adversarial
setting follows similarly, and is in fact easier to prove.

\subsubsection{Proving Dual Feasibility: Lemma \ref{lem:dual_feasibility_constraints_general}}

Let us suppose that the 
variables $((\alpha_u)_{u \in U}, (\phi_{v,R})_{v \in V, R \subseteq U})$ 
are defined as in the charging scheme of Section \ref{sec:charging_scheme}.
In order to prove Lemma \ref{lem:dual_feasibility_constraints_general},
we must show that for each fixed $u_0 \in U$ and $v_0 \in V$, we have that
\begin{equation}\label{eqn:dual_feasibility_fixed_vertices}
	\mb{E}[ p_{u_0,v_0} \cdot \alpha_{u_0} +   w_{u_0} \cdot p_{u_0,v_0} \, \sum_{\substack{R \subseteq U: \\ u_0 \in R}} \phi_{v,R}] \ge w_{u_0} \cdot p_{u_0,v_0}.
\end{equation}
Our strategy for proving \eqref{eqn:dual_feasibility_fixed_vertices}
first involves the same trick used by Devanur et al. \cite{DJK2013}.
Specifically, we define the stochastic graph $\til{G}:=(U, \til{V}, \til{E})$,
where $\til{V}:= V \setminus \{v_0\}$ and $\til{G}:=G[U \cup \til{V}]$. We wish to compare the execution of the algorithm on the instance $\til{G}$ to its execution on the instance $G$. It will be convenient to couple the randomness
between these two executions by making the following assumptions:
\begin{enumerate} \label{eqn:greedy_algorithm_coupling_general}
\item For each $e \in \til{E}$, $e$ is active in $\til{G}$ if and only if it is active in $G$.
\item The same random variables, $(Y_{v})_{v \in \til{V}}$, are used in both executions.
\end{enumerate}
If we now focus on the execution of $\til{G}$, then define the random variable
$\til{Y}_c$ where $\til{Y}_c:=Y_{v_c}$ 
if $u_0$ is matched to some $v_c \in \til{V}$, 
and $\til{Y}_c:=1$ if $u_0$ remains unmatched after the execution on $\til{G}$.
We refer to the random variable $\til{Y}_c$ as the \textbf{critical time} of
vertex $u_0$ with respect to $v_0$. We claim the following lower bound on $\alpha_{u_0}$ in terms of the critical time $\til{Y}_{c}$. We emphasize that this is the only part of the proof of Theorem \ref{thm:ROM_rankable} which requires the rankability
of $G$.
\begin{proposition} \label{prop:offline_dual_variable_lower_bound_general}
If $G$ is rankable, then $\alpha_{u_0} \ge \frac{w_{u_0}}{F} \, (1 - g(\til{Y}_c))$. 
\end{proposition}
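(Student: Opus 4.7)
The plan is to show that $u_0$ is matched in the $G$-execution of Algorithm \ref{alg:dynamical_program} at some time $Y_{v^*} \le \til{Y}_c$; since the charging scheme assigns $\alpha_{u_0} = w_{u_0}(1 - g(Y_{v^*}))/F$ in that event (and $0$ if $u_0$ is unmatched), and since $g$ is non-decreasing on $[0,1]$ with $g(1) = 1$, this yields the desired bound. If $\til{Y}_c = 1$ (i.e., $u_0$ is unmatched in $\til{G}$) the bound is trivial, so I would assume throughout that $u_0$ is matched to $v_c$ at time $Y_{v_c} = \til{Y}_c$ in $\til{G}$.

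I would first dispatch the easy case $Y_{v_0} > \til{Y}_c$: the coupling guarantees that the executions on $G$ and $\til{G}$ agree up to time $\til{Y}_c$ (since $v_0$ has yet to arrive in $G$), so $v_c$ matches $u_0$ in $G$ as well. Now suppose $Y_{v_0} \le \til{Y}_c$. If $v_0$ commits to $u_0$ in $G$, then $u_0$ is matched at time $Y_{v_0} \le \til{Y}_c$ and we are done. Otherwise, I would induct on the arrivals $v \in \til{V}$ with $Y_v \in (Y_{v_0}, \til{Y}_c]$, maintaining the invariant that either $u_0$ has already been matched in $G$ or $u_0$ is still available in $G$.

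For $v \ne v_c$, the inductive step is driven by the coupling of edge states: were $v$ to commit to $u_0$ in $G$, the edge $(u_0, v)$ would be active in $\til{G}$ as well, and a comparison of $v$'s probing behavior across the two executions would force $v$ to commit to $u_0$ in $\til{G}$, contradicting the choice of $v_c$ as the first online vertex that matches $u_0$ in $\til{G}$. Thus $u_0$ persists in $G$'s available set all the way up to $v_c$'s arrival.

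The main obstacle, and the single step where rankability is used, is showing that $v_c$ itself then commits to $u_0$ in $G$. The plan is to compare $v_c$'s probing sequences $\pi_{v_c}(S_{v_c})$ in $G$ and $\pi_{v_c}(\til{S}_{v_c})$ in $\til{G}$, which by rankability are both produced by greedy appending along the same fixed ordering $\pi_{v_c}$ subject to the respective induced constraints $\scr{C}_{v_c}^{S_{v_c}}$ and $\scr{C}_{v_c}^{\til{S}_{v_c}}$. Since $v_c$ reaches $(u_0, v_c)$ in $\til{G}$, every edge preceding it in $\pi_{v_c}(\til{S}_{v_c})$ is inactive. The goal is to show that $(u_0, v_c)$ likewise appears in $\pi_{v_c}(S_{v_c})$ and that the edges preceding it in $G$'s sequence are contained in the inactive prefix seen in $\til{G}$; by the coupling, this forces $v_c$ to reach and commit to $(u_0, v_c)$ in $G$. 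Carrying out this structural comparison rigorously — in particular, controlling how the family $\scr{C}_{v_c}^R$ shapes the greedy construction of $\pi_{v_c}(R)$ as $R$ shrinks from $\til{S}_{v_c}$ to $S_{v_c}$ — is the technical heart of the argument and the place where the rankability hypothesis is indispensable.
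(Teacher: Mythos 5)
There is a genuine gap, and it sits exactly where you place ``the technical heart of the argument.'' Your plan reduces everything to the claim that if $u_0$ survives in $G$ until $v_c$ arrives, then $v_c$ itself commits to $u_0$ in $G$; but you never carry out this step, and it is both harder and stronger than what is needed. To compare $\pi_{v_c}(S_{v_c})$ with $\pi_{v_c}(\til{S}_{v_c})$ you would first need to know how the two available sets relate (e.g.\ $S_{v_c} \subseteq \til{S}_{v_c}$), and that relation is not something your sketch establishes: it requires an induction over \emph{all} arrivals, tracking \emph{all} offline vertices, not just $u_0$. This is precisely what the paper does: it proves, by induction on the coupled executions and using rankability together with substring-closedness, the invariant $R^{\text{af}}_{v}(G) \subseteq R^{\text{af}}_{v}(\til{G})$ for every $v \in \til{V}$. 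Once this inclusion is in hand, the conclusion is immediate and does \emph{not} require that $v_c$ match $u_0$ in $G$: since $u_0 \notin R^{\text{af}}_{v_c}(\til{G})$, the inclusion gives $u_0 \notin R^{\text{af}}_{v_c}(G)$, so \emph{some} vertex $v'$ with $Y_{v'} \le \til{Y}_c$ matched $u_0$ in $G$, and monotonicity of $g$ finishes the proof. Your route insists on identifying the matching vertex as $v_c$, which is unnecessary and, without the global invariant, unsupported.

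Your intermediate inductive step is also off. The stated invariant (``either $u_0$ has already been matched in $G$ or $u_0$ is still available'') is vacuous, and the contradiction you invoke for $v \neq v_c$ is both unnecessary and unsound: if $v$ commits to $u_0$ in $G$ at time $Y_v \le \til{Y}_c$ that is a \emph{good} case (the charging already yields $\alpha_{u_0} \ge w_{u_0}(1-g(Y_v))/F \ge w_{u_0}(1-g(\til{Y}_c))/F$), not something to rule out; and the implication ``$v$ commits to $u_0$ in $G$ $\Rightarrow$ $v$ commits to $u_0$ in $\til{G}$'' can fail, because in $\til{G}$ the vertex $v$ generally has a larger available set, so under its ranking it may probe an edge not probed in the $G$-execution, find it active, and be intercepted before ever reaching $(u_0,v)$. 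In short, the one step your proposal defers is the step where rankability must actually be deployed, and the paper's set-inclusion invariant is the missing idea that makes it work.
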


\begin{proof}[Proof of Proposition \ref{prop:offline_dual_variable_lower_bound_general}]
For each $v \in V$, denote $R^{\text{af}}_{v}(G)$ as the unmatched (remaining) vertices of
$U$ right after $v$ is processed (attempts its probes) in the execution on $G$. We emphasize that if a probe of $v$
yields an active edge, thus matching $v$, then this match is excluded from $R^{\text{af}}_{v}(G)$.
Similarly, define $R^{\text{af}}_{v}(\til{G})$ in the same way for the execution on $\til{G}$
(where $v$ is now restricted to $\til{V}$).

Now, since $G$ is rankable and the constraints $(\scr{C}_v)_{v \in V}$ are substring-closed,
we can use the coupling between the two executions to inductively prove that
\begin{equation}\label{eqn:monotonicity_general}
	R^{\text{af}}_{v}(G) \subseteq R^{\text{af}}_{v}(\til{G}),
\end{equation}
for each $v \in \til{V}$ \footnote{Example \ref{example:bad_OPT_behaviour} shows why \eqref{eqn:monotonicity_general} will not hold if $G$ is not rankable.}. Now, since $g(1)=1$ (by assumption), there is nothing to prove if
$\til{Y}_{c}=1$. Thus, we may assume that $\til{Y}_c < 1$, and as a consequence,
that there exists some vertex $v_{c} \in V$ which matches to $u_0$ at time $\til{Y}_c$
in the execution on $\til{G}$.

On the other hand, by assumption we know that $u_0 \notin R^{\text{af}}_{v_c}(\til{G})$
and thus by \eqref{eqn:monotonicity_general}, that $u_0 \notin R^{\text{af}}_{v_c}(G)$.
As such, there exists some $v' \in V$ which probes $(u_0,v')$ and succeeds
in matching to $u_0$ at time $Y_{v'} \le \til{Y}_c$. Thus,
since $g$ is monotone,
\[
	\alpha_{u_0}  \ge \frac{w_{u_0}}{F} \, (1 - g(Y_{v'})) \, \bm{1}_{[ \til{Y}_c < 1]} \ge  \frac{w_{u_0}}{F} \, (1 - g(\til{Y}_c)),
\]
and so the claim holds.

\end{proof}
By taking the appropriate conditional expectation, we can also 
lower bound the random variables $(\phi_{v_{0},R})_{\substack{R \subseteq U: \\ u_{0} \in R} }$.
\begin{proposition} \label{prop:online_dual_variable_lower_bound_general}
\[\sum_{\substack{R \subseteq U: \\ u_{0} \in R}} \mb{E}[ \phi_{v_{0},R} \, | \, (Y_{v})_{v \in \til{V}}, (\sta(e))_{e \in \til{E}}] \ge  \frac{1}{F} \int_{0}^{\til{Y}_c} g(z) \, dz.\]
\end{proposition}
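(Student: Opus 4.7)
The plan is to compute the sum on the left-hand side exactly, via the same coupling between executions on $G$ and $\til{G}$ used in Proposition \ref{prop:offline_dual_variable_lower_bound_general}. After conditioning on $(Y_v)_{v\in\til{V}}$ and $(\sta(e))_{e\in\til{E}}$, the critical time $\til{Y}_c$ becomes deterministic, as does the entire execution of Algorithm \ref{alg:dynamical_program} on $\til{G}$; the only remaining randomness is the arrival time $Y_{v_0}\sim\mathrm{Unif}[0,1]$ together with the edge states $(\sta(e))_{e\in\partial(v_0)}$.

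The key observation is that, because $v_0\notin\til{V}$, the set of vertices processed strictly before time $y:=Y_{v_0}$ is identical in the two executions, and the coupling of $Y$-values and edge states forces Algorithm \ref{alg:dynamical_program} to behave identically on them. Hence the unmatched subset of $U$ seen by $v_0$ upon its arrival in $G$ coincides with the unmatched subset at time $y^{-}$ in the $\til{G}$ execution; denote this common (and, after our conditioning, deterministic) set by $R_y$. By the definition of $\til{Y}_c$ we have $u_0\in R_y$ precisely when $y<\til{Y}_c$, so among all $R\ni u_0$, only $R=R_y$ can receive a nonzero charge to $\phi_{v_0,R}$, and only for $y$ in this range.

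The next step is to compute $\mb{E}[\phi_{v_0,R_y}\mid Y_{v_0}=y]$ for $y<\til{Y}_c$. When $v_0$ arrives the algorithm probes according to $\textsc{DP-OPT}(v_0,R_y)$, which by Theorem \ref{thm:efficient_star_dp} produces a match of total expected weight $\OPT(v_0,R_y)$. The charging rule assigns $w_u\cdot g(y)/(F\cdot\OPT(v_0,R_y))$ to $\phi_{v_0,R_y}$ whenever $v_0$ matches some $u\in U$ and $0$ otherwise, so taking expectation over $(\sta(e))_{e\in\partial(v_0)}$ yields
$$
\mb{E}\!\left[\phi_{v_0,R_y}\,\big|\,Y_{v_0}=y,\,(Y_v)_{v\in\til{V}},\,(\sta(e))_{e\in\til{E}}\right]\;=\;\frac{\OPT(v_0,R_y)}{F\cdot\OPT(v_0,R_y)}\,g(y)\;=\;\frac{g(y)}{F}.
$$

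Integrating over $Y_{v_0}\sim\mathrm{Unif}[0,1]$ and using that the integrand is $g(y)/F$ on $[0,\til{Y}_c)$ and $0$ on $[\til{Y}_c,1]$ then gives
$$
\sum_{R\ni u_0}\mb{E}\!\left[\phi_{v_0,R}\,\big|\,(Y_v)_{v\in\til{V}},\,(\sta(e))_{e\in\til{E}}\right]\;=\;\int_0^{\til{Y}_c}\frac{g(y)}{F}\,dy\;=\;\frac{1}{F}\int_0^{\til{Y}_c} g(z)\,dz,
$$
which is in fact an equality. Note that rankability of $G$ is not needed for this step (in contrast to Proposition \ref{prop:offline_dual_variable_lower_bound_general}). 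The only care required, rather than a real obstacle, is bookkeeping with the coupling: one must check that after the conditioning, $R_y$ is deterministic and really equals the unmatched set seen by $v_0$ in the execution on $G$, so that the optimality guarantee of $\textsc{DP-OPT}(v_0,R_y)$ can be invoked cleanly.
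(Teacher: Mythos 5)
Your proof is correct and follows essentially the same route as the paper's: condition on $(Y_v)_{v\in\til{V}}$ and $(\sta(e))_{e\in\til{E}}$, use the coupling to identify the unmatched set seen by $v_0$ with that of the execution on $\til{G}$ (so that $u_0$ is available exactly when $Y_{v_0}<\til{Y}_c$), invoke the optimality of \textsc{DP-OPT} to cancel $\OPT(v_0,R_{v_0})$ against the charging denominator, and integrate over $Y_{v_0}\sim\mathrm{Unif}[0,1]$. The only difference is cosmetic: you observe that the bound is in fact an equality (up to measure-zero ties and degenerate cases with $\OPT(v_0,R)=0$, which the paper also glosses over), whereas the paper only records the one-sided inequality it needs.
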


\begin{proof}[Proof of Proposition \ref{prop:online_dual_variable_lower_bound_general}]
We first define $R_{v_0}$ as
the unmatched vertices of $U$ when $v_0$ arrives (this is a random subset of $U$).
We also once again use $\scr{M}$ to denote the matching returned by Algorithm \ref{alg:dynamical_program}
when executing on $G$. If we now take a \textit{fixed} subset $R \subseteq U$, then
the charging assignment to $\phi_{v_0,R}$
ensures that
\[
	\phi_{v_0,R} = w( \scr{M}(v_0)) \cdot \frac{g(Y_{v_0})}{F \cdot \OPT(v_0,R)} \cdot \bm{1}_{[ R_{v_0} = R]},
\]
where $w( \scr{M}(v_0))$ corresponds to the weight of
the vertex matched to $v_0$ (which is zero if $v_0$ remains unmatched after the execution on $G$).
In order to make use of this relation,
let us first condition on the values of $(Y_{v})_{v \in V}$, as well as the 
states of the edges of $\til{E}$; that is, $(\sta(e))_{e \in \til{E}}$. Observe
that once we condition on this information, we can determine $g(Y_{v_0})$,
as well as $R_{v_0}$. As such,
\[
	\mb{E}[ \phi_{v_0, R} \, | \, (Y_{v})_{v \in V}, (\sta(e))_{e \in \til{E}}] = \frac{g(Y_{v_0})}{F \cdot \OPT(v_0,R)} \, \mb{E}[w( \scr{M}(v_0)) \, | \, (Y_{v})_{v \in V}, (\sta(e))_{e \in \til{E}}] \cdot  \bm{1}_{[ R_{v_0} = R]}.
\]
On the other hand, the only randomness which remains in 
the conditional expectation involving $w(\scr{M}(v_0))$ is 
over the states of the edges adjacent to $v_0$. Observe now
that since Algorithm \ref{alg:dynamical_program} behaves optimally
on $G[ \{v_{0}\} \cup R_{v_{0}}]$,
we get that
\begin{equation} \label{eqn:DP_optimality}
	\mb{E}[w( \scr{M}(v_0)) \, | \, (Y_{v})_{v \in V}, (\sta(e))_{e \in \til{E}}] = \OPT(v_0,R_{v_0}),
\end{equation}
and so for the \textit{fixed} subset $R \subseteq U$,
\[
	\mb{E}[w( \scr{M}(v_0)) \, | \, (Y_{v})_{v \in V}, (\sta(e))_{e \in \til{E}}] \cdot \bm{1}_{[ R_{v_0} = R]} = \OPT(v_0,R) \cdot\bm{1}_{[ R_{v_0} = R]}
\]
after multiplying each side of \eqref{eqn:DP_optimality} by the indicator random variable $\bm{1}_{[R_{v_{0}}=R]}$.
Thus, 
\[
	\mb{E}[\phi_{v_0,R}\, | \, (Y_{v})_{v \in V}, (\sta(e))_{e \in \til{E}}] = \frac{g(Y_{v_0})}{F} \, \bm{1}_{[ R_{v_0} = R]},
\]
after cancellation. We therefore get that
\[
	\sum_{\substack{R \subseteq U: \\ u_{0} \in R}} \mb{E}[ \phi_{v_{0},R} \, | \, (Y_{v})_{v \in V}, (\sta(e))_{e \in \til{E}}] 
	= \frac{g(Y_{v_0})}{F} \sum_{\substack{R \subseteq U: \\ u_{0} \in R}} \bm{1}_{[ R_{v_0} = R]}.
\]
Let us now focus on the case when $v_0$ arrives before the critical time; that is, $0 \le Y_{v_0} < \til{Y}_c$. Up until the arrival of $v_0$, the executions of the algorithm on $\til{G}$ and $G$ proceed identically, thanks to the coupling between the executions.
As such, $u_0$ must be available when $v_0$ arrives. 
We interpret this observation in the above notation as saying the following:
\[
	\bm{1}_{[ Y_{v_0} < \til{Y}_c]} \le \sum_{\substack{R \subseteq U: \\ u_{0} \in R}} \bm{1}_{[ R_{v_0} = R]}.
\]
As a result,
\[\sum_{\substack{R \subseteq U: \\ u_{0} \in R}} \mb{E}[ \phi_{v_{0},R} \, | \, (Y_{v})_{v \in V}, (\sta(e))_{e \in \til{E}}]
	\ge \frac{g(Y_{v_0})}{F} \, \bm{1}_{[ Y_{v_0} < \til{Y}_c]}.
\]
Now, if we take expectation over $Y_{v_0}$, while still conditioning on the random variables $(Y_{v})_{v \in \til{V}}$, then we get
that
\[
\mb{E}[ g(Y_{v_0}) \cdot \bm{1}_{[ Y_{v_0} < \til{Y}_c]} \, | \, (Y_{v})_{v \in \til{V}}, (\sta(e))_{e \in \til{E}}]= \int_{0}^{\til{Y}_c} g(z) \, dz,
\]
as $Y_{v_0}$ is drawn uniformly from $[0,1]$, independently from $(Y_{v})_{v \in \til{V}}$ and $(\sta(e))_{e \in \til{E}}$.
Thus, after applying the law of iterated expectations,
\[
	\sum_{\substack{R \subseteq U: \\ u_{0} \in R}} \mb{E}[ \phi_{v_{0},R} \, | \, (Y_{v})_{v \in \til{V}}, (\sta(e))_{e \in \til{E}}] \ge  \frac{1}{F} \int_{0}^{\til{Y}_c} g(z) \, dz,
\]
and so the claim holds.

\end{proof}

With Propositions \ref{prop:offline_dual_variable_lower_bound_general} and \ref{prop:online_dual_variable_lower_bound_general},
the proof of Lemma \ref{lem:dual_feasibility_constraints_general} follows easily:
\begin{proof}[Proof of Lemma \ref{lem:dual_feasibility_constraints_general}]
We first observe that by taking the appropriate conditional expectation, 
Proposition \ref{prop:offline_dual_variable_lower_bound_general} ensures that
\[
	\mb{E}[\alpha_{u_0} \, | \, (Y_{v})_{v \in \til{V}}, (\sta(e))_{e \in \til{E}}] \ge \frac{w_{u_0}}{F} \cdot (1 - g(\til{Y}_c)),
\]
where the right-hand side follows since $\til{Y}_c$ is entirely determined from $(Y_{v})_{v \in \til{V}}$ and $(\sta(e))_{e \in \til{E}}$. Thus, combined with Proposition \ref{prop:online_dual_variable_lower_bound_general},
\[
	\mb{E}[ p_{u_0,v_0} \cdot \alpha_{u_0}+   w_{u_0} \cdot p_{u_0,v_0} \cdot \sum_{\substack{R \subseteq U: \\ u_0 \in R}} \phi_{v,R} \, | \, (Y_{v})_{v \in \til{V}}, (\sta(e))_{e \in \til{E}}],
\]
is lower bounded by
\[
	\frac{w_{u_0} \cdot p_{u_0,v_0}}{F} \cdot ( 1 - g( \til{Y}_c)) + \frac{w_{u_0} \, p_{u_0,v_0}}{F} \int_{0}^{\til{Y}_c} g(z) \, dz .
\]
However, $g(z):= \exp(z-1)$ for $z \in [0,1]$ by assumption, so
\[
	( 1 - g( \til{Y}_c)) + \int_{0}^{\til{Y}_c} g(z) \, dz = \left(1 - \frac{1}{e} \right),
\]
no matter the value of the critical time $\til{Y}_c$. As such, since $F:= 1 - 1/e$,
we may apply the law of iterated expectations and conclude that
\[
\mb{E}[p_{u_0,v_0} \cdot \alpha_{u_0} + w_{u_0} \cdot p_{u_0,v_0} \cdot \sum_{\substack{R \subseteq U: \\ u_0 \in R}} \phi_{v,R}] \ge w_{u_0} \cdot p_{u_0,v_0}.
\]
As the vertices $u_{0} \in U$ and $v_{0} \in V$ were chosen arbitrarily, the
proposed dual solution of Lemma \ref{lem:dual_feasibility_constraints_general} is feasible, 
and so the proof is complete.

\end{proof}

\section{Conclusion and Open Problems} \label{sec:open_problems} 

We considered the online  stochastic bipartite matching with commitment in a number of different settings establishing several competitive bounds against the committal benchmark. In Appendix \ref{sec:non_committal_benchmark}, we indicate when our results hold against a stronger non-committal benchmark.  

In the case of vertex-weighted stochastic graphs, adversarial arrivals, and general probing constraints, we provide a deterministic algorithm that achieves a $\frac{1}{2}$ competitive ratio. 
This is an optimal competitive ratio for deterministic algorithms and adversarial arrivals.

In the case of the random order model, we provide two results. First, for edge weighted stochastic graphs, and general probing constraints, we provide a randomized algorithm that achieves the optimal asymptotic competitive ratio of $\frac{1}{e}$. 
For vertex weighted results we provide an algorithm that achieves a $1-1/e$ competitive ratio whenever the input graph is ``rankable''. Our rankable assumption subsumes most of the stochastic graph settings studied in previous works.

Our work leaves open a number of challenging open problems.
For context, we note that currently, even for the classical (i.e., non-stochastic) setting, $1-\frac{1}{e}$ is the best known ratio for deterministic algorithms operating on unweighted or  vertex weighted graphs  with random vertex arrivals. The best known ROM in-approximation of $0.823$ (due to Manshadi et al. \cite{ManshadiGS12}) comes from the classical i.i.d. unweighted graph setting for a known distribution and applies to randomized as well as deterministic algorithms.
\begin{itemize}
\item What is the best ratio that a deterministic or randomized  online algorithm can obtain for {\it all} stochastic graphs in the ROM setting? That is, what competitive ratio can be achieved without the rankable assumption?
Is there an online probing algorithm which can surpass the $1-1/e$ ``barrier''? In \cite{borodin2021prophet}, we show that $1-1/e$ is a hardness result for \textit{non-adaptive} online probing algorithms, even for the unweighted unit patience setting
when the stochastic graph is known to the algorithm.

\item Is there a provable difference between what an optimal online algorithm can obtain against 
the committal benchmark versus the non-committal benchmark? Specifically, does Algorithm \ref{alg:dynamical_program}
achieve a competitive ratio of $1-1/e$ against the non-committal benchmark which holds
for all rankable stochastic graphs or for all stochastic graphs? The hardness
result of Proposition \ref{prop:negative_online_node} suggests that Algorithm \ref{alg:dynamical_program} does \textit{not} attain a competitive ratio of $1-1/e$, even for rankable stochastic graphs.

\item What is the best ratio that a randomized online algorithm can obtain for stochastic graphs in the adversarial arrival model? The Mehta and Panigraphi \cite{MehtaP12} $0.621$ inapproximation shows  that randomized probing algorithms (even for unweighted graphs and unit patience) cannot achieve a $1-1/e$ performance guarantee against \ref{LP:standard_benchmark},
however the work of Goyal and Udwani \cite{Goyal2020OnlineMW} suggests that this is because \ref{LP:standard_benchmark}
is too loose a relaxation of the committal benchmark.

\item Is there a online stochastic matching problem in which
the optimum competitive ratio provably worse than the optimal ratio for the corresponding classical setting? Note that in the classical setting the benchmark is the weight of an offline optimal matching. 

\item Can our $1-\frac{1}{e}$ competitive ratio be improved by a randomized algorithm in the vertex-weighted ROM setting? Here we note that in the classical ROM setting, the \textsc{Ranking} algorithm achieves a $0.696$ ratio for unweighted graphs  (due to Mahdian and Yan \cite{Mahdian2011}) and a $0.6534$ ratio (due to Huang et al. \cite{huang2018online}) for vertex weighted graphs. Thus, randomization seems  to significantly  help in the classical ROM setting.

\end{itemize}

\bibliographystyle{plain}
\bibliography{bibliography}

\appendix

\section{The Non-committal Benchmark} \label{sec:non_committal_benchmark}

In this section, we introduce the \textbf{non-committal benchmark}.
This benchmark must still adaptively probe edges subject to probing constraints, and its goal
is the same as the committal benchmark, but it does not need to respect
commitment. More precisely, if $P_a \subseteq E$ corresponds to
the active probes made by the benchmark, then
it returns a matching $\scr{M} \subseteq P_a$ of maximum weight.
We denote $\nOPT(G)$ as the expected weight of the matching that the non-committal benchmark constructs,
and abuse notation slightly by also using $\nOPT(G)$ to refer to the \textit{strategy}
of the non-committal benchmark on $G$. Observe that in the case of unlimited patience, $\nOPT(G)$ may probe all the edges of $G$,
and thus corresponds to the expected weight of the optimum matching of the stochastic graph.
Clearly, for any set of probing constraints, the non-committal benchmark is no weaker than the committal benchmark;
that is, $\nOPT(G) \ge \OPT(G)$ for any stochastic graph $G$. We first show that these values are separated by a ratio of at least $0.856269$, even for a single online node.
\begin{proposition}\label{prop:negative_online_node}
There exists a stochastic graph $G$ with a single online node $v$, such
that $\OPT(G) = 0.856269 \cdot \nOPT(G)$. 
\end{proposition}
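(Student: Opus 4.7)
To prove Proposition \ref{prop:negative_online_node}, my plan is to exhibit an explicit stochastic graph $G=(U,\{v\},E)$ with a single online vertex and then compute both $\OPT(G)$ and $\nOPT(G)$ in closed form. A key preliminary observation is that when $|V|=1$ no gap can arise from patience constraints alone: probing the $\ell_v$ adjacent edges in non-increasing order of weight (as in Theorem \ref{thm:efficient_star_dp}) achieves the same expected matching value for both benchmarks, because the first active edge encountered is also the maximum-weight active edge. Hence the construction must use an asymmetric substring-closed constraint $\scr{C}_v$ that is \emph{not} permutation-closed, so that the committal benchmark is forced to probe in a weight-suboptimal order while the non-committal benchmark can still probe the same set of edges and return the heaviest active one.

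The simplest candidate instance uses $U=\{u_1,u_2\}$ with the constraint
\[
\scr{C}_v \;=\; \{\,\lambda,\; (u_1,v),\; ((u_1,v),(u_2,v))\,\},
\]
i.e.\ $u_2$ may be probed only as a follow-up to $u_1$. Taking weights $w_1 < w_2$ and probabilities $p_1,p_2\in(0,1)$, enumeration of the three allowed probing sequences yields
\[
\OPT(G) \;=\; p_1 w_1 + (1-p_1)\,p_2 w_2, \qquad \nOPT(G) \;=\; p_2 w_2 + (1-p_2)\,p_1 w_1,
\]
where the expression for $\nOPT(G)$ uses $w_2>w_1$ so that the non-committal benchmark always returns $u_2$ when it is active. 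One then verifies that the committal strategy $((u_1,v),(u_2,v))$ dominates probing $u_1$ alone under the chosen parameters (which reduces to a routine inequality), and tunes $(p_1,p_2,w_1,w_2)$ by a one-parameter numerical search so that the ratio $\OPT(G)/\nOPT(G)$ equals exactly $0.856269$. Normalising $w_2=1$ leaves two continuous degrees of freedom after fixing $p_1$, and the ratio varies continuously over a nontrivial range, so such parameters exist.

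The main obstacle is purely in choosing the right parameter values (or, should two offline vertices prove insufficient for this particular constraint family, adapting to a chain $\scr{C}_v = \{\lambda,(u_1,v),((u_1,v),(u_2,v)),((u_1,v),(u_2,v),(u_3,v))\}$ on three offline nodes and re-optimising). Once the parameters are fixed, the remainder of the proof is a direct enumeration over the finitely many elements of $\scr{C}_v$ for both benchmarks, followed by a rational-function evaluation. No probabilistic or primal-dual machinery is needed beyond the definitions of $\OPT$ and $\nOPT$ already set out in this appendix.
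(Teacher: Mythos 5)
Your ``key preliminary observation'' is false, and it is exactly the point of the paper's construction that it is false: with a single online node, a gap between $\OPT(G)$ and $\nOPT(G)$ \emph{does} arise from a plain patience constraint. The reason is adaptivity. Under patience $1<\ell_v<|U|$ the committal benchmark effectively commits to a fixed probe sequence and receives the first active edge, whereas the non-committal benchmark may choose its later probes as a function of earlier outcomes, since it is not forced to stop (or to match) at the first active edge, and at the end it keeps the heaviest active probe. The paper's instance has $\ell_v=2$ and three offline vertices with weights $3,4,98$ and probabilities $0.8,0.6,0.01$: the committal optimum probes $(u_2,v)$ then $(u_1,v)$ for value $3.36$, while the non-committal benchmark probes $(u_2,v)$ and then, \emph{depending on its state}, either gambles on $(u_3,v)$ (weight $98$, already holding $w_2=4$) or falls back to $(u_1,v)$, for value $3.924$; the ratio is $0.856269$. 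Your claim that probing in non-increasing weight order equalizes the two benchmarks is only valid when all of $\partial(v)$ may be probed (unlimited patience) or $\ell_v=1$; for intermediate patience the non-committal benchmark's adaptive choice of \emph{which} edges to probe strictly helps, which is also the content of the Costello et al.\ bound cited immediately after the proposition and of the connection to the ProbeMax adaptivity gap.

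Because of this false premise you are driven to a construction with a non-permutation-closed $\scr{C}_v$, and that construction has its own problems. First, $\scr{C}_v=\{\lambda,(u_1,v),((u_1,v),(u_2,v))\}$ is prefix-closed but not substring-closed in the paper's sense: the substring $(u_2,v)$ is missing, so the instance lies outside the model; adding it changes the committal optimum (probing $u_2$ alone becomes available) and your formula for $\OPT(G)$ then needs an extra condition such as $w_1\ge p_2w_2$. Second, you never exhibit parameters achieving the exact value $0.856269$; ``the ratio varies continuously over a nontrivial range'' must be accompanied by a verification that $0.856269$ lies in that range under all the side constraints. Third, even if patched, your instance would only witness a gap for asymmetric (non-downward-closed) probing constraints, which undercuts how the proposition is used in the paper, namely as evidence about the patience/rankable setting for Algorithm \ref{alg:dynamical_program}. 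The fix is to abandon the preliminary claim and argue as the paper does, with a patience-$2$ node and an adaptive non-committal strategy.
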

\begin{proof}[Proof of Proposition \ref{prop:negative_online_node}]
Suppose $G$ has a single online node $v$ with patience $\ell_{v} =2$, and that there are $3$ offline nodes $U=\{u_{1}, u_{2}, u_{3}\}$. For each $i \in \{1,2,3\}$, we denote the weight of $(u_{i},v)$ by $w_{i}$ and assume that the edge $(u_i,v)$
is active with probability $p_{i}$. We make the following assumptions on these weights
and probabilities:
\begin{enumerate}
\item $w_{1} < w_{2} < w_{3}$.
\item $p_{1} > p_{2} > p_{3}$.
\item $w_{1} \cdot p_{1} \ge w_{2} \cdot p_{2} > w_{3} \cdot p_{3}$. \label{eqn:expected_reward}
\item $p_{2} \cdot w_{2} - p_{3} \cdot w_{3} \ge p_{1} \cdot w_{1} \cdot (p_{2} - p_{3})$ \label{eqn:OPT_strategy}
\end{enumerate}
Clearly, there exists a choice of weights and probabilities which satisfy these constraints. For instance,
take $w_{1} =3$, $w_{2} =4$, $w_{3}= 98$, $p_{1}=0.8$, $p_{2}=0.6$, and $p_{3} =0.01$.
Based on these assumptions let us now consider the value of $\OPT(G)$. Observe that
\[
	p_{2} \cdot w_{2} + (1 - p_{2}) \cdot p_{1} \cdot w_{1} \ge p_{3} \cdot w_{3} + (1- p_{3}) \cdot p_{1} \cdot w_{1}
															\ge p_{3} \cdot w_{3} + (1 - p_{3}) \cdot p_{2} \cdot w_{2},
\]
where the first inequality follows from \eqref{eqn:OPT_strategy}, and the second follows from \eqref{eqn:expected_reward}.
As a result, it is clear to see that the committal benchmark corresponds to probing $(u_{2},v)$ and then $(u_{1},v)$ (if necessary);
thus, $\OPT(G) = p_{2} \cdot w_{2} + (1 - p_{2}) \cdot p_{1} \cdot w_{1}$. On the other hand, let us consider $\nOPT(G)$, the value of the non-committal benchmark on $G$.
Consider the following \textit{non-committal} probing algorithm:
\begin{itemize}
\item Probe $(u_{2},v)$, and if $\sta(u_{2},v)=1$, probe $(u_{3},v)$. \label{eqn:adaptive_step}
\item Else if $\sta(u_{2},v)=0$, probe $(u_{1},v)$.
\item Return the edge of highest weight which is active (if any).
\end{itemize}
Clearly, this probing algorithm uses adaptivity to decide whether to reveal $(u_{3},v)$ or
$(u_{1},v)$ in its second probe. Specifically, if it discovers that $(u_{2},v)$ is active,
then it knows that it will return an edge with weight at least $w_{2}$. As such, it only
makes sense for $(u_{3},v)$ to be its next probe, as $w_{3} > w_{2} > w_{1}$.
On the other hand, if $(u_{2},v)$ is discovered to be inactive, it makes sense to
prioritize probing the edge $(u_{1},v)$ over $(u_{3},v)$, as the \textit{expected}
reward is higher; namely, $w_{1} \cdot p_{1} > w_{3} \cdot p_{3}$.
Thus, it is clear that the expected weight of the edge returned is 
\[
p_{2} \cdot p_{3} \cdot w_{3} + p_{2} \cdot (1 - p_{3}) \cdot w_{2} + (1 - p_{2}) \cdot p_{1} \cdot w_{1}.
\] 
Observe however that
\begin{align*}
	p_{2} \cdot p_{3} \cdot w_{3} + p_{2} \cdot (1 - p_{3}) \cdot w_{2} + (1 - p_{2}) \cdot p_{1} \cdot w_{1} 
				  &> p_{2} \cdot w_{2} + (1 - p_{2}) \cdot p_{1} \cdot w_{1}	\\
				  &= \OPT(G),
\end{align*}
where the final inequality follows since $w_{3} > w_{2}$.
As a result, it is clear that this strategy corresponds to the non-committal benchmark, and so 
$\OPT_{non}(G) > \OPT(G)$. In fact,
for the specific choice when $w_{1} =3$, $w_{2} =4$, $w_{3}= 98$, $p_{1}=0.8$, $p_{2}=0.6$, and $p_{3} =0.01$,
it holds that $\OPT(G) = 0.856269 \cdot \nOPT(G)$. 
\end{proof}
This example slightly improves upon the negative result of \cite{costello2012matching}, in which Costello et al. show
that the ratio between $\OPT(G)$ and $\nOPT(G)$ is at most $0.898$ (albeit for unweighted graphs).  

We remark that restricted to a single online node, the non-committal probing problem is a special case of
the \textit{adaptive} version of \textsc{ProblemMax}, a stochastic probing problem which is studied in \cite{Asadpour2016, Fu2018, Segev2020}. Similarly, the committal probing problem is a special case of the \textit{non-adaptive} version of
\textsc{ProblemMax}, which is also considered in \cite{Asadpour2016, Fu2018, Segev2020}. Thus, we can view $0.856269$
as an upper bound (negative result) on the \textbf{adaptivity gap} of \textsc{ProblemMax}. This is in contrast to the
lower bound of $1-1/e$ on the adaptivity gap of \textsc{ProblemMax} proven by Asadpour et al. \cite{Asadpour2016}. 

We now provide a number of settings
in which the competitive ratios of Algorithm \ref{alg:dynamical_program} holds
against the non-committal benchmark.

\begin{theorem}\label{thm:non_committal_competitive}
Let $G$ be a vertex-weighted stochastic graph such that
for each $v \in V$ and $R \subseteq U$, $\OPT(v,R) =\nOPT(v,R)$.
In this case, Algorithm \ref{alg:dynamical_program} attains a performance
guarantee of $1/2$ against $\nOPT(G)$ assuming adversarial arrivals,
and $1-1/e$ against $\nOPT(G)$ assuming random order arrivals
and the rankability of $G$.
\end{theorem}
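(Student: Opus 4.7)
The plan is to leverage the primal-dual analyses already developed for Theorems \ref{thm:adversarial} and \ref{thm:ROM_rankable}. Inspecting Lemmas \ref{lem:dual_variables_expected_value} and \ref{lem:dual_feasibility_constraints_general} reveals that those proofs in fact establish the stronger bounds $\mb{E}[w(\scr{M})] \ge \tfrac{1}{2} \, \dLPOPT(G)$ in the adversarial setting and $\mb{E}[w(\scr{M})] \ge (1 - \tfrac{1}{e}) \, \dLPOPT(G)$ in the ROM-rankable setting, with the relation $\OPT(G) \le \dLPOPT(G)$ invoked only at the very end to convert to a statement against the committal benchmark. Hence the entire theorem reduces to showing $\nOPT(G) \le \dLPOPT(G)$ under the hypothesis $\OPT(v,R) = \nOPT(v,R)$ for all $v \in V$ and $R \subseteq U$; once this is done the same primal-dual chain yields $\mb{E}[w(\scr{M})] \ge c \cdot \dLPOPT(G) \ge c \cdot \nOPT(G)$ with $c \in \{\tfrac{1}{2}, 1 - \tfrac{1}{e}\}$. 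Note that under this hypothesis, the LP \ref{LP:DP} is literally unchanged whether one writes $\OPT(v,R)$ or $\nOPT(v,R)$ on the right-hand side of constraint \eqref{eqn:OPT_constraint}.

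To establish the LP bound on $\nOPT(G)$, I would exhibit an explicit feasible primal solution $(x^{*}_{u,v})$ of value $\nOPT(G)$. Letting $\scr{M}_{\text{non}}$ denote the random matching returned by the non-committal benchmark on $G$, set $x^{*}_{u,v} := \mb{P}[(u,v) \in \scr{M}_{\text{non}}] / p_{u,v}$ (and $0$ when $p_{u,v} = 0$). Then $\sum_{u \in U, v \in V} w_u \, p_{u,v} \, x^{*}_{u,v} = \mb{E}[w(\scr{M}_{\text{non}})] = \nOPT(G)$, and constraint \eqref{eqn:dp_matching_constraint} is immediate since $\sum_{v \in V} p_{u,v} \, x^{*}_{u,v}$ is the probability that $u$ is matched in $\scr{M}_{\text{non}}$, which is at most $1$.

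The main obstacle will be verifying constraint \eqref{eqn:OPT_constraint}, which demands $\sum_{u \in R} w_u \, \mb{P}[(u,v) \in \scr{M}_{\text{non}}] \le \OPT(v,R)$ for every $v \in V$ and $R \subseteq U$. I propose to establish this via a simulation argument. Fix $v$ and $R$; I would exhibit a non-committal probing strategy $\Pi'$ on the single-online-node stochastic graph $G[\{v\} \cup R]$ whose expected output weight is at least the left-hand side. The strategy $\Pi'$ internally simulates $\nOPT(G)$ on all of $G$, drawing independent Bernoulli states for every edge outside $R \times \{v\}$ as placeholders, but whenever the simulated execution would probe an edge of $R \times \{v\}$, $\Pi'$ actually probes that edge in $G[\{v\} \cup R]$ and uses its true revealed state. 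At termination $\Pi'$ returns the maximum weight matching among its active probes, which in particular has weight at least that of the edge matched to $v$ in the simulated $\scr{M}_{\text{non}}$, whenever that edge lies in $R \times \{v\}$. Because $\scr{C}_v$ is substring-closed, the subsequence of $\nOPT(G)$'s probes on $\partial(v)$ restricted to $R \times \{v\}$ lies in $\scr{C}_v^R$, so $\Pi'$ respects the probing constraints of $G[\{v\} \cup R]$. Thus $\Pi'$ is a valid non-committal probing strategy and its expected output is at most $\nOPT(v,R) = \OPT(v,R)$ by hypothesis, yielding the desired inequality and completing the reduction.
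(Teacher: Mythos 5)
Your proposal is correct and takes essentially the same route as the paper: the paper likewise notes that the guarantees of Theorems \ref{thm:adversarial} and \ref{thm:ROM_rankable} are proven against $\dLPOPT(G)$, and establishes $\nOPT(G)\le \dLPOPT(G)$ under the hypothesis $\OPT(v,R)=\nOPT(v,R)$ (via the intermediate \ref{LP:DP-non} and Lemma \ref{lem:non_committal_relaxation}), where the key constraint \eqref{eqn:OPT_constraint} is verified by exactly your simulation/restriction argument, i.e., the induced single-online-node non-committal strategy restricted to $R\times\{v\}$. Your only deviation is cosmetic: you fold the paper's pair of variables $z_{u,v}\le p_{u,v}\,x_{u,v}$ into the single substitution $x^{*}_{u,v}=\Prob[(u,v)\in\scr{M}_{\text{non}}]/p_{u,v}$ and check feasibility in \ref{LP:DP} directly rather than passing through the reformulation \ref{LP:DP-non}.
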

\begin{remark}
If $G$ is unweighted or is vertex-weighted yet
has unit/unlimited patience, then Theorem \ref{thm:non_committal_competitive}
applies. 
\end{remark}

To prove Theorem \ref{thm:non_committal_competitive}, consider the following LP, 
where each edge $e \in E$ of the edge weighted
stochastic graph $G=(U,V,E)$
is associated with two variables, namely $x_{e}$ and $z_{e}$. We interpret the former variable as the probability
that the non-committal benchmark probes the edge $e$, whereas the latter
variable corresponds to the probability that $e$ is included
in the matching constructed by the non-committal benchmark. Note that
for convenience, we assume that $E=U \times V$.
\begin{align}\label{LP:DP-non}
\tag{LP-DP-non}
& \text{maximize}  & \sum_{u \in U, v \in V} w_{u,v} \cdot z_{u,v} \\
&\text{subject to} & \sum_{v \in V}  z_{u,v} &\leq 1 &&\forall u \in U\\
&  &\sum_{u \in R} w_{u,v} \cdot z_{u,v} & \le  \nOPT(v,R)  && \forall v \in V, \, R \subseteq U \\ \label{eqn:OPT_non-equiv}
&  &z_{u,v} &\le p_{u,v} \cdot x_{u,v} && \forall u \in U, v \in V	\\  
&  &x_{u,v},z_{u,v} &\ge 0 && \forall u \in U, v \in V
\end{align}
We denote $\nLPOPT(G)$ as the value of an optimum solution to \ref{LP:DP-non}.
\begin{lemma}\label{lem:non_committal_relaxation}
For any stochastic graph $G=(U,V,E)$ with substring-closed probing constraints
$(\scr{C}_v)_{v \in V}$, $\nOPT(G) \le \nLPOPT(G)$.
\end{lemma}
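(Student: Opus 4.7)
My plan is to exhibit a feasible solution to \ref{LP:DP-non} whose objective value equals $\nOPT(G)$. Specifically, let $A$ denote the non-committal benchmark, let $P_a$ be its (random) set of active probes on $G$, and let $\scr{M}_A \subseteq P_a$ be the matching it returns. For each $(u,v) \in U \times V$, set
\[
x_{u,v} := \mb{P}[(u,v) \text{ is probed by } A], \qquad z_{u,v} := \mb{P}[(u,v) \in \scr{M}_A].
\]
The objective then equals $\sum_{u,v} w_{u,v} z_{u,v} = \mb{E}[w(\scr{M}_A)] = \nOPT(G)$, so it only remains to verify feasibility of the three non-trivial families of constraints.

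The matching constraint $\sum_{v} z_{u,v} \le 1$ is immediate because $\scr{M}_A$ is a matching, so any $u \in U$ is included at most once. For the probing constraint $z_{u,v} \le p_{u,v} \cdot x_{u,v}$, the key observation is that the event ``$A$ probes $(u,v)$'' is determined by the states of edges $e \neq (u,v)$ and the internal randomness of $A$, and is therefore independent of $\sta(u,v)$. Since $(u,v) \in \scr{M}_A$ requires both that $A$ probes $(u,v)$ and that $\sta(u,v)=1$, we obtain $z_{u,v} \le \mb{P}[(u,v) \text{ probed and active}] = p_{u,v} \cdot x_{u,v}$ by independence.

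The main obstacle is verifying the family of constraints $\sum_{u \in R} w_{u,v} \cdot z_{u,v} \le \nOPT(v,R)$ for each $v \in V$ and $R \subseteq U$. My approach is to exhibit, for each such $(v,R)$, a valid non-committal probing algorithm $A_{v,R}$ on the single-vertex stochastic graph $G[\{v\} \cup R]$ whose expected matching weight is exactly $\sum_{u \in R} w_{u,v} z_{u,v}$, which will imply the desired inequality by definition of $\nOPT(v,R)$. The algorithm $A_{v,R}$ simulates $A$ on $G$ by independently sampling $\sta(e)$ for every $e \notin R \times \{v\}$ using its internal randomness, while \emph{actually} probing edges in $R \times \{v\}$ exactly in the order and subject to the decisions that the simulated $A$ would make; at termination it outputs $\scr{M}_A \cap (R \times \{v\})$. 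By construction this yields expected weight $\sum_{u \in R} w_{u,v} z_{u,v}$, and the substring-closedness of $\scr{C}_v$ is exactly what is needed for feasibility: if $A$'s probe sequence to $\partial(v)$ lies in $\scr{C}_v$, then its subsequence consisting of probes into $R \times \{v\}$ is a substring and therefore also lies in $\scr{C}_v$, hence in the restricted constraint $\scr{C}_v^R$. Together with the matching constraint on $v$ satisfied trivially since $\scr{M}_A$ contains at most one edge incident to $v$, this gives a legal non-committal probing algorithm on $G[\{v\}\cup R]$, completing the proof.
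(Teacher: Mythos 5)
Your proof is correct and follows essentially the same route as the paper: you define $x_{u,v},z_{u,v}$ as the probe/match probabilities of the non-committal benchmark, verify $z_{u,v}\le p_{u,v}x_{u,v}$ by independence of the probe decision from $\sta(u,v)$, and establish the constraints $\sum_{u\in R} w_{u,v}z_{u,v}\le \nOPT(v,R)$ by simulating the benchmark's probes outside $R\times\{v\}$ to obtain a legal non-committal strategy on $G[\{v\}\cup R]$, using substring-closedness of $\scr{C}_v$. This matches the paper's construction of the induced strategies $\scr{B}_v$ and $\scr{B}_v(R)$, with your version merely spelling out the simulation onto $G[\{v\}\cup R]$ a bit more explicitly.
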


\begin{proof}
Let us suppose that $\scr{M}$ is the matching returned by the non-committal benchmark when it executes on $G=(U,V,E)$.
If we fix $u \in U$ and $v \in V$, then define $x_{u,v}$ as the probability the non-committal benchmark probes the edge $(u,v)$, and $z_{u,v}$ as the probability that it includes $e$ in $\scr{M}$. Observe then that
\[
	\nOPT(G) = \mb{E}[w(\scr{M}) ] = \sum_{u \in U, v \in V} w_{u,v} \cdot z_{u,v}.
\]
Thus, we need only show that $(x_{u,v},z_{u,v})_{u \in U,v \in V}$
is a feasible solution to \ref{LP:DP-non}, as this will ensure that
$\nOPT(G) = \sum_{u \in U, v \in V} w_{u,v} \cdot z_{u,v} \le \nLPOPT(G)$.
If we first fix $u \in U$ and $v \in V$, then observe that
in order for $(u,v)$ to be included in $\scr{M}$, $(u,v)$ must be probed \textit{and} $(u,v)$ must be active. On the other
hand, these two events occur independently of each other. As such, $z_{u,v} \le p_{u,v} \cdot x_{u,v}$.
Now, each $u \in U$ is matched at most once by the non-committal benchmark,
thus $\sum_{v \in V} z_{u,v} \le 1.$
Finally, fix $v \in V$, and denote $\scr{M}(v)$ as the edge
matched to $v$ (which is $\emptyset$ by convention if $v$ remains unmatched),
and denote $w(\scr{M}(v))$ as the weight of the edge $v$ is matched
to (which is $0$ provided $v$ remains unmatched).
Observe first that $\sum_{u \in U} w_{u,v} \cdot z_{u,v} = \mb{E}[ w( \scr{M}(v))]$.
Moreover, executing the non-committal benchmark on $G$ induces\footnote{The strategy $\scr{B}_{v}$ can be defined
formally by first drawing (simulated) independent copies of the edge states which are not adjacent to $v$,
say $\til{\sta}(e)_{e \in E: v \notin e}$. By executing the non-committal benchmark
on $G$ with $\til{\sta}(e)_{e \in E: v \notin e}$ and $\sta(e)_{e \in \partial(v)}$, we get the desired strategy on $G[\{v\} \cup U]$.}
a probing strategy on $G[\{v\} \cup U]$, which we denote by $\scr{B}_{v}$.
However, observe that since the non-committal benchmark decides upon which edges
to match after all its probes are made, so does $\scr{B}_v$. Specifically, the
match it makes to $v$ is determined once all its probes to $U \times \{v\}$ are made. 
Clearly, the expected value of this match is equal to
$\mb{E}[w( \scr{M}(v))]$ and can be no larger than $\nOPT(v,U)$.
Thus,
\[
	 \sum_{u \in U} w_{u,v} \cdot z_{u,v} = \mb{E}[ w( \scr{M}(v))] \le \nOPT(v,U).
\]
More generally, if we fix $R \subseteq U$, then
\[
	\sum_{u \in U} w_{u,v} \cdot z_{u,v} = \mb{E}[ w(\scr{M}(v)) \cdot \bm{1}_{[\scr{M}(v) \in R \times \{v\}]}] \le \nOPT(v,R).
\]
To see this, consider a modification of $\scr{B}_{v}$, say $\scr{B}_{v}(R)$, which matches $v$ to $u \in U$
if and only if $\scr{B}_v$ matches $v$ to $u$ \textit{and} $(u,v) \in R \times \{v\}$.

This shows that all the constraints of \ref{LP:DP-non} hold
for $(x_{u,v},z_{u,v})_{u \in U, v \in V}$, and so the proof is complete.
\end{proof}

\begin{proof}[Proof of Theorem \ref{thm:non_committal_competitive}]
Let us suppose that $G$ is vertex-weighted.

Consider a modified version of \ref{LP:DP} 
in which the right-hand side of
constraint \eqref{eqn:OPT_constraint} is replaced
by the analogous expression for the non-committal benchmark (i.e.,
$\nOPT(v,R)$). It is not hard to show that this modified
LP is a reformulation of \ref{LP:DP-non}. Thus, because of the assumptions
on $G$, $\dLPOPT(G) = \nLPOPT(G)$. Since the performance
guarantees of Algorithm \ref{alg:dynamical_program} in Theorems \ref{thm:adversarial}
and \ref{thm:ROM_rankable} are proven
against $\dLPOPT(G)$, Theorem \ref{thm:non_committal_competitive}
follows thanks to Lemma \ref{lem:non_committal_relaxation}.

\end{proof}

\section{Relaxing the Committal Benchmark} \label{sec:committal_relaxation}

\label{sec:LP_relaxation}
Suppose that we are given an arbitrary stochastic
graph $G=(U,V,E)$. Let us restate \ref{LP:config} 
for convenience:
\begin{align}\label{LP:config_restatement}
\tag{LP-config}
&\text{maximize} &  \sum_{v \in V} \sum_{\bm{e} \in \scr{C}_v } \val(\bm{e}) \cdot x_{v}(\bm{e}) \\
&\text{subject to} & \sum_{v \in V} \sum_{\substack{ \bm{e} \in \scr{C}_v: \\ (u,v) \in \bm{e}}} 
p_{u,v} \cdot g(\bm{e}_{< (u,v)}) \cdot x_{v}( \bm{e})  \leq 1 && \forall u \in U  \label{eqn:relaxation_efficiency_matching_restatement}\\
&& \sum_{\bm{e} \in \scr{C}_v} x_{v}(\bm{e}) = 1 && \forall v \in V,  \label{eqn:relaxation_efficiency_distribution_restatement} \\
&&x_{v}( \bm{e}) \ge 0 && \forall v \in V, \bm{e} \in \scr{C}_v
\end{align}
We contrast \ref{LP:config_restatement} with \ref{LP:standard_definition_general}, which is defined
only when $G$ has patience values $(\ell_v)_{v \in V}$:
\begin{align}\label{LP:standard_definition_general}
\tag{LP-std}
&\text{maximize} & \sum_{e \in E} w_{e} \cdot p_{e} \cdot x_{e} \\
&\text{subject to} & \sum_{e \partial(u)} p_{e} \cdot x_{e} & \leq 1 && \forall u \in U \\
& &\sum_{e \in \partial(v)} p_{e} \cdot x_{e} & \leq 1 && \forall v \in V  \\
& &\sum_{e \in \partial(v)} x_{e} & \leq \ell_v && \forall v \in V  \\
& &0 \leq x_{e} &\leq 1 && \forall e \in E.
\end{align}
Observe that \ref{LP:config} and \ref{LP:standard_definition_general} are the same
LP in the case of unit patience:
\begin{align} \label{LP:standard_benchmark}
\tag{LP-std-unit}
&\text{maximize}&  \sum_{v \in V} \sum_{e \in \partial(v)} w_e \cdot p_{e} \cdot x_{e}\\
&\text{subject to}& \sum_{e \in \partial(u)} p_{e} \cdot x_{e} &\leq 1 && \forall u \in U\\
&&\sum_{e \in \partial(v)} x_{e} &\leq 1 && \forall v \in V\\
&&  x_{e}  &\ge 0 && \forall e \in E
\end{align}

For completeness, we now present the essential ideas used in the proof of Theorem \ref{thm:new_LP_relaxation},
which shows that \ref{LP:config_restatement} relaxes the committal benchmark.

Let us suppose that hypothetically we could make the following assumption
regarding the committal benchmark:
\begin{enumerate}[label=(\subscript{P}{{\arabic*}})]
\item If $e=(u,v)$ is probed and $\st(e)=1$, then $e$ is included in the matching, provided $v$ is currently unmatched.
\label{eqn:single_vertex_committal}
\item For each $v \in V$, the edge probes involving $\partial(v)$ are made independently of the edge states $(\sta(e))_{e \in \partial(v)}$. \label{eqn:single_vertex_non_adaptivity}
\end{enumerate}
Observe then that \ref{eqn:single_vertex_committal} and \ref{eqn:single_vertex_non_adaptivity} would imply that
the expected weight of the edge assigned to $v$ is
$\sum_{\bm{e} \in \scr{C}_v } \val(\bm{e}) \cdot x_{v}(\bm{e})$. 
Moreover, the left-hand side of \eqref{eqn:relaxation_efficiency_matching} 
would correspond to the probability that $u \in U$ is matched,
so $(x_{v}(\bm{e}))_{v \in V, \bm{e} \in \scr{C}_v}$
would be a feasible solution to \ref{LP:config_restatement}, and so
we could upper bound $\OPT(G)$ by $\LPOPT(G)$. Now, if we knew that
the committal benchmark adhered to some adaptive vertex ordering $\pi$ on $V$
(i.e., it chooses $v_{\pi(i)}$ based on $v_{\pi(1)},\ldots v_{\pi(i-1)}$,
and probes the edges of $v_{\pi(i)}$ before moving to $v_{\pi(i+1)}$), then it is clear that we could assume \ref{eqn:single_vertex_committal} and \ref{eqn:single_vertex_non_adaptivity} simultaneously\footnote{It is clear that we may assume
the committal benchmark satisfies \ref{eqn:single_vertex_committal} $w.l.o.g.$, but not \ref{eqn:single_vertex_non_adaptivity}.}
$w.l.o.g.$. However, clearly a probing algorithm with this restriction is in general less
powerful than the committal benchmark. As such, the natural interpretation of the
variables of \ref{LP:config_restatement} does not seem to easily lend itself to a proof
of Theorem \ref{thm:new_LP_relaxation}.

In order to get around these issues, we first discuss the relaxed stochastic matching problem
defined in Section \ref{sec:edge_weights} in more detail.
A solution to this problem is a relaxed probing algorithm.
A relaxed probing algorithm operates on the stochastic graph $G=(U,V,E)$ in
the same framework as an offline probing algorithm.
That is, initially the edge states $(\st(e))_{e \in E}$ are unknown 
to the algorithm, and it must adaptively probe the edges
of $G$, while respecting the probing constraints of the online nodes of $G$.
Its output is then a subset of its probes which yielded
active edges, which we denote by $\scr{N}$. The goal of
the relaxed probing algorithm is to maximize the expected weight of $\scr{N}$,
while ensuring that the following properties are satisfied:
\begin{enumerate}
\item Each $v \in V$ appears in at most one edge of $\scr{N}$.
\item For each $u \in U$, the \textit{expected} number of edges of $\scr{N}$
which contain $u$ is at most one.
\end{enumerate}
We refer to $\scr{N}$ as a \textbf{one-sided matching} of the online nodes. We abuse
terminology slightly, and say that a vertex of $G$ is matched by $\scr{N}$, provided it is included
in an edge of $\scr{N}$. A relaxed probing algorithm must \textbf{respect commitment}. That is, it has the property that if a probe to $e=(u,v)$ yields an active edge, then the edge is included in $\scr{N}$ (provided $v$ is currently not in $\scr{N}$). Observe that this requires the relaxed probing algorithm to include $e$, even if $u$ is already adjacent to some
element of $\scr{N}$.

We define the \textbf{relaxed benchmark} as an optimum relaxed probing algorithm
on $G$, and denote $\rOPT(G)$ as the expected value of its output when executing
on $G$. Observe that by definition, $\OPT(G) \le \rOPT(G)$, where $\OPT(G)$ is the value of the committal benchmark on $G$.

Finally, we say that a relaxed probing algorithm is \textbf{non-adaptive},
provided its edge probes are statistically independent
from the edge states of $G$; that is, the random variables $(\sta(e))_{e \in E}$. We
emphasize that this is equivalent to (randomly) specifying an ordering $\pi$ on a subset of $E$, 
which satisfies the constraints $(\scr{C}_v)_{v \in V}$. The edges specified by $\pi$ are then probed in order, and an active edge
is added to the matching, provided its \textit{online} node is unmatched.

Unlike the committal benchmark, $\rOPT(G)$ can be attained
by a non-adaptive relaxed probing algorithm.

\begin{lemma}[Lemma $3.2$ in \cite{borodin2021prophet}] \label{lem:non_adaptive_optimum}
There exists a relaxed probing algorithm which is non-adaptive
and attains value $\rOPT(G)$ in expectation.
\end{lemma}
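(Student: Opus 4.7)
The plan is to prove the lemma by establishing that $\rOPT(G)$ equals the optimum value of \ref{LP:config_restatement} \textit{and} that this common value is attained by an explicitly constructed non-adaptive algorithm. Concretely, I will show (a) that the relaxed benchmark induces a feasible solution of \ref{LP:config_restatement} with objective $\rOPT(G)$, so $\rOPT(G) \le \LPOPT(G)$, and (b) that any feasible LP solution can be converted into a non-adaptive relaxed algorithm whose expected value equals the LP objective. Applied to an optimum LP solution, (b) yields a non-adaptive relaxed algorithm achieving value $\LPOPT(G) \ge \rOPT(G)$; since any relaxed algorithm has value at most $\rOPT(G)$ by definition, equality must hold, and the algorithm in hand attains $\rOPT(G)$.

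For (a), given the relaxed benchmark $\scr{A}$, I would associate to each online vertex $v$ its \emph{virtual probing sequence} $\bm{f}^v \in \scr{C}_v$, defined as the sequence $\scr{A}$ would probe on $\partial(v)$ under the counterfactual that every edge of $\partial(v)$ turns out to be inactive (so $\scr{A}$ never halts its probing of $\partial(v)$ early due to commitment). By construction, $\bm{f}^v$ is a function of $\scr{A}$'s internal randomness and of $(\sta(e))_{e \notin \partial(v)}$ alone, hence independent of $(\sta(e))_{e \in \partial(v)}$. Setting $x_v(\bm{e}) := \mb{P}[\bm{f}^v = \bm{e}]$, this independence implies that the probability $v$ is matched to $u$ via the $i$-th probe of $\bm{f}^v = \bm{e}$ with $e_i = (u,v)$ factors as $x_v(\bm{e}) \cdot g(\bm{e}_{<i}) \cdot p_{u,v}$. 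Summing over $v$ and $\bm{e}$ gives the left-hand side of \eqref{eqn:relaxation_efficiency_matching_restatement}, which is at most $1$ because $\scr{A}$ produces a one-sided matching in which each $u \in U$ appears at most once in expectation. The LP objective evaluates to $\sum_v \sum_{\bm{e}} \val(\bm{e}) \cdot x_v(\bm{e}) = \rOPT(G)$, completing the extraction.

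For (b), let $(x_v^*(\bm{e}))_{v \in V, \bm{e} \in \scr{C}_v}$ be an optimum solution of \ref{LP:config_restatement}, and define the algorithm $\scr{B}$ that processes the vertices in any order and, for each $v \in V$, invokes $\textsc{VertexProbe}(v, \partial(v), (x_v^*(\bm{e}))_{\bm{e} \in \scr{C}_v})$ and adds any returned active edge to $\scr{N}$. Because \textsc{VertexProbe} samples a single sequence $\bm{e}^v \sim x_v^*(\cdot)$ up front and then probes those edges in their fixed order, the probes performed by $\scr{B}$ are independent of $(\sta(e))_{e \in E}$, so $\scr{B}$ is non-adaptive. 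By Lemma \ref{lem:fixed_vertex_probe}, the probability that edge $(u,v)$ is added to $\scr{N}$ equals $p_{u,v} \cdot \sum_{\bm{e} \in \scr{C}_v:\, (u,v) \in \bm{e}} g(\bm{e}_{<(u,v)}) x_v^*(\bm{e})$, so summing over $v$ gives the left-hand side of \eqref{eqn:relaxation_efficiency_matching_restatement}, which is at most $1$. The expected weight of $\scr{N}$ equals $\sum_v \sum_{\bm{e}} \val(\bm{e}) \cdot x_v^*(\bm{e}) = \LPOPT(G)$, and each $v$ appears in at most one edge of $\scr{N}$ by construction, so $\scr{B}$ is a valid relaxed algorithm.

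The main obstacle is step (a): making the virtual-sequence construction fully rigorous, since $\scr{A}$ is adaptive across all of $E$ and its probes to $\partial(v)$ can be interleaved with probes elsewhere. The cleanest way I would handle this is to model $\scr{A}$ as a decision tree whose internal nodes are labelled by edges of $E$ and whose branches correspond to outcomes of $\sta(\cdot)$; then $\bm{f}^v$ is the (random) subsequence of $\partial(v)$-labels encountered along the root-to-leaf path determined by $(\sta(e))_{e \notin \partial(v)}$ and by forcing every $\partial(v)$-branch to the ``inactive'' side. This makes $\bm{f}^v \perp (\sta(e))_{e \in \partial(v)}$ manifest, after which the LP-feasibility calculation in (a) and the value calculation in (b) become routine.
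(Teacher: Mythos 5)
Your proposal is correct, but note that this paper does not actually prove Lemma \ref{lem:non_adaptive_optimum}: it is imported from the companion paper \cite{borodin2021prophet} and used as a black box in Appendix \ref{sec:committal_relaxation} to prove that \ref{LP:config} has optimum value $\rOPT(G)$. So your argument is necessarily a different route from what appears here, and it is a self-contained one. Your step (b) coincides with the rounding half of the appendix argument (one invocation of \textsc{VertexProbe} per online vertex, value computed via Lemma \ref{lem:fixed_vertex_probe}, constraint \eqref{eqn:relaxation_efficiency_matching_restatement} giving the in-expectation matching property for each $u \in U$). The genuinely new content is step (a): the counterfactual sequence $\bm{f}^v$, obtained by forcing every edge of $\partial(v)$ to the inactive branch, is measurable with respect to the internal randomness and $(\sta(e))_{e \notin \partial(v)}$ alone, and the real execution agrees with the counterfactual one up to and including the first probed edge of $\partial(v)$ that is actually active; hence setting $x_v(\bm{e}) := \mb{P}[\bm{f}^v = \bm{e}]$ yields a feasible solution of \ref{LP:config} with objective exactly the value of the (adaptive) relaxed benchmark. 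This directly resolves the adaptivity obstacle the appendix discusses around properties \ref{eqn:single_vertex_committal} and \ref{eqn:single_vertex_non_adaptivity}, whereas the paper sidesteps it by citing the companion paper's non-adaptivity lemma. Combining (a) and (b) gives $\rOPT(G) = \LPOPT(G)$ and Lemma \ref{lem:non_adaptive_optimum} simultaneously, with no circularity since neither half invokes the lemma being proved; the deferred approach buys modularity, yours buys a proof that this paper could state in full.

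Two small points to tighten. First, as written \textsc{VertexProbe} stops probing once it finds an active edge, so the probes of your algorithm $\scr{B}$ are not literally independent of the edge states; to meet the paper's definition of non-adaptive, let $\scr{B}$ probe the entire sampled string $\bm{e}^v$ regardless of outcomes (the extra probes change nothing, since commitment only places the first active edge of $\partial(v)$ into $\scr{N}$ and later active edges are ignored because $v$ is already matched). Second, in step (a) the matching event at $v$ decomposes naturally over \emph{prefixes} of $\bm{f}^v$ rather than its full value, since the execution may diverge from the counterfactual after the first active probe; your formula is still correct because $g(\bm{e}_{<i})$ depends only on the prefix, but the write-up should sum over prefix events (or note this equivalence) when making the factorization rigorous in the decision-tree model you describe.
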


We refer the reader to \cite{borodin2021prophet} for a complete proof
of Lemma \ref{lem:non_adaptive_optimum},
and instead show how it allows us to prove Theorem \ref{thm:new_LP_relaxation}.
In fact, we prove that \ref{LP:config_restatement} encodes the value
of the relaxed benchmark exactly, thus implying Theorem \ref{thm:new_LP_relaxation}
since $\OPT(G) \le \rOPT(G)$. 
\begin{theorem}[Theorem $3.3$ in \cite{borodin2021prophet}]
For any stochastic graph $G$, an optimum solution to \ref{LP:config_restatement} has value
equal to $\rOPT(G)$, the value of the relaxed benchmark on $G$.
\end{theorem}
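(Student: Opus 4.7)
The plan is to prove equality by showing both inequalities, exploiting Lemma \ref{lem:non_adaptive_optimum} which tells us the relaxed benchmark can be assumed non-adaptive. For the direction $\cLPOPT(G) \ge \rOPT(G)$, fix an optimal non-adaptive relaxed probing algorithm $\scr{A}$. By non-adaptivity, the order in which $\scr{A}$ probes edges of $\partial(v)$ is statistically independent of the states $(\sta(e))_{e \in \partial(v)}$. For each $v \in V$ and $\bm{e} \in \scr{C}_v$, I would define $x_v(\bm{e})$ to be the probability that $\scr{A}$ probes the edges of $\partial(v)$ in precisely the order $\bm{e}$ (where the empty string corresponds to $\scr{A}$ probing nothing at $v$). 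Constraint \eqref{eqn:relaxation_efficiency_distribution_restatement} is immediate. For constraint \eqref{eqn:relaxation_efficiency_matching_restatement}, observe that on the event $\scr{A}$ probes $\partial(v)$ via the order $\bm{e}$ containing $(u,v)$, the probability that $\scr{A}$ adds $(u,v)$ to $\scr{N}$ is exactly $p_{u,v} \cdot g(\bm{e}_{< (u,v)})$, because all edges appearing before $(u,v)$ in $\bm{e}$ must be inactive (else commitment forces $v$ to be matched earlier), and this uses independence of edge states. Summing over all $v$ and all $\bm{e}$ containing $(u,v)$ gives the expected number of edges of $\scr{N}$ incident to $u$, which is at most $1$ by definition of the relaxed probing problem. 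Finally, by the same reasoning the objective evaluates to $\sum_v \sum_{\bm{e}} \val(\bm{e}) x_v(\bm{e}) = \mb{E}[w(\scr{N})] = \rOPT(G)$.

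For the reverse direction $\cLPOPT(G) \le \rOPT(G)$, I would take any feasible solution $(x_v(\bm{e}))$ and construct a non-adaptive relaxed probing algorithm as follows: for each $v \in V$ independently, draw $\bm{e} \in \scr{C}_v$ with probability $x_v(\bm{e})$ and probe its edges in the specified order, adding the first active edge to $\scr{N}$ (if any). Since draws are independent of edge states, the algorithm is non-adaptive. By Lemma \ref{lem:fixed_vertex_probe} (or a direct calculation), the expected contribution from $v$ is $\sum_{\bm{e}} \val(\bm{e}) x_v(\bm{e})$, and the expected number of edges of $\scr{N}$ containing $u$ is exactly the left-hand side of \eqref{eqn:relaxation_efficiency_matching_restatement}, which is at most $1$. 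Each $v$ appears at most once in $\scr{N}$ by construction, so the algorithm is a valid relaxed probing algorithm with expected value equal to the LP objective.

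Combining both directions yields $\cLPOPT(G) = \rOPT(G)$, and since $\OPT(G) \le \rOPT(G)$ trivially (every committal offline probing algorithm is a relaxed probing algorithm), Theorem \ref{thm:new_LP_relaxation} follows.

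The main subtlety, and the step I expect to require the most care, is the first direction: one must carefully justify that the probability $\scr{A}$ commits edge $(u,v)$ to $\scr{N}$, conditioned on $\scr{A}$ probing $\partial(v)$ in the order $\bm{e}$, equals $p_{u,v} \cdot g(\bm{e}_{< (u,v)})$. This relies crucially on the non-adaptivity granted by Lemma \ref{lem:non_adaptive_optimum}, because for a general adaptive algorithm the sequence of edges actually probed at $v$ depends on the outcomes of probes elsewhere in $G$, and the ordering may be correlated with $(\sta(e))_{e \in \partial(v)}$ in subtle ways. Non-adaptivity decouples the choice of probing sequence from the Bernoullis driving the activity of the probed edges, making the factor $g(\bm{e}_{< (u,v)}) = \prod_{e' <_{\bm{e}} (u,v)} (1 - p_{e'})$ valid.
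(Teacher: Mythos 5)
Your proposal is correct and follows essentially the same route as the paper: one direction defines $x_v(\bm{e})$ as the probability that the (non-adaptive, by Lemma \ref{lem:non_adaptive_optimum}) optimum relaxed algorithm probes $\partial(v)$ in order $\bm{e}$, and the other direction converts an LP solution into a non-adaptive relaxed probing algorithm via the \textsc{VertexProbe} construction and Lemma \ref{lem:fixed_vertex_probe}. Your emphasis on non-adaptivity being what justifies the factor $p_{u,v}\cdot g(\bm{e}_{<(u,v)})$ is exactly the point the paper relies on as well.
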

\begin{proof}
Suppose we are presented a solution $(x_{v}(\bm{e}))_{v \in V, \bm{e} \in \scr{C}_v}$
to \ref{LP:config_restatement}. We can then define the following relaxed probing algorithm:
\begin{enumerate}
\item $\scr{N} \leftarrow \emptyset$.
\item For each $v \in V$, set $e \leftarrow \textsc{VertexProbe}(v, \partial(v), (x_{v}(\bm{e}))_{\bm{e} \in \scr{C}_v}).$
If $e \neq \emptyset$, then add $e=(u,v)$ to $\scr{N}$, provided $v$ is currently unmatched.
\item Return $\scr{N}$.
\end{enumerate}
Using Lemma \ref{lem:fixed_vertex_probe}, it is clear that
$\mb{E}[ w(\scr{N})] = \sum_{v \in V} \sum_{\bm{e} \in \scr{C}_v} \val(\bm{e}) \cdot x_{v}(\bm{e})$.
Moreover, each vertex $u \in U$ is matched by $\scr{N}$ at most once in expectation, as a consequence of
\eqref{eqn:relaxation_efficiency_matching_restatement}.

In order to complete the proof, it remains to show that if $\scr{A}$ is
an optimum relaxed probing algorithm,
then there exists a solution to \ref{LP:config_restatement} whose value is equal to
$\mb{E}[w(\scr{A}(G))]$ (where $\scr{A}(G)$ is the one-sided matching returned by $\scr{A}$).
In fact, by Lemma \ref{lem:non_adaptive_optimum}, we may assume that $\scr{A}$
is non-adaptive.
Observe then that for each $v \in V$ and $\bm{e} =(e_{1}, \ldots ,e_{k}) \in \scr{C}_v$ with $k:=|\bm{e}| \ge 1$ we can define
\[
	x_{v}(\bm{e}):= \mb{P}[\text{$\scr{A}$ probes the edges $(e_i)_{i=1}^{k}$ in order}], 
\]
where $x_{v}(\lambda)$ corresponds to the probability no edge adjacent to $v$ is probed.
Setting $\scr{N}= \scr{A}(G)$ for convenience, observe that 
if $w(\scr{N}(v))$ corresponds to the weight of the edge
assigned to $v$ (which is $0$ if no assignment is made),
then
\[
	\mb{E}[ w(\scr{N}(v))] = \sum_{\bm{e} \in \scr{C}_v} \val(\bm{e}) \cdot x_{v}(\bm{e}),
\]
as $\scr{A}$ is non-adaptive. 
Moreover, for each $u \in U$,
\[
\sum_{v \in V} \sum_{\substack{ \bm{e} \in \scr{C}_v: \\ (u,v) \in \bm{e}}} 
p_{u,v} \cdot g(\bm{e}_{< (u,v)}) \cdot x_{v}( \bm{e})  \leq 1
\]
by once again using the non-adaptivity of $\scr{A}$. The proof is therefore complete.
\end{proof}

\end{document}